\newtheorem{definition}{Definition}
\newtheorem{theorem}{Theorem}
\newtheorem{lemma}{Lemma}
\newtheorem{example}{Example}
\newtheorem{corollary}{Corollary}
\renewcommand{\epsilon}{\varepsilon}
\newcommand{\out}{\mathcal{S}}
\newcommand{\vett}[1]{\mathbf{#1} }
\renewcommand{\a}{\vett a}
\renewcommand{\c}{\vett c}
\renewcommand{\d}{\vett d}
\newcommand{\x}{\vett x}
\newcommand{\y}{\vett y}
\newcommand{\ai}{\a_{-i}}
\renewcommand{\b}{\vett b}
\renewcommand{\t}{\vett t}
\newcommand{\z}{\vett z}
\newcommand{\bi}{\b_{-i}}
\newcommand{\ti}{\t_{-i}}
\newcommand{\ci}{\c_{-i}}
\newcommand{\di}{\d_{-i}}
\newcommand{\xii}{\x_{-i}}
\newcommand{\yi}{\y_{-i}}
\newcommand{\M}{\mathcal{M}}
\newcommand{\T}{\mathcal{T}}
\newcommand{\vgraph}{$k$-step OSP-graph}
\newcommand{\shp}{{\textsc{SP}}}
\newcommand{\notE}{\overline{E}}
\begin{document}

\title{
An Algorithmic Theory of Simplicity in Mechanism Design
} 
\author{Diodato Ferraioli
\and
Carmine Ventre
}


\date{}

\maketitle

\begin{abstract}
A growing body of work in economics and computation focuses on the trade-off between implementability and simplicity in mechanism design. The goal is to develop a theory that not only allows to design an incentive structure easy to grasp for imperfectly rational agents, but also understand the ensuing limitations on the class of mechanisms that enforce it. In this context, the concept of OSP mechanisms has assumed a prominent role since they provably account for the absence of contingent reasoning skills, a specific cognitive limitation. For single-dimensional agents, it is known that OSP mechanisms need to use  certain greedy algorithms. 
In this work, we introduce a notion that interpolates between OSP and SOSP, a more stringent notion where agents 
only plan a subset of their own future moves.  
We provide an algorithmic characterization of this novel class of mechanisms for single-dimensional domains and binary allocation problems, that precisely measures the interplay between simplicity and implementability. We build on this to show how mechanisms based on reverse greedy algorithms (a.k.a., deferred acceptance auctions) are 
algorithmically more robust to imperfectly rationality than those adopting greedy algorithms.
\end{abstract}

\section{Introduction}

A set of agents need to determine an outcome that will affect each of them. In these cases, a mechanism is designed to interact with the agents and compute an outcome based on the decisions made during the interaction. In its more general form, the mechanism can be modelled as an extensive-form game. To guarantee properties on the quality of the solution computed (e.g., optimality for an objective function of interest) mechanisms are required to provide incentives for the agents to behave in a predictable and desirable way. For example, in dominant-strategy incentive-compatible mechanisms, it is pointwise optimal for the agents to behave correctly and truthfully report their private information to the mechanism. This property can however be too weak for certain agents, as observed experimentally \cite{ausubel2004,kagel87}. In particular, different extensive-form implementations of the mechanism can lead to different degrees of strategic confusion. Consider, for example, software agents that take decisions during the execution of the mechanism. These agents could take actions that are irrational from the economic point of view when they have been ``badly'' programmed, either because the programmer misunderstood the incentive structure in place or due to computational barriers preventing from comparing payoffs for each strategy adopted by the other agents \cite{book}. A less complex decision process would be guaranteed by associating one outcome/payoff to each possible action taken by the agent, since she would simply need to rank her own actions independently of the behavior of the other agents. This is the idea behind obviously strategyproof (OSP) mechanisms \cite{liosp}. OSP is a stronger notion of incentive-compatibility that takes a conservative view and requires honesty to be an obviously dominant strategy: the worst payoff that can be achieved by choosing it is not worse than the best possible payoff that can be obtained with a different strategy, where worst and best are chosen over the possible strategies of the remaining agents.


OSP is shown to capture the incentives of a specific form of imperfect rationality --- absence of contingent reasoning skills. From the computational point of view, OSP is known to be intimately linked with greedy algorithms for single-dimensional agents: OSP is equivalent to certain well-defined combinations of greedy algorithms that are suitably monotone in the agent's private information \cite{wine21,ec2023}. This algorithmic lens allows to focus on the quality of the solutions output by OSP mechanisms, measured in terms of the approximation guarantee to a given objective function, and conclude that these monotone greedy algorithms can perform well for some binary allocation problems \cite{wine21} but, in general, less well for richer solution spaces \cite{ec2023}. 


Does OSP encapsulate strategic simplicity in mechanism design? Consider agents with 
limited foresight, who can only devise a plan for their next $k$ moves (their \emph{planning horizon}) but cannot anticipate what they will do after that. It may not be simple enough for agents to go beyond that horizon in their reasoning; in chess, for example, if white can always win, any winning strategy is obviously dominant but requires to look at many moves in the future meaning that the strategic choices in chess are far from obvious \cite{pycia2021theory}. Similarly, software agents may need more data to be ``retrained'' and get a more granular picture of the scenarios that go beyond their $k$-th future move and plan accordingly. What is simplicity in mechanism design for these agents? OSP would result simple for agents with infinite planning horizons (e.g., agents who need not retrain and have all the data available from the beginning). A solution concept called Strong OSP (SOSP) has been recently introduced in the literature to define simplicity for agents who cannot forecast any of their own future moves \cite{pycia2021theory} (e.g., those who need to retrain after each move). Is there a notion that interpolates between these two extremes? Do we, in case, get a large class of implementable mechanisms the longer the planning horizons of the agents are?

\subsection{Our Contribution}
We introduce the notion of $k$-step OSP mechanisms to capture the incentive compatibility of agents who decide their next move by only planning for the subsequent $k$ steps and leave the remaining moves undecided. Thus, OSP corresponds to the case in which $k=\infty$ since, when determining their next move, agents reason about all their subsequent actions in the extensive-form mechanism. SOSP, on the contrary, corresponds to the case in which $k=0$ in that no extra future action is planned for in addition to the one under consideration. We fully characterize $k$-step OSP mechanisms with transfers for single-parameter agents and binary allocation problems:
\smallskip\noindent
\begin{center}
\begin{minipage}{0.85\textwidth}
	\textbf{Main Theorem 1 (informal).} \emph{A mechanism is $k$-step OSP for single-parameter agents and binary allocation problems if and only if each agent receives at most $k+2$ ``OSP queries'', the $(k+2)$-th (if any) being ``payoff-neutral up to one type''.}
\end{minipage}
\end{center}
\smallskip
In the informal statement above, by ``OSP query'' we mean that the extensive-form mechanism must satisfy the ensuing OSP constraints, whereas by ``payoff-neutral up to one type query'' we mean that the payoff for the queried agent $i$ is essentially determined for all her moves but potentially one; this move corresponds to a single type of $i$ for which her outcome can still be undecided at that point of the game. Whilst it is expected that the queries must be OSP, it is surprising that there is clean connection between the number of possible queries that each agent can be asked and her planning horizon, with up to $k+1$ ``unrestricted'' queries to each agent $i$ and one closing extra query that not only can affect the outcome that the other players will receive but also $i$'s outcome (albeit very limitedly)\footnote{Note indeed that this connection is neither an ``innate'' nor a straightforward property of $k$-step OSP mechanisms, since for larger outcome spaces there may be $k$-step OSP mechanisms with more than $k+2$ queries (cf. Appendix~\ref{apx:num_query}).}. This result thus shows that our notion fully captures the trade-off between simplicity and implementability. The more sophisticated the decision making of the agents is, the more mechanisms are incentive compatible. In particular, our notion gives rise to a fully nested class of mechanisms since every $k$-step OSP mechanism is also $k'$-step OSP for each $k' > k$.

To prove this characterization, we develop the theoretical foundations to study $k$-step OSP mechanisms. We prove that it is without loss of generality to focus on a certain family of implementations, called round-table mechanisms, and we find a structural property of $k$-OSP constraints that limits the class of queries that the mechanism can make\footnote{These results do not require the outcome space to be binary.}. We subsequently give a version of the Taxation Principle for $k$-OSP mechanisms by proving that from a certain point of the extensive-form game, outcomes (and payments) need to be constant for the agent being queried. Few additional structural properties are then proved to bound the number of queries to $k+2$ and determine what is the power of the last query.

This result complements the results by Pycia and Troyan \cite{pycia2021theory} for binary allocation problems from two perspectives. Firstly, our characterization says that each agent has two payoff relevant queries in an SOSP mechanism. Under an assumption of rich domains, Pycia and Troyan \cite{pycia2021theory} prove instead that agents only take one payoff-relevant move in SOSP mechanisms. Thus, our result highlights that for general single-dimensional type domains, there can be a second query which affects the outcome of the queried agent in a very specific and limited way. Secondly, Pycia and Troyan \cite{pycia2021theory} consider a notion of $k$-step dominance (see Appendix~\ref{apx:pycia}), which is very related to ours, in that agents think about $k$ future self moves when taking a decision. It turns out that their definition is technically different, with consequences on the boundaries between simplicity and implementability. In the work of Pycia and Troyan \cite{pycia2021theory}, although agents take into account their subsequent $k$ moves they do not commit to them but rather follow the principle of ``cross[ing] that bridge when you come to it'' (i.e., make choices as they arise) \cite{savage54}. It is thus assumed that agents in their decision making process optimistically complete the path of play beyond their planning horizon. To some extent, our notion formalizes the ``look before you leap'' (i.e., create a complete contingent plan for the possible $k$ future decisions one may have) model of decision making  \cite{savage54} and take a pessimistic perspective by considering the worst possible way in which the agent can complete her path of play beyond her planning horizon. Consequently, our approach is more restrictive but more robust to decisions made outside the agents' planning horizons. (Please see Example \ref{example:auctions} for an illustration of the differences between the two notions.) Importantly, within their modelling of decision making, Pycia and Troyan \cite{pycia2021theory} prove that one future move is enough for $1$-step simple mechanisms, for which honesty is one-step dominant, to collapse to OSP mechanisms. In fact, agents can at each step reconsider whether the plan that was 1-step dominant at the previous decision point is still the right way to go; intuitively, this freedom allows the agents to reconstruct an obviously dominant strategy step by step. Our notion instead draws a more fine grained boundary between the mechanisms that can be implemented and the behavioral model of the agents interacting with them.

\begin{example}[Ascending price auctions]\label{example:auctions}
	Consider an ascending auction for a single good, where at each non-terminal information set, an agent is called to play and has two actions, Stay In or Drop Out. The payoff of an agent $i$ is equal to the agent’s valuation $v_i$ minus her payment if the agent is allocated the good and $0$ otherwise. The price for the good weakly increases along each path of play. The auction ends when there is only one agent who has not dropped out; she wins the good and pays the price associated to the last time she moved. We remark that the ascending auction is OSP because the strategy of staying in as long as the current price is below the agent’s valuation is obviously dominant, as shown in \cite{liosp}.
	
	A one-step dominant strategy at information set $I^*$ is the following. If $p(I^*)$, the price at $I^*$, is not higher than $v_i$, then Stay In and Drop Out at the subsequent information set $I$ along a path of play. If the price at $I^*$ is higher than $v_i$, then Drop Out at $I^*$ and the subsequent information set $I$. This is one-step dominant because in both cases, the minimum from following the strategy is $0$ (dropping out eventually) which is not worse than other strategies at $I^*$ which can only guarantee a non-positive payoff. Importantly, in her decision process, agent $i$ is ignoring future paths of play wherein the agent will not drop out. However,  player $i$ must not necessarily drop out at $I$ but can still stay in if the price at $I$ is not higher than $v_i$. In other words, agent $i$ can eventually have inconsistent plans (see also Remark 1 in \cite{pycia2021theory}). Moreover, there are paths where the agent could, beyond her planning horizon, stay in even for prices higher than $v_i$.
	
	A $1$-step obviously dominant strategy instead mimics the aforementioned obviously dominant strategy but limits it to a planning horizon of two moves. So the agent would look at both $p(I^*)$ and $p(I)$, prices at $I^*$ and $I$, and Stay In (Drop Out, respectively) at $I' \in \{I^*, I\}$ if $p(I') \leq v_i$ ($p(I') > v_i$, respectively). Incidentally, and differently from above, our notion leads to consistent strategies since the decision of agent $i$ at $I$ will follow the same recommendation dictated by the strategy at $I^*$. This strategy is $1$-step obviously dominant only when the minimum payoff that can be reached by completing it in any possible way would be better than a deviation. So, for example, if the type domain of agent $i$ were \$1, \$2, \$3, \$4 and \$5, and her type were \$4 then, by following the strategy, agent $i$ would have stayed in for prices \$2 and \$3 (for her first two moves). To evaluate the $1$-step obvious dominance of the strategy, agent $i$ would then take the worst possible path of play that her future self could take, e.g., the decision to stay in for a price of \$5 leading to a payoff of $-1$. The best possible payoff by deviating is $0$ (e.g., dropping out immediately). 
\end{example}

To further explore the boundary between implementabilty and simplicity, we study the extent to which the approximation guarantee of OSP deteriorates when future self moves are not fully accounted for in decision making. We build on our first theorem to characterize what social choice functions can be implemented by $k$-step OSP mechanisms. Towards this end, we define a version of cycle monotonicity for $k$-step OSP; as in the cases of SP and OSP (amongst others) cycle monotonicity is a useful tool in that it allows to separate the allocation function (whose approximation guarantee we want to bound) and the payment function. The construction of this useful toolkit requires some further work to conveniently rewrite $k$-step OSP constraints. We are then able to algorithmically characterize $k$-step OSP mechanisms, by relating the cycles of the OSP graph \cite{MOR22} with those of the $k$-step OSP graph we define.
\smallskip\noindent
\begin{center}
	\begin{minipage}{0.85\textwidth}
		\textbf{Main Theorem 2 (informal).} \emph{There is a $k$-step OSP mechanism implementing social choice function $f$ if and only if $f$ is two-way greedy with $k$-limitable priority functions.}
	\end{minipage}
\end{center}
\smallskip
As hinted above, the algorithmic nature of OSP for binary allocation problems can be explained in terms of two-way greedy algorithms \cite{wine21}. These are algorithms that build the eventual solution by either greedily including agents with low cost (high valuation, respectively) --- termed forward greedy algorithm --- or by greedily excluding agents that have a high cost (low valuation, respectively) --- termed reverse greedy algorithm --- for minimization (maximization, respectively) problems.\footnote{In general, different agents can face different directions; in specific circumstances, it is possible to move from one direction to the other even for single agents, cf. Section \ref{sec:algo}.} These algorithms adopt adaptive priority functions (that is, priority functions that depend on the past choices made by the algorithm) to define the greedy order in which agents have to be processed. The restriction from OSP to $k$-step OSP limits the class of priority functions that can adopted -- it must be the case that each agent $i$ is at the top of the priority list for no more than $k+2$ times excluding the instances in which $i$ is consecutively at the top. These are $k$-limitable priority functions. Intuitively, this means that the greedy algorithm must be ready to decide whether an agent is included or excluded from the solution in no more than $k+2$ steps, thus clearly demarcating the algorithmic limitations that $k$-step OSP imposes. 

We conclude this work by exploring the power of this class of algorithms. We focus here on $p$-systems,
a class of problems for which it is known that greedy algorithms can compute a solution with approximation at most $p$.
To what extent can this result be replicated for $k$-step OSP mechanisms? It turns out that the size of the type domain of the agents and the format of the greedy algorithm play a crucial role. By letting $d$ denote half the size of the agents' type domain, we obtain the following tight result (assuming here for simplicity of exposition that the type domain has even size).
\smallskip\noindent
\begin{center}
	\begin{minipage}{0.85\textwidth}
		\textbf{Main Theorem 3 (informal).} \emph{There is a $p$-approximate $k$-limitable reverse greedy algorithm for $p$-systems whenever $k \geq d-2$. There is an instance of a $p$-system, even for $p = 1$ (i.e., a weighted matroid) for which no $k$-limitable two-way (forward, respectively) greedy algorithm with bounded approximation for $k< d-2$ ($k<2d-3$, respectively).}
	\end{minipage}
\end{center}
\smallskip
We can then conclude that, in the worst case, there is no gradual degradation of the performances of $k$-step OSP mechanisms; as the value of $k$ decreases, there is a stark dichotomy. Essentially, the two-way greedy algorithm needs to traverse the domain of each agent to compute good solutions when the types in the domain are sufficiently far apart, meaning that $k$ must be large enough. If it is conceivable to imagine that were it possible to cluster the domain of the agents around $k$ centres, then better results would be possible.

Interestingly, reverse greedy needs half the number of priority functions of forward greedy. It was already known that forward greedy is less robust than reverse greedy to economic desiderata (such as, individual rationality) of OSP mechanisms, due to the characterization of OSP payments by Ferraioli and Ventre \cite{FerraioliV23}. Our result further shows that reverse greedy is not only economically but also algorithmically more robust than forward greedy to imperfect rationality.

\section{Related Work}
Since its inception \cite{liosp}, OSP has attracted the interest of both computer scientists and economists. Many papers \cite{ashlagigonczarowski,troyan2019obviously,thomas2021classification,mandal2022obviously} study OSP for stable matchings and provide an impossibility result in general as well as a suitable mechanism under some assumptions. Bade and Gonczarowski \cite{badegonczarowski} further study this concept for a number of settings, including single-peaked domains. Single-peaked preferences have been object of further work on OSP mechanisms \cite{AMN19,AMN20}. Pycia and Troyan \cite{pycia2} study OSP mechanisms in domains where monetary transfers are not allowed and provide a useful characterization in this setting.

For the setting where money is permitted, characterizations of OSP mechanisms have been provided by Ferraioli et al. \cite{wine21} for binary allocation problems and by Ferraioli and Ventre \cite{ec2023} for general single-dimensional problems. These characterizations are based on the OSP cycle-monotonicity technique introduced by Ferraioli et al. \cite{ferraioli2018approximation} along with some bounds on the approximation for machine scheduling and set system problems (see also \cite{esa19,wine19}).

OSP has been studied also under different restrictions on the agents' behavior during the execution of the mechanism. E.g., \emph{monitoring} has been shown to help the design of OSP mechanisms with a good approximation ratio in various mechanism design domains \cite{ferraioli2017obvious}. Kyropoulou and Ventre \cite{kyropoulou2019obviously} builds forth on this by studying machine scheduling in the absence of monetary payments and showing the power of OSP in this setting. Under the assumption that non-truthful behaviour can be detected and penalised with a certain probability, it is proved that every social choice function can be implemented by an OSP mechanism with either very large fines for lying or a large number of ``verified'' agents.

OSP mechanisms for single-minded combinatorial auctions are studied by de Keijzer et al. \cite{bartmaria}. Mackenzie \cite{mackenzie} presents a revelation principle that states that every social choice function implementable through an OSP mechanism can be implemented using a certain structured OSP protocol where agents take turns making announcements about their valuations. We prove a similar result here. Finally, Ferraioli and Ventre \cite{FerraioliV23} provide explicit formulas for the payment functions of OSP mechanisms, giving a constructive proof to the existential existence of payments guaranteed by cycle monotonicity. Clearly, their results can be imported to our notion of $k$-step OSP and confirm, that from the economic point of view, reverse greedy is more robust than forward greedy.

Several variants of OSP have been defined.
$k$-OSP \cite{ferraioliventreboundedJ} is a notion that interpolates between OSP and the classical strategyproofness (SP) for perfect rationality. The smoothing parameter $k$ measures the number of other agents, each can think about contingently when taking decisions. Thus, OSP corresponds to $k=0$ whereas $k$ is equal to the number of agents in the mechanism for SP. This work proves that $k$-OSP mechanisms are not much better than OSP mechanisms in approximating the makespan unless $k$ is very large. In a sense, this suggests that the limits imposed by imperfect rationality do not quickly evaporate and, as long as approximation is concerned, one can focus on OSP without loss of generality.

Another related line of work concerns the notion of non-obviously manipulable (NOM) mechanisms \cite{Troyan2020} where the absence of contingent reasoning skills is advocated to limit the misbehavior of agents as opposed to limiting strategyproofness (as in OSP). Recent work has provided a characterization single-dimensional domains \cite{AdKV2023a,AdKV2023b} and a general recipe for their design \cite{AdKV2024}.

\section{Preliminaries}
For the design of a mechanism, we need to define a set $N$ of $n$ \emph{selfish agents} and a set of feasible \emph{outcomes} $\out$.
Each agent $i$ has a \emph{type} $t_i \in D_i$, where $D_i$ is the \emph{domain} of $i$.
The type $t_i$ is assumed to be \emph{private knowledge} of agent $i$.
We let $t_i(X) \in \mathbb{R}$ denote the \emph{cost}
of agent $i$ with type $t_i$ for the outcome $X \in \out$.
When costs are negative, it means that the agent has a {profit} from the solution, called \emph{valuation}.

We would like to run a \emph{mechanism} in order to select an outcome and assign opportune payments.
I.e., the mechanism implements a pair $(f,p)$,
where $f$ (termed \emph{social choice function} or, simply, algorithm) maps the actions taken by the agents to a feasible solution in $\out$, and $p$ maps the actions taken by the agents to \emph{payments}. Note that payments need not be positive.
Each selfish agent $i$ is equipped with a \emph{quasi-linear utility function} $u_i \colon D_i \times \out \rightarrow \mathbb{R}$: for $t_i \in D_i$ and for an outcome $X \in \out$ returned by a mechanism $\M$, $u_i(t_i, X)$ is the utility that agent $i$ has for the implementation of outcome $X$ when her type is $t_i$, i.e., $u_i(t_i, X) = p_i - t_i(X)$.
In this work we will focus on \emph{single-parameter} settings, that is, the case in which
the private information of each bidder $i$ is a single real number $t_i$ and $t_i(X)$ can be expressed as $t_i w_i(X)$ for some publicly known function $w_i$.

In order to implement $(f, p)$, we design a game $\Gamma$ for the agents to play.
Specifically, $\Gamma$ is an imperfect-information, extensive-form game with perfect recall, defined in the standard way: $\mathcal{H}$ is a finite collection of partially ordered \emph{histories} (i.e., sequences of moves). At every non-terminal history $h \in \mathcal{H}$, one agent $i \in N$ is called to play and has a finite set of \emph{actions} $A(h)$ from which to choose. At some history $h$ it may be also possible that the mechanism $M$ plays by taking a random choice: specifically, we denote with $\omega(h)$ the realization of the mechanism's random choice at history $h$, and with $\omega = (\omega(h))_{h \in \mathcal{H} \colon \text{$M$ plays at $h$}}$ the mechanism's random choices along the entire game. Each terminal history is associated with an outcome $X \in \out$, and agents receive utility $u_i(t_i,X)$. We use the notation $h' \subseteq h$ to denote that $h'$ is a subhistory of $h$ (equivalently, $h$ is a continuation history of $h'$), and write $h' \subset h$ when $h' \subseteq h$ but $h\neq h'$. 
When useful, we sometimes write $h=(h', a)$ to denote the history $h$ that is reached by starting at history $h'$ and following the action $a \in A(h)$.

An \emph{information set} $I$ of agent $i$ is a set of histories such that for any $h, h' \in I$ and any subhistories  $\tilde{h} \subseteq h$ and $\tilde{h}' \subseteq h'$ at which $i$ moves, at least one of the following conditions holds: (i) there is a history $\tilde{h}^* \subseteq \tilde{h}$ such that $\tilde{h}^*$  and $\tilde{h}'$ are in the same information set, $A(\tilde{h}^*) = A(\tilde{h}')$, and $i$ makes the same move at $\tilde{h}^*$ and $\tilde{h}'$; (ii) there is a history $\tilde{h}^* \subseteq \tilde{h'}$ such that $\tilde{h}^*$  and $\tilde{h}$ are in the same information set, $A(\tilde{h}^*) = A(\tilde{h})$, and $i$ makes the same move at $\tilde{h}^*$ and $\tilde{h}$. We denote by $I(h)$ the information set containing history $h$. Roughly speaking, an information set collects all the histories that an agent is unable to distinguish.

A \emph{strategy} for a player $i$ in game $\Gamma$ is a function $S_i$ that specifies an action at each one of her information sets. When we want to refer to the strategies of different types $t_i$ of agent $i$, we write $S_i(t_i)$ for the strategy followed by agent $i$ of type $t_i$; in particular, $S_i(t_i)(I)$ denotes the action chosen by agent $i$ with type $t_i$ at information set $I$. We use ${\bf S}({\bf t}) = (S_i(t_i))_{i \in N}$ to denote the strategy profile for all of the agents when the type profile is ${\bf t}=(t_i)_{i\in N}$. An \emph{extensive-form mechanism} is an extensive-form game together with a profile of strategies ${\bf S}$.

Given an agent $i$, an information set $I$ is an \emph{earliest point of departure} between strategy $S_i$ and $S'_i$ in game $\Gamma$ if $I$ is on the path of play under both $S_i$ and $S'_i$ and both strategies choose the same action at all earlier information sets, but choose a different action at $I$. I.e., $I$ is the earliest information set at which these two strategies diverge. Note that for two strategies, there will in general be multiple earliest points of departure.

For an agent $i$ with preference $t_i$, strategy $S_i$ \emph{$k$-step obviously dominates} strategy $S'_i$ in game $\Gamma$ if, starting at any earliest point of departure $I$ between $S_i$ and $S'_i$, the outcome that maximize the utility of $i$ among the ones reachable following $S'_i$ at $I$ is weakly worse than the outcome that achieves the minimum utility among the ones reachable by following $S_i$ at $I$, where worst (best, resp.) cases are determined by considering any future play by other agents (including random choices of the mechanism) and any future play of agent $i$ that coincides with strategy $S_i$ ($S'_i$, resp.) in all information sets $I'$ following $I$ such that $i$ plays at most $k$ times between $I$ (excluded) and $I'$ (included).
In particular we simply say that $S_i$ \emph{obviously dominates} $S'_i$ when $k=\infty$, and that it \emph{strongly obviously dominates} $S'_i$ when $k = 0$.
If a strategy $S_i$ ($k$-step / strongly) obviously dominates all other $S'_i$, then we say that $S_i$ is \emph{($k$-step / strongly) obviously dominant}. If a mechanism implements $f$ by guaranteeing that each player has a ($k$-step / strongly) obviously dominant strategy, we say that it is \emph{($k$-step / strongly) obviously strategy-proof} (($k$-step / S) OSP).

\subsection{Round-Table Mechanisms}
Mackenzie \cite{mackenzie} proved that for OSP mechanisms, it is without loss of generality to consider mechanisms $\M$ with a specific format, named \emph{round table mechanisms}, defined as follows.
$\M$ is defined by a triple $(f,p,\T)$ where, as above, the pair $(f,p)$ determines the outcome of the mechanism, and $\T$ is a tree, called \emph{implementation tree}, such that:
\begin{itemize}[leftmargin=0.45cm, noitemsep, topsep=0pt]
	\item Every leaf $\ell$ of the tree is labeled with a possible outcome of the mechanism $(X(\ell), p(\ell))$, where $X(\ell) \in \out$ and $p(\ell) \in \mathbb{R}^n$;
	\item Each internal node $u$ in the implementation tree $\T$ defines the following:
    \begin{itemize}[leftmargin=0.6cm, noitemsep,topsep=0pt]
        \item An agent $i=i(u)$ to whom the mechanism makes some query. Each possible answer to this query leads to a different child of $u$.
        \item A subdomain $D{(u)}=(D_i{(u)}, D_{-i}{(u)})$ containing all types profiles that are \emph{compatible} with $u$ (or available at $u$), i.e., with all the answers to the queries from the root down to node $u$.
        Specifically, the query at node $u$ defines a partition of the current domain of $i$, $D_i{(u)}$ into $k\geq 2$ subdomains, one for each of the $k$ children of node $u$. Thus, the domain of each of these children will have, as for the domain of $i$, the subdomain of $D_i{(u)}$ corresponding to a different answer of $i$ at $u$, and an unchanged domain for the other agents. We also say that action at $u$ \emph{signals} the associated subdomain.
\end{itemize}
\end{itemize}
Observe that, according to the definition above, for every type profile $\b = (b_i \in D_i)_{i \in N}$ there is only one leaf $\ell = \ell(\b)$ such that $\b$ belongs to $D^{(\ell)}$.
Similarly, to each leaf $\ell$ there is at least a profile $\b$ that belongs to $D^{(\ell)}$.
This allows as to simplify the notation: indeed, we can define $\M(\b) = (X(\ell), p(\ell))$ for $\ell = \ell(\b)$. Similarly, we can simply write $f(\b)=(f_1(\b),\ldots,f_n(\b)) = X(\ell)$ and $p(\b)=(p_1(\b),\ldots,p_n(\b)) \in \mathbb{R}^n = p(\ell)$, and $u_i(t_i, \M(b_i, \bi)) = p_i(b_i, \bi) - t_i(f(b_i, \bi))$. For the single-parameter setting, considered in this work, we can further simplify the notation, by setting $t_i(X) = t_i f_i(\b)$ when we want to express the cost of a single-parameter agent $i$ of type $t_i$ for the output of social choice function $f$ when the actions taken by the agent lead to the leaf $\ell$ associated with bid vector $\b$.

Two type profiles $\b$, $\b'$ are said to \emph{diverge} (or to be separated) at a node $u$ of $\T$ if this node has two children $v, v'$ such that $\b \in D(v)$, whereas $\b' \in D(v')$. For every such node $u$, 
we say that 
$i(u)$ is the \emph{divergent agent} at $u$. We sometimes abuse notation and we say that two types $t_i$ and $b_i$ of the agent $i = i(u)$ diverge (are separated) at $u$.

A round-table mechanism $\M$ is OSP if for every agent $i$ with real type $t_i$,
for every vertex $u$ such that $i = i(u)$, for every $\bi, \bi'$ (with $\bi'$ not necessarily different from $\bi$),
and for every $b_i \in D_i$, with $b_i \neq t_i$,
such that $(t_i, \bi)$ and $(b_i, \bi')$ are compatible with $u$, but diverge at $u$,
it holds that $u_i(t_i, \M(t_i, \bi)) \geq u_i(t_i,\M(b_i, \bi')).$
Roughly speaking, OSP requires that, at each time step
agent $i$ is asked to take a decision that depends on her type, the worst utility that
she can get if she henceforth behaves according to her true type
is at least the best utility she can get by behaving differently.

We will extend this definition as follows. Let us first start with some useful definition.
For each agent $i$, and for each vertex $u$ in the implementation tree $\mathcal{T}$ such that $i = i(u)$, and every $u'$ along the path from $u$ to a leaf, we say that $u'$ belongs to the \emph{$k$-step limit} $\mathcal{L}_k(u)$ of $u$ if $u'$ is the $k$-th node along this path ($u$ excluded) in which $i$ is queried or, if $i$ is queried in less than $k$ nodes along this path, then $u'$ is the leaf. Moreover, for each agent $i$, and for each vertex $u$ in the implementation tree $\mathcal{T}$ such that $i = i(u)$, the \emph{$k$-step neighborhood} $\mathcal{N}_k(u)$ of $u$ consists of each node of $\T$ that appears in the path between $u$ and some $u' \in \mathcal{L}_k(u)$. Then, for each agent $i$ and for each vertex $u$ in the implementation tree $\mathcal{T}$ such that $i = i(u)$, two (not necessarily different) profiles $\a$ and $\b$ are said to be \emph{$k$-step unseparated} at $u$ if $\a$ and $\b$ do not diverge either at $u$ or in every node in the $k$-step neighborhood of $u$. Finally, for each agent $i$ and profile $\a$, and for each vertex $u$ in the implementation tree $\mathcal{T}$ such that $i = i(u)$ and $\a$ is available at $u$, the \emph{$k$-step equivalence class} of $\a$ at $u$ is $\Gamma^k_{u}(\a) = \{\b \colon \b \text{ is $k$-step unseparated from $\a$ at $u$}\}$.

Then, a round-table mechanism $\M$ is \emph{$k$-step OSP} if for every agent $i$ with real type $t_i$,
for every vertex $u$ such that $i = i(u)$, for every $\bi, \bi', \bi''$ (with $\bi', \bi''$ not necessarily different from $\bi$ and different from each other),
and for every $b_i, b'_i \in D_i$
such that $(t_i, \bi), (b_i, \bi') \in \Gamma^k_u(t_i, \bi)$,
while $(t_i, \bi)$ and $(b'_i, \bi'')$ diverge at $u$,
it holds that $u_i(t_i, \M(b_i, \bi')) \geq u_i(t_i,\M(b'_i, \bi'')).$
Roughly speaking, $k$-step OSP requires that, at each time step
agent $i$ is asked to take a decision that depends on her type, the worst utility that
she can get if she now behaves according to her true type
is at least the best utility she can get by behaving differently,
\emph{even if she later (i.e., after the next $k$ queries received by the mechanism) changes her mind and plays as if her type was an untruthful but still available one}.
As above, it is immediate to check that $k$-step OSPness reduces to OSPness when $k = \infty$, and has been dubbed strong OSPness for $k = 0$ by Pycia and Troyan \cite{pycia2021theory}.

Mackenzie \cite{mackenzie} proved that if there is an OSP mechanism implementing a social choice function $f$, then this can be also implemented by an OSP round-table mechanism. We next show that this claim holds also for $k$-step OSP.
\begin{theorem}
\label{thm:round-table}
 There is a $k$-step OSP mechanism implementing a social choice $f$ if and only if $f$ can be also implemented by a $k$-step OSP round-table mechanism.
\end{theorem}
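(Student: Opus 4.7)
The ``if'' direction is immediate, since a round-table mechanism is itself an extensive-form mechanism and the $k$-step obviously dominant strategies of the former directly witness $k$-step OSPness. For the nontrivial ``only if'' direction, the plan is to adapt Mackenzie's revelation principle~\cite{mackenzie} to the $k$-step setting. I start from a $k$-step OSP extensive-form mechanism $\Gamma$ equipped with a $k$-step obviously dominant profile $\mathbf{S}=(S_i)_{i\in N}$ implementing $f$, and construct, by induction on depth, a round-table implementation $(f,p,\T)$ whose queries mimic the action partitions induced by $\mathbf{S}$.

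The construction proceeds as follows. For each information set $I$ of agent $i=i(I)$ in $\Gamma$, I partition the types currently compatible at the node $u$ of $\T$ associated to $I$ into equivalence classes by declaring $t_i \sim t_i'$ whenever $S_i(t_i)(I) = S_i(t_i')(I)$, and let $u$ have one child per class, signalling that class exactly. The continuation at each child is obtained inductively from the subtree of $\Gamma$ reached by playing the common action signalled by that class; leaves are labelled with the outcome and payment produced by $\Gamma$ along the corresponding honest path of play. By perfect recall of $\Gamma$ the construction is well-defined, and if each agent truthfully reports the class containing her type at every query, the outcome reached in $\T$ coincides with $f(\t)$, so $f$ is implemented.

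The crux is to verify that $(f,p,\T)$ is $k$-step OSP. Fix an agent $i$, a query node $u$ of $\T$ corresponding to some information set $I$ of $\Gamma$, and types $b_i, b_i' \in D_i$ with $(b_i',\bi'')$ diverging from $(t_i,\bi)$ at $u$ and $(t_i,\bi), (b_i,\bi') \in \Gamma^k_u(t_i,\bi)$. The key observation is that queries to $i$ in $\T$ below $u$ are in bijection with information sets of $i$ in $\Gamma$ reached along the corresponding paths; hence the $k$-step neighbourhood $\mathcal{N}_k(u)$ in $\T$ matches exactly the set of histories of $\Gamma$ ``within $k$ future moves of $i$'' that underlies the extensive-form definition of $k$-step obvious dominance. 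Any such $(b_i,\bi')$ corresponds to a truthful continuation of $S_i(t_i)$ at $I$ whose completion beyond $\mathcal{L}_k(u)$ is unconstrained, while $(b_i',\bi'')$ corresponds to a deviation that follows $S_i(t_i)$ up to $I$ and then plays the action $S_i(b_i')(I)$. Consequently the worst-case utility on the truthful side and the best-case utility on the deviating side in $\T$ coincide with their analogues in $\Gamma$, and the $k$-step OSP inequality for $\Gamma$ transfers to $\T$.

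I expect the main obstacle to be precisely this bookkeeping: carefully matching $\mathcal{N}_k(u)$ in $\T$ against the ``at most $k$ plays by $i$'' quantifier in the extensive-form definition, especially at nodes $u' \in \mathcal{L}_k(u)$ that are leaves because $i$ is queried fewer than $k$ times in the residual subtree, and showing that every completion of play by the other agents and by $i$'s own future self in the round-table is realised by some compatible profile in $\Gamma$ and vice versa. Once this correspondence is in place, all the remaining checks --- that $(f,p,\T)$ is a bona fide round-table mechanism, that truthful reporting at every query is $k$-step obviously dominant, and that its honest-path outcome is $f(\t)$ --- follow directly from the construction.
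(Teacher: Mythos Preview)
Your key observation --- that queries to $i$ in $\T$ are in bijection with $i$'s moves in $\Gamma$, so the $k$-step neighbourhood $\mathcal{N}_k(u)$ in $\T$ matches the ``at most $k$ future plays of $i$'' quantifier in the extensive-form definition --- is exactly what carries Mackenzie's reduction from OSP to $k$-step OSP. But the construction you describe presupposes structure on $\Gamma$ that the theorem does not grant. First, $\Gamma$ may contain chance moves of the mechanism, and you never say what becomes of them; a round-table mechanism as defined here is deterministic. Second, and more seriously, your inductive step takes ``the subtree of $\Gamma$ reached by playing the common action'' at an information set $I$, but $I$ will in general contain several histories, and playing a single action at $I$ leads to several distinct continuation subgames --- which agent moves next, and at which information set, may depend on the particular history within $I$. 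Perfect recall only guarantees that $i$ remembers her own past; it does not collapse these continuations, so your recursion is not well-defined as stated.

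The paper proceeds differently: rather than building $\T$ directly, it first applies Mackenzie's transformations to $\Gamma$ --- $\omega$-derandomisation, removal of single-action histories, untangling of absent-minded nodes, and crucially \emph{illumination}, which splits every information set into its constituent histories --- and argues that each operation preserves $k$-step OSPness. Illumination is precisely what closes your second gap, and its preservation rests on the worst/best structure of the constraint: giving $i$ finer information about the other agents' past play can only raise her worst truthful payoff and lower her best deviating payoff, so the $k$-step OSP inequality survives. Only once $\Gamma$ is deterministic with singleton information sets does your direct ``partition types by the action $S_i(t_i)$ prescribes'' construction become well-defined, and then your bijection argument correctly transports the $k$-step count to the resulting round-table tree.
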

The proof is essentially the same provided by Mackenzie \cite{mackenzie} for OSP, and hence we omit it here a formal argument, and only provide the main ideas.
\begin{proof}[Proof Sketch]
 Consider a general mechanism $\M$. If $\M$ is randomized, then fix a realization $\omega$ of the random choices of the mechanism, and consider only the set of histories of $\M$ that are compatible with this realization (this is what is denoted as \emph{$\omega$-derandomization} operation \cite{mackenzie}). Clearly, if the mechanism $\M$ is $k$-step OSP with respect the full set of histories, it will remain $k$-step OSP also with respect to the reduced set of histories. In other words, we are claiming that the randomized mechanism is $k$-step OSP if and only if each possible random realization of the mechanism must be $k$-step OSP.

Let us thus assume that $\M$ is a deterministic $k$-step OSP mechanism. Suppose that at an history $h$ the chosen player has only one available action. Clearly, we can safely remove this history without affecting the $k$-step OSPness of both this agent and remaining agents.
Suppose that the mechanism $\M$ also allows the agent to be \emph{absent-minded}, that is, to have two histories, one following the other, that belong to the same information set, representing in this way an agent that forgets to have played before. Hence, every possible action of other players occurring between the two histories is not relevant for the considered agent, while the action taken by this player must be the same at both histories. Hence, we can merge these two histories, without affecting neither the $k$-step OSPness of the current agent, nor the $k$-step OSPness of remaining agents (this corresponds to the \emph{untangling} operation \cite{mackenzie}).

Hence, we can assume that the mechanism is deterministic, and no agent is absent-minded. In other words, they have perfect information about the mechanism, and their own previous actions. Still, it is possible that the agents do not have perfect information about the actions taken by other agents, and different histories corresponding to different actions taken by agent $j$ are in the same information set of agent $i$. Observe that, in this case, the worst outcome that $i$ may play by taking a given action is not better than the worst that she can receive by taking this action when she knows about the action of $j$. Similarly, incomplete information cannot improve the best outcome that $i$ can receive. Hence, if the mechanism is $k$-step OSP with incomplete information, it must be $k$-step OSP also with complete information. And thus we can assume that each information set corresponds of a single history (this is the \emph{illumination} operation \cite{mackenzie}).

To conclude, since the mechanism $\M$ is deterministic and the players have complete information, it is sufficient to simply \emph{prune} that actions that will not be played by any strategy \cite{liosp}, and \emph{rename} actions with type subdomains for having a $k$-step OSP round-table mechanism, as desired.
\end{proof}


Due to Theorem~\ref{thm:round-table}, we will henceforth focus only on round-table mechanisms, and we will refer to them simply as mechanisms.

\section{Characterizing the Implementation Tree}
In this section we prove our first main result and characterize $k$-step OSP mechanisms in terms of extensive-form games where each agent is called to act at most $k+2$ times, with the limitations discussed above on the last time she plays. We note that the results in Section \ref{sec:ordered} hold for each outcome space whilst in Section \ref{sec:cyclemon} we restrict to binary outcomes.

\subsection{Almost-Ordered Mechanisms}\label{sec:ordered}
A mechanism $\M$ implementing social function $f$ with implementation tree $\T$ is said to be \emph{almost-ordered}, if for every agent $i$, every node $u$ of $\T$ such that $i = i(u)$, and every pair of profiles $\a$ and $\b$ separated at $u$ such that $f_i(\a) > f_i(\b)$ we have that  $c_i < d_i$ where $c_i = \max \{x_i \mid \exists \xii \colon (x_i, \xii) \in \Gamma_u^k(\a)\}$ and $d_i = \min \{x_i \mid \exists \xii \colon (x_i, \xii) \in \Gamma_u^k(\b)\}$. Roughly speaking a mechanism is almost ordered if for every pair $L$ and $ R$ of the subsets in which $D_i(u)$ is partitioned at $u$, either the mechanism assigns the same outcome to all types in $L \cup R$ or these two sets are ordered (i.e., the types in one of these two sets are all smaller than the types in the other set).

We next prove that every $k$-step OSP mechanism must be almost ordered. Let us first state the following useful lemma.
\begin{lemma} \label{lem:def}
   An extensive-form  mechanism $\M=(f,p)$ with implementation tree $\T$ is $k$-step OSP if and only if for all $i$, all vertices $u$ of $\T$ such that $i=i(u)$, and every pair of profiles $\a$ and $\b$ separated at $u$, the following holds for all $(c_i, \ci) \in \Gamma^k_u(\a)$:
    \begin{equation}
     \label{eq:osp_constraint}
     p_i(b_i,\bi) - p_i(a_i, \ai) \leq c_i\left(f_i(b_i, \bi)-f_i(a_i, \ai)\right).
    \end{equation}
\end{lemma}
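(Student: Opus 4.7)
The plan is to exhibit a direct bijection between the utility inequalities in the definition of $k$-step OSP and the payment-allocation inequalities in the Lemma. In the single-parameter quasilinear setting, $u_i(t_i, \M(\x)) \geq u_i(t_i, \M(\y))$ is equivalent to $p_i(\y)-p_i(\x) \leq t_i(f_i(\y)-f_i(\x))$, so the content of the Lemma reduces to showing that the quantifiers in the two definitions match up.

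For the forward direction, I would assume $\M$ is $k$-step OSP and fix $i$, $u$ with $i=i(u)$, a pair of profiles $\a,\b$ separated at $u$, and $(c_i,\ci)\in\Gamma^k_u(\a)$. I then instantiate the $k$-step OSP definition by taking the ``real type'' to be $c_i$, the truthful profile to be $(c_i,\ci)$, the ``unseparated'' profile to be $\a$, and the ``divergent'' profile to be $\b$. Since $\Gamma^k_u$ is an equivalence class at $u$, from $(c_i,\ci)\in \Gamma^k_u(\a)$ I get $\a\in \Gamma^k_u(c_i,\ci)$. Moreover, $\a$ and $(c_i,\ci)$ take the same action at $u$ (they are $k$-step unseparated and in particular undiverged at $u$), while $\a$ and $\b$ take different actions at $u$ by hypothesis; consequently $\b$ diverges from $(c_i,\ci)$ at $u$, as the definition requires. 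The $k$-step OSP inequality then rearranges to \eqref{eq:osp_constraint}.

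For the reverse direction, I would start from an arbitrary instance of the $k$-step OSP premise: an agent $i$ of type $t_i$, a node $u$ with $i=i(u)$, profiles $(t_i,\bi),(b_i,\bi')\in\Gamma^k_u(t_i,\bi)$, and $(b'_i,\bi'')$ diverging from $(t_i,\bi)$ at $u$. Setting $\a:=(b_i,\bi')$, $\b:=(b'_i,\bi'')$, and $(c_i,\ci):=(t_i,\bi)$, the fact that $\a$ and $(t_i,\bi)$ agree at $u$ while $\b$ diverges from $(t_i,\bi)$ there implies that $\a$ and $\b$ are separated at $u$; and $(c_i,\ci)=(t_i,\bi)\in\Gamma^k_u(\a)$ by definition. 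Applying \eqref{eq:osp_constraint} and rearranging delivers $u_i(t_i,\M(b_i,\bi'))\geq u_i(t_i,\M(b'_i,\bi''))$, which is precisely the $k$-step OSP inequality.

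The only real thing to verify is that separation at $u$ is preserved when one of the two profiles is replaced by any representative of its $k$-step equivalence class at $u$; this is immediate because $\Gamma^k_u(\cdot)$ refines the partition induced by the action taken at $u$. I do not expect any substantive obstacle: the Lemma is essentially a change-of-variables in the quantifier structure of the definition, turning the ``truth vs.~deviation'' viewpoint into a symmetric ``separated pair with a witness type'' viewpoint that is more convenient to use in later arguments such as cycle monotonicity.
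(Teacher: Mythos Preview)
Your proposal is correct and follows essentially the same approach as the paper: both arguments unpack the quasilinear utility inequality $u_i(t_i,\M(\cdot))\geq u_i(t_i,\M(\cdot))$ into the payment--allocation form and then identify the quantifiers in the definition with those in the lemma. Your write-up is in fact more explicit than the paper's (which simply remarks that ``the true type of $i$ can be any value in $D_i$''), since you spell out the substitution in each direction and verify that separation at $u$ is preserved under replacing a profile by a representative of its $k$-step equivalence class.
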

\begin{proof}
	By definition, a mechanism $\M$ with implementation tree $\T$ is a $k$-step OSP mechanism for the social choice function $f$
	if and only if there exists a payment function $p$ such that $\M=(f, p)$ and, for every agent $i$ with real type $t_i$, and for every vertex $u$ such that $i = i(u)$, it holds that
	\[
	  p_i(b_i, \bi') - t_i f_i(b_i, \bi') = u_i(t_i, \M(b_i, \bi')) \geq u_i(t_i,\M(b'_i, \bi'')) =   p_i(b'_i, \bi'') - t_i f_i(b'_i, \bi'')
	\]
	for every $\bi, \bi', \bi''$ and for every $b_i, b'_i \in D_i$, with $b'_i \neq t_i$,
	such that $(t_i, \bi), (b_i, \bi') \in \Gamma^k_u(t_i, \bi)$, while $(t_i, \bi)$ and $(b'_i, \bi'')$ are compatible with $u$, but diverge at $u$. Since the true type of $i$ can be any value in $D_i$, then the mechanism is $k$-step OSP if and only if this is true for any pair $a_i, b_i \in D_i(u)$. \hfill
\end{proof}
\medskip
Henceforth, we will refer to condition~\eqref{eq:osp_constraint} as \emph{$k$-step OSP constraint}.

\begin{theorem}
\label{thm:ordered}
 A $k$-step OSP mechanism $\M$ implementing social function $f$ with implementation tree $\T$ is  almost-ordered.
\end{theorem}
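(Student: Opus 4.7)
The plan is to prove the theorem by a direct double application of the $k$-step OSP constraint from Lemma~\ref{lem:def}, using the profiles that achieve the extremal values $c_i$ and $d_i$ as the ``witnesses'' to derive an inequality chain that pins down the ordering.

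Fix an agent $i$, a node $u$ with $i=i(u)$, and profiles $\a,\b$ separated at $u$ with $f_i(\a) > f_i(\b)$. Let $(c_i,\ci)\in \Gamma^k_u(\a)$ attain the maximum in the definition of $c_i$, and let $(d_i,\di)\in \Gamma^k_u(\b)$ attain the minimum in the definition of $d_i$. Since $\a$ and $\b$ are separated at $u$, applying Lemma~\ref{lem:def} to the pair $(\a,\b)$ with witness $(c_i,\ci)\in\Gamma^k_u(\a)$ yields
\[
 p_i(\b)-p_i(\a)\;\leq\;c_i\bigl(f_i(\b)-f_i(\a)\bigr).
\]
Swapping the roles of $\a$ and $\b$ (they are still separated at $u$) and using the witness $(d_i,\di)\in\Gamma^k_u(\b)$ gives
\[
 p_i(\a)-p_i(\b)\;\leq\;d_i\bigl(f_i(\a)-f_i(\b)\bigr).
\]
Summing the two inequalities, the payment terms cancel and one obtains $0\leq (d_i-c_i)\bigl(f_i(\a)-f_i(\b)\bigr)$. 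As $f_i(\a)>f_i(\b)$ by hypothesis, this forces $d_i\geq c_i$.

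It then remains to rule out equality. Because $i=i(u)$, the query at $u$ partitions $D_i(u)$ among the children of $u$; the fact that $\a$ and $\b$ diverge at $u$ means that $a_i$ and $b_i$ belong to two \emph{different} subdomains of this partition. By definition of the $k$-step equivalence class, any $(c_i,\ci)\in \Gamma^k_u(\a)$ does not diverge from $\a$ at $u$, so $c_i$ sits in the same subdomain as $a_i$; symmetrically, $d_i$ sits in the same subdomain as $b_i$. Since the two subdomains are disjoint, $c_i\neq d_i$, and combining with $d_i\geq c_i$ gives $c_i< d_i$, as required.

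The only delicate step is the second one: we need both extremal witnesses to genuinely exist and to fit the syntactic form required by Lemma~\ref{lem:def} (i.e.\ both must be \emph{separated} from the other side at $u$ and \emph{available} at $u$); this is immediate once one notes that membership in $\Gamma^k_u(\cdot)$ already implies availability at $u$, and membership in the respective equivalence classes of $\a$ and $\b$ guarantees that they sit on opposite sides of the separation at $u$. The rest is algebraic manipulation of the two $k$-step OSP constraints, so the argument requires essentially no further technical machinery beyond Lemma~\ref{lem:def}.
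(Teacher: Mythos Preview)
Your proof is correct and follows essentially the same approach as the paper: you derive the same two $k$-step OSP constraints from Lemma~\ref{lem:def} (one with witness $(c_i,\ci)\in\Gamma^k_u(\a)$, the other with witness $(d_i,\di)\in\Gamma^k_u(\b)$) and combine them to force $c_i\le d_i$, just presented directly rather than by contradiction. In fact your argument is slightly more careful than the paper's, which assumes for contradiction only $c_i>d_i$ and thus never explicitly rules out $c_i=d_i$; your observation that $c_i$ and $d_i$ must lie in distinct cells of the partition of $D_i(u)$ at $u$ cleanly closes that gap.
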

\begin{proof}
 Suppose that $\M$ is not almost-ordered, and let $i$ and $u$ be respectively the agent and the node of $\T$ such that $i = i(u)$ and there are $\a$ and $\b$ with $f_i(\a) > f_i(\b)$ separated at $u$ such that $c_i > d_i$, where $c_i$ and $d_i$ are as defined above.
 Since $\a$ and $\b$ are separated at $u$, and there are, by definition of $c_i$ and $d_i$, $\ci$ and $\di$ such that $(c_i, \ci) \in \Gamma_u^k(\a)$ and $(d_i, \di) \in \Gamma_u^k(\b)$, then both the following $k$-step OSP constraints must be satisfied:
 \begin{align*}
  p_i(b_i,\bi) - p_i(a_i, \ai) & \leq c_i\left(f_i(b_i, \bi)-f_i(a_i, \ai)\right)\\
  p_i(a_i,\ai) - p_i(b_i, \bi) & \leq d_i\left(f_i(a_i, \ai)-f_i(b_i, \bi)\right).
 \end{align*}
 This leads to $c_i \leq d_i$, that contradicts our hypothesis. 
\end{proof}

\subsection{$k$-Limited Mechanisms for Binary Outcomes}\label{sec:cyclemon}
We henceforth focus on settings with binary outcomes, i.e., $\out = \{0, 1\}^n$.

Let us consider a mechanism $\M$ implementing $(f,p)$ with implementation tree $\T$. We next provide some useful definitions. We begin with certain subsets of type domains.
\begin{definition}[Suffix/Prefix of Type Domains]
Given a node $u \in \T$ and agent $i = i(u)$ we say that the domain $D_i(u)$ is \emph{prefix} if $\max \{t \in D_i(u)\} < \min \{t \in D_i \setminus D_i(u)\}$, i.e., it only contains the smaller types in the domain of $u$ (with larger types already removed in the queries preceding $u$). Similarly, we say that $D_i(u)$ is \emph{suffix} if $\min \{t \in D_i(u)\} > \max \{t \in D_i \setminus D_i(u)\}$.
\end{definition}
We now move to queries of particular interest.

\begin{definition}[Revelation, Extremal and (In)Effective Queries]
Given a node $u \in \T$ and a type $t \in D_i(u)$, $i=i(u)$, we say that the query at this node is a(n)
\begin{description}[nosep]
	\item[Revelation query] if agent $i$ is asked to reveal her type.
	\item[Extremal query] if agent $i$ is asked to separate one extreme (i.e., the minimum or maximum type) from the rest of her current domain $D_i(u)$.
	\item[Ineffective query] if for every $\xii$ available at $u$, and every $t', t'' \in D_i(u)$, $f_i(t', \xii) = f_i(t'', \xii)$ and $p_i(t', \xii) = p_i(t'', \xii)$. In words, regardless of how types are partitioned by this query, $i$ receives the same outcome and payment for every given profile of other agents' actions.
	\item[Strongly Ineffective query]  if for every $\xii, \xii'$ available at $u$ and every $t', t'' \in D_i(u)$, we have that $f_i(t', \xii) = f_i(t'', \xii')$ and $p_i(t', \xii) = p_i(t'', \xii')$. In words, a strongly ineffective query is one for which agent $i$ receives the same outcome and payment for each possible profile available at that history.
	\item[Only-$t$ effective query] if for each $\xii$ available at $u$ and every $t', t'' \in D_i(u)$ with $t', t'' \neq t$, $f_i(t', \xii) = f_i(t'', \xii)$ and $p_i(t', \xii) = p_i(t'', \xii)$, and there is $\yi$ available at $u$ such that $f_i(t, \yi) \neq f_i(t', \yi)$. In words, for each profile of other agents' actions, the outcome and payment received by $i$ is the same except for the type $t$.
	\item[Strongly only-$t$ effective query] if for each $\xii, \xii'$ available at $u$ and every $t', t'' \in D_i(u)$ such that $t', t'' \neq t$, $f_i(t', \xii) = f_i(t'', \xii')$ and $p_i(t', \xii) = p_i(t'', \xii')$, and there is $\yi$ available at $u$ such that $f_i(t, \yi) \neq f_i(t', \yi)$. That is, $i$ receives exactly the same outcome and payment in each profile available at $u$, except for the ones in which she has type $t$.
\end{description}
In our context, we will concentrate on (strongly) only-maximum (minimum, respectively) effective queries, that is, the (strongly) only-$t$ queries for which $t = \max \{t \in D_i(u)\}$ ($t = \min \{t \in D_i(u)\}$, respectively).
\end{definition}

We will next show that if $\M$ is $k$-step OSP, then it is without loss of generality, to assume that $\T$ has the following special structure. 

\begin{definition}[$k$-limited mechanism]
	A mechanism $\M$ implemented by a tree $\T$ is a \emph{$k$-limited} mechanism for each agent $i$, and for every path $P$ from the root of $\T$ to a leaf, one of the following properties holds:
	\begin{itemize}[topsep=0pt,noitemsep]
		\item $i$ is queried at most $k+1$ times along $P$;
		\item $i$ is queried at most $k+2$ times along $P$ \emph{and} at the node $u$ corresponding to the $(k+2)$-th query we have that: (i) either $|D_i(u)| = 2$ or $D_i(u)$ is prefix; and (ii) the query at $u$ is either a strongly ineffective revelation query, or a strongly only-maximum effective revelation query or it is an only-maximum effective extremal query that separates the maximum type in $D_i(u)$ from the rest of the domain;
		\item $i$ is queried at most $k+2$ times along $P$ \emph{and} at the node $u$ corresponding to the $(k+2)$-th query we have that: (i) $D_i(u)$ is suffix; and (ii) the query at $u$ is either a strongly ineffective revelation query, a strongly only-minimum effective revelation query or it is an only-minimum effective extremal query that separates the minimum type in $D_i(u)$ from the rest of the domain.
	\end{itemize}
\end{definition}
We then have the following theorem.
\begin{theorem}
 \label{thm:klimited}
 If there is a $k$-step OSP mechanism $\M$ that implements $(f, p)$ with implementation tree $\T$, then there is a $k$-limited $k$-step OSP mechanism $\M'$ implementing $(f, p)$.
\end{theorem}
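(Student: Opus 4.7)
My plan is to fix an agent $i$ and a root-to-leaf path $P$ in $\T$, denote by $u_1, \ldots, u_m$ the nodes on $P$ where $i$ is queried, and argue that either $m \leq k+1$, or $m = k+2$ with the $(k+2)$-th query taking one of the specific restricted forms in the definition of a $k$-limited mechanism. I would begin by invoking Theorem~\ref{thm:round-table} to work within round-table mechanisms, and Theorem~\ref{thm:ordered} to guarantee that at every query $u_j$ the partition of $D_i(u_j)$ is compatible with a total order on the remaining types (mediated by the $k$-step equivalence classes).

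Next, I would apply the $k$-step OSP constraint of Lemma~\ref{lem:def} at the second query $u_2$ on $P$. The $k$-step neighborhood $\mathcal{N}_k(u_2)$ contains the next $k$ queries to $i$ on $P$, namely $u_3, \ldots, u_{k+2}$, so any two profiles that agree in their signals at $u_2, u_3, \ldots, u_{k+2}$ lie in the same equivalence class $\Gamma^k_{u_2}$. Writing the inequality $p_i(b_i, \bi) - p_i(a_i, \ai) \leq c_i \left( f_i(b_i, \bi) - f_i(a_i, \ai) \right)$ on both sides of a divergence at $u_2$ and exploiting the binary outcome assumption $f_i \in \{0,1\}$, these coupled inequalities should force $f_i$ and $p_i$ to be essentially constant across the profiles in $D_i(u_{k+2})$ that reach $u_{k+2}$ via $P$, with at most one type in $D_i(u_{k+2})$ receiving a different outcome. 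By the almost-ordered structure of Theorem~\ref{thm:ordered}, this exceptional type must sit at an extreme of $D_i(u_{k+2})$.

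Translating this payoff-neutrality into syntactic constraints on the $(k+2)$-th query, I would then show that, depending on which extreme carries the exceptional outcome and whether $D_i(u_{k+2})$ leaves room for further (superfluous) queries, $D_i(u_{k+2})$ must be a prefix, a suffix, or a size-two domain, and the query at $u_{k+2}$ itself must be either a (strongly) ineffective revelation, a (strongly) only-maximum (or only-minimum) effective revelation, or an extremal query separating the exceptional extreme type from the rest. Any hypothetical $(k+3)$-th query on $P$ would, by the same argument applied one level deeper, be fully ineffective on both outcome and payment, so it can be pruned without altering $(f,p)$. Where the existing tree $\T$ does not already conform to this structure, I would describe an explicit transformation — prune the redundant subtree after $u_{k+2}$ and relabel the query at $u_{k+2}$ according to its outcome/payment profile — yielding a new tree $\T'$ that implements the same $(f,p)$ and is $k$-limited, then verify that the new equivalence classes in $\T'$ still satisfy every $k$-step OSP constraint of $\M$.

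The main obstacle, I expect, will be rigorously extracting from the $k$-step OSP inequalities the statement that after $k+1$ queries to $i$ the outcome can vary over $D_i(u_{k+2})$ at most at one extreme type. The equivalence classes $\Gamma^k_u$ entangle constraints on $i$'s types with agreement requirements on other agents' actions inside $\mathcal{N}_k(u)$, so one must carefully disentangle variation along the $i$-axis from variation along the $\bi$-axis before almost-orderedness can be invoked to pin the exceptional type to an extreme. A secondary difficulty is that restructuring $\T$ perturbs the $k$-step neighborhoods of nearby nodes, so the transformation step requires a careful check that every original $k$-step OSP constraint survives under the new equivalence structure.
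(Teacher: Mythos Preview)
Your overall strategy matches the paper's: extract payoff-neutrality at the $(k+2)$-th query from the $k$-step OSP inequalities together with the binary-outcome assumption, pin the exceptional type to an extreme via almost-orderedness, then restructure the tree. Two points, however, are not right as stated.

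First, an indexing slip. You anchor the argument at $u_2$, whose $k$-step neighborhood reaches $u_{k+2}$; but then profiles in $\Gamma^k_{u_2}$ are those \emph{not} separated at any of $u_2,\ldots,u_{k+2}$, so they all land in the same child of $u_{k+2}$ and your inequalities end up constraining $D_i(u_{k+3})$, not $D_i(u_{k+2})$. The paper instead anchors at the \emph{first} query $u_1$: the class $\Gamma^k_{u_1}$ consists exactly of the profiles surviving $u_1,\ldots,u_{k+1}$, hence populating $D_i(u_{k+2})$, while the diverging profile $\b$ is separated somewhere along $u_1,\ldots,u_{k+1}$. The position of $b_i$ relative to the surviving types (above all of them, below all of them, or in between) is precisely what drives the prefix/suffix/ineffective trichotomy you mention; this case split is the content of the paper's Lemma~\ref{lem:ineffective} and cannot be bypassed.

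Second, the transformation is not a pruning. Below $u_{k+2}$ the tree may still query other agents (and query $i$ again, ineffectively); discarding that subtree would destroy $(f,p)$ on those profiles. The paper's fix (Lemmas~\ref{lem:transform1}--\ref{lem:transform2}) is to convert the $(k+2)$-th query into a revelation query: each child $v$ of $u_{k+2}$ is replaced by one node $v_t$ per type $t\in D_i(v)$, each rooting a copy of $\T_v$ with $i$'s later (redundant) queries removed. One must then check the \emph{new} $k$-step OSP constraints created at $u_{k+2}$ in $\T'$; this is where strong ineffectiveness (as opposed to mere ineffectiveness) is needed, and the paper isolates the required upgrade as a separate step (Lemma~\ref{lem:strongly}).
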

The rest of this section proves Theorem~\ref{thm:klimited}. To this aim,  we first prove some preliminary properties of a $k$-step OSP mechanism $\M$, which amount to a version of the taxation principle for $k$-step OSP.

\begin{lemma}[Taxation Principle for $k$-step OSP] \label{lem:ineffective}
   Let $\M$ be a $k$-step OSP mechanism that implements $(f,p)$ with implementation tree $\T$. 
   For all $i$ and for each vertex $u$ in $\T$ such that $i = i(u)$, take 
   any three profiles $\a = (a_i, \a_{-i})$, $\c = (c_i, \ai)$, and $\d = (d_i, \a_{-i})$ such that (i) $a_i < c_i < d_i$, (ii) $\a, \c, \d \in \Gamma^k_u(\a)$, and (iii) there is $u' \notin {\cal N}_k(u)$ such that $i = i(u')$, $\a, \c$, and $\d$ are available at $u'$, and two among $\a$, $\c$, and $\d$ are separated at $u'$. We have that:
    \begin{enumerate}[nosep]
     \item if there are $\b$ and $\b'$ separated from $\a, \c, \d$ by $i$ along the path from $u$ to $u'$ such that $b_i > d_i$ and $b'_i < a_i$, then $f_i(\a) = f_i(\c) = f_i(\d)$ and $p_i(\a) = p_i(\c) = p_i(\d)$ (\textit{Outer-sandwich separations});
     \item if there is $\b$ separated from $\a, \c, \d$ by $i$ along the path from $u$ to $u'$ such that $a_i < b_i < d_i$, then $f_i(\a) = f_i(\c) = f_i(\d)$ and $p_i(\a) = p_i(\c) = p_i(\d)$ (\textit{Inner-sandwich separation});
     \item if there is $\b$ separated from $\a, \c, \d$ by $i$ along the path from $u$ to $u'$ such that $b_i > d_i$ then $f_i(\a) = f_i(\c)$ and $p_i(\a) = p_i(\c)$ (\textit{Top separation});
     \item if there is $\b$ separated from $\a, \c, \d$ by $i$ along the path from $u$ to $u'$ such that $b_i < a_i$ then $f_i(\c) = f_i(\d)$ and $p_i(\c) = p_i(\d)$ (\textit{Bottom separation}).
    \end{enumerate}
\end{lemma}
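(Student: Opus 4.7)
My plan is to prove the four cases separately, using two main ingredients: Theorem~\ref{thm:ordered}, which forces the partition at any separation node to be either outcome-constant or strictly type-ordered, and Lemma~\ref{lem:def}, which rewrites $k$-step OSP as paired payment/allocation inequalities parameterized by elements of the $k$-step equivalence class. In each case the work reduces to identifying the right node(s) at which to apply these two tools and combining the resulting bounds.

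I would first dispatch the \emph{inner-sandwich} case, since it follows almost immediately from Theorem~\ref{thm:ordered}. Because $a_i<b_i<d_i$, the parts of the partition induced at $u_1$ (one containing $b_i$, the other containing $a_i,c_i,d_i$) cannot be type-ordered, so almost-orderedness forces $f_i$ to be constant across them, yielding $f_i(\a)=f_i(\c)=f_i(\d)=f_i(\b)$; Lemma~\ref{lem:def} with vanishing right-hand side then gives the corresponding payment equalities.

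For \emph{top separation} (case~3), and symmetrically \emph{bottom separation} (case~4), I would argue by contradiction. Suppose $f_i(\a)\neq f_i(\c)$. At the node $u^\star$ where $\a$ and $\c$ eventually diverge (if they never diverge the conclusion is trivial since they reach the same leaf) Theorem~\ref{thm:ordered} forces $f_i(\a)=1$ and $f_i(\c)=0$ in the binary setting of Section~\ref{sec:cyclemon}, and applied at $u_1$ it further forces $f_i(\b)=0$. The $k$-step OSP inequalities at $u_1$ for the pair $(\b,\c)$ (where $f_i(\b)=f_i(\c)$) pin down $p_i(\b)=p_i(\c)$. I then derive two incompatible bounds on $p_i(\a)-p_i(\c)$: Lemma~\ref{lem:def} at $u_1$ applied to $(\a,\b)$ with the equivalence-class witness $\d$ (type $d_i$) gives $p_i(\a)-p_i(\b)\geq d_i$, whereas Lemma~\ref{lem:def} at $u^\star$ applied to $(\c,\a)$ combined with Theorem~\ref{thm:ordered} at $u^\star$ gives $p_i(\a)-p_i(\c)\leq c_i$ (since $c_i\in\Gamma^k_{u^\star}(\c)$ bounds the minimum type on $\c$'s side). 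Substituting $p_i(\b)=p_i(\c)$ produces $d_i\leq c_i$, contradicting $c_i<d_i$; the payment equality $p_i(\a)=p_i(\c)$ then follows from Lemma~\ref{lem:def} at $u^\star$ once $f_i(\a)=f_i(\c)$ is known. The \emph{outer-sandwich} case is obtained by invoking top separation (via the witness $\b$ at $u_1$) and bottom separation (via $\b'$ at $u_2$) in turn, so that $f_i(\a)=f_i(\c)=f_i(\d)$ and likewise for payments.

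The step I expect to be the main obstacle is verifying that the equivalence-class witness $\d$ really lies in $\Gamma^k_{u_1}(\a)$ in the top-separation contradiction, since $\mathcal{N}_k(u_1)$ can reach further into the tree than $\mathcal{N}_k(u)$. The hypotheses $\a,\c,\d\in\Gamma^k_u(\a)$ and $u'\notin\mathcal{N}_k(u)$ keep the three profiles together at every node between $u$ and $u'$, but a careful subcase analysis is required: depending on which two of $\a,\c,\d$ are first to separate at $u'$, and on whether $u^\star$ lies inside or outside $\mathcal{N}_k(u_1)$, one must choose the right witness (either $\d$ or, when $\a$ and $\d$ split first at $u'$, $\c$ combined with a lower $u^\star$) so that a type strictly larger than $c_i$ remains available in $\Gamma^k_{u_1}(\a)$ and the contradiction $d_i\leq c_i$ is actually reached.
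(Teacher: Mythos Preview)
Your approach is essentially the paper's: both dispatch the inner-sandwich case directly via Theorem~\ref{thm:ordered}, and for top/bottom separation reach a contradiction by pairing the $k$-step OSP constraint of Lemma~\ref{lem:def} at the node where $\b$ separates---instantiated with $d_i$ as the equivalence-class witness on $\a$'s side---against the constraint at the node where $\a$ and $\c$ first diverge; the paper packages this as the parametric inequality $(d_i-c_i)(f_i(\a)-f_i(\c))\le(x_i-d_i)(f_i(\c)-f_i(\b))$, which in the binary setting collapses exactly to your $d_i\le c_i$. Regarding the obstacle you anticipate, the paper performs no subcase analysis: it writes all the relevant constraints at the node $u$ of the hypothesis (where $\a,\c,\d\in\Gamma^k_u(\a)$ is given by~(ii)) and treats $\b$ as diverging there, so the question of whether $\d\in\Gamma^k_{u_1}(\a)$ for a later $u_1$ never enters its argument.
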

\begin{proof}
	We start by rewriting the $k$-step OSP constraint \eqref{eq:osp_constraint} for mechanism $\M$ 
	and agent $i$ with real type $t_i$. For every profile $(t_i, \ti)$, for every vertex $u$ such that $i = i(u)$ and $(t_i, \ti)$ is available at $u$, and every $\t' \in \Gamma^k_u(t_i, \ti)$, it holds that
	\begin{equation}
	 \label{cond:a_ti}
	p_i(\b) - p_i(\t') \leq t_i(f_i(\b) - f_i(\t'))
	\end{equation}
	for every $\b$ available at $u$ that diverges from $(t_i, \ti)$ at $u$. Hence, by taking $\ti = \ai$, if $t_i = a_i$, then we have that
	\begin{equation}
	\label{eq:case1}
	\begin{cases}
	 p_i(\b) - p_i(\a) \leq a_i(f_i(\b) - f_i(\a))& \text{if } \t' = \a;\\
	 p_i(\b) - p_i(\c) \leq a_i(f_i(\b) - f_i(\c))& \text{if } \t' = \c;\\
	 p_i(\b) - p_i(\d) \leq a_i(f_i(\b) - f_i(\d))& \text{if } \t' = \d;
	\end{cases}
	\end{equation}
	while if $t_i = c_i$, then we have that
	\begin{equation}
	\label{eq:case2}
	\begin{cases}
	 p_i(\b) - p_i(\a) \leq c_i(f_i(\b) - f_i(\a))& \text{if } \t' = \a;\\
	 p_i(\b) - p_i(\c) \leq c_i(f_i(\b) - f_i(\c))& \text{if } \t' = \c;\\
	 p_i(\b) - p_i(\d) \leq c_i(f_i(\b) - f_i(\d))& \text{if } \t' = \d;
	\end{cases}
	\end{equation}
	and if $t_i = d_i$, then we have that
	\begin{equation}
	\label{eq:case2bis}
	\begin{cases}
	 p_i(\b) - p_i(\a) \leq d_i(f_i(\b) - f_i(\a))& \text{if } \t' = \a;\\
	 p_i(\b) - p_i(\c) \leq d_i(f_i(\b) - f_i(\c))& \text{if } \t' = \c;\\
	 p_i(\b) - p_i(\d) \leq d_i(f_i(\b) - f_i(\d))& \text{if } \t' = \d.
	\end{cases}
	\end{equation}
	By taking instead $t_i = b_i$, $\ti = \bi$, $\t' = \b$, we have
	\begin{equation}
	\label{eq:case3}
	\begin{cases}
	 p_i(\a) - p_i(\b) \leq b_i(f_i(\a) - f_i(\b));\\
	 p_i(\c) - p_i(\b) \leq b_i(f_i(\c) - f_i(\b));\\
	 p_i(\d) - p_i(\b) \leq b_i(f_i(\d) - f_i(\b)).
	\end{cases}
	\end{equation}
	We distinguish three cases based on the relation between $b_i$, $a_i$, and $d_i$.
	\begin{description}
	 \item[Case 1 ($b_i > d_i$).]
	 Then from \eqref{eq:case1}--\eqref{eq:case3} we achieve that $f_i(\b) \leq \min\{f_i(\a), f_i(\c), f_i(\d)\}$ and \eqref{eq:case2bis} and \eqref{eq:case3} yield
	\begin{align}
	 L_1 = d_i(f_i(\a) - f_i(\b)) & \leq p_i(\a) - p_i(\b) \leq b_i(f_i(\a) - f_i(\b)) = U_1;
	\label{eq:cond1}\\
	 d_i (f_i(\c) - f_i(\b)) & \leq p_i(\c) - p_i(\b) \leq b_i(f_i(\c) - f_i(\b)).
	\label{eq:cond2}
	\end{align}

	Clearly, if $\a$ and $\c$ are never separated by $i$, then $f_i(\a) = f_i(\c)$.
	Suppose instead that there is a node $u'' \notin {\cal N}_k(u)$ (not necessarily the same as $u'$ defined in the statement) such that $\a$ and $\c$ are separated at $u''$ and $f_i(\a) \neq f_i(\c)$. Since $\a$ and $\c$ only differ in the type of agent $i$, then $i(u'') = i$. Then, by applying \eqref{cond:a_ti} with $\ti = \ai$, $t_i = a_i$, $\t' = \a$ and $\b = \c$, we have
	$$
	 p_i(\c) - p_i(\a) \leq a_i(f_i(\c) - f_i(\a)).
	$$
	Similarly, by applying \eqref{cond:a_ti} with $\ti = \ai$, $t_i = c_i$, $\t' = \c$ and $\b = \a$, we have
	$$
	 p_i(\a) - p_i(\c) \leq c_i(f_i(\a) - f_i(\c)).
	$$
	Hence, since $a_i< c_i$, we achieve that $f_i(\a) > f_i(\c)$ and thus
	\begin{equation}
	\label{eq:cond4}
	 a_i(f_i(\a) - f_i(\c)) \leq p_i(\a) - p_i(\c) \leq c_i(f_i(\a) - f_i(\c)).
	\end{equation}
	In order to satisfy \eqref{eq:cond2}, we need that $p_i(\c) = p_i(\b) + x_i(f_i(\c)-f_i(\b))$ for some $x_i \in [d_i, b_i]$. By replacing this value in \eqref{eq:cond4}, we then have that
	\begin{equation}
	\label{eq:cond4_new}
	 L_2 = a_i(f_i(\a)-f_i(\c)) + x_i(f_i(\c)-f_i(\b)) \leq p_i(\a) - p_i(\b) \leq c_i(f_i(\a)-f_i(\c)) + x_i(f_i(\c)-f_i(\b)) = U_2.
	\end{equation}
	Hence, in order to have that both \eqref{eq:cond1} and \eqref{eq:cond4_new} are satisfied, we need that 
	$U_2 \geq L_1$. That is, we need that
	$$d_i(f_i(\a)-f_i(\b)) = d_i(f_i(\a)-f_i(\c)) + d_i(f_i(\c)-f_i(\b)) \leq c_i(f_i(\a)-f_i(\c)) + x_i(f_i(\c)-f_i(\b)).$$
	Rearranging, we get 
	\begin{equation}
	 (d_i - c_i)(f_i(\a)-f_i(\c)) \leq (x_i - d_i)(f_i(\c)-f_i(\b)). \label{eq:option1}
	 \end{equation}
	Since $d_i > c_i$ and $f_i(\a)>f_i(\c)$, then the l.h.s. of \eqref{eq:option1} is larger than $0$.
	Moreover, since outcomes are binary, it must be the case that $f_i(\a) = 1$ and $f_i(\c) = 0$, and thus, since $f_i(\b) \leq f_i(\c)$, it must be that $f_i(\b) = f_i(\c) = 0$ meaning that the r.h.s. of  \eqref{eq:option1}  is $0$. Hence,
	\eqref{eq:option1} cannot be satisfied, and thus it cannot be that $f_i(\a) \neq f_i(\c)$.

	\item[Case 2 ($b_i < a_i$).] This case is symmetric with respect to the previous one, and it can be similarly proved by simply inverting the roles of $\a$ and $\d$.

	\item[Case 3 ($a_i < b_i < d_i$).] Then we have that the query at $u$ is not ordered. Since $\M$ is $k$-step OSP, then by Theorem~\ref{thm:ordered}, it must be almost-ordered, and thus it must be the case that $f_i(\b) = f_i(\a) = f_i(\c) = f_i(\d)$, as desired.
	\end{description}
	The arguments above imply that the outcomes are the same for every profile in $\Gamma_u^k(\a)$. The fact that also payments are the same immediately follows, since either the two profiles are never separated by $i$ (and hence they receive the same outcome and the same payment) or they are separated by $i$, and hence they need to receive the same payment otherwise the mechanism fails to be even simply strategyproof.
	
	To conclude the proof, we simply observe that when outer-sandwich separations occur then Case 1 and 2 hold simultaneously; inner-sandwich separation corresponds to Case 3 whilst top (bottom, respectively) separation corresponds to Case 1 (2, respectively). 
\end{proof}

\newcommand{\verk}{{\cal O}_{i, f, \T}^k}
\newcommand{\verinf}{{\cal O}_{i, f, \T}^{\infty}}

We say that a mechanism $\M$ with implementation tree $\T$ is \emph{almost $k$-limited} if $\T$ is such that for each agent $i$ and every path $P$ from the root of $\T$ to a leaf, the following conditions are satisfied:
\begin{itemize}[nosep]
 \item for $u \in P$ corresponding to the $(k+2)$-th query to $i$, if we denote with $a_i$ and $d_i$ the minimum and the maximum of the types of $i$ available at $u$ respectively, we must have that
 \begin{itemize}[nosep]
 \item if the domain $D_i(u)$ of $i$ at $u$ contains at least three types and it is neither prefix nor suffix (i.e., in the previous $k+1$ queries to $i$, either $i$ separated from $D_i(u)$ types $b_i$ and $b'_i$ such that $b_i > d_i$ and $b'_i < a_i$, or $i$ separated from $D_i(u)$ type $b_i$ such that $a_i < b_i < d_i$), then the $(k+2)$-th query is ineffective;
 \item if $D_i(u)$ contains at most two types or it is prefix (i.e., in the previous $k+1$ queries to $i$, $i$ separated from $D_i(u)$ only types $b_i$ such that $b_i > d_i$), then the $(k+2)$-th query is either ineffective or it is only-maximum effective;
 \item if $D_i(u)$ contains at least three types and it is suffix (i.e., in the previous $k+1$ queries to $i$, $i$ separated from the domain only types $b_i$ such that $b_i < a_i$), then the $(k+2)$-th query is either ineffective or it is only-minimum effective.
\end{itemize}
 \item every $u \in P$ corresponding to the $q$-th query to $i$ with $q \geq k+3$ are {ineffective}. 
\end{itemize}

Considering the first query along a path $P$, Lemma~\ref{lem:ineffective} restricts the $(k+2)$-th query on $P$ as in the first bullet point (the three cases corresponding to sandwich, top, and bottom separations, respectively). Moreover, given these properties, it is not hard to see that the subsequent queries must be ineffective (as requested by the second condition of almost $k$-limited mechanisms). Thus, Lemma~\ref{lem:ineffective}  can be restated as follows.
\begin{corollary}
 If a mechanism $\M$ that implements $(f,p)$ with implementation tree $\T$ is $k$-step OSP, then $\T$ is almost $k$-limited.
\end{corollary}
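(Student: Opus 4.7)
I will fix an agent $i$ and a root-to-leaf path $P$ in $\T$, and write $v_1, v_2, \ldots$ for the nodes on $P$ where $i$ is queried, in order. If $i$ is queried at most $k+1$ times on $P$ both conditions in the definition of almost $k$-limited are vacuous, so assume the $(k+2)$-th query $u = v_{k+2}$ exists. My plan has two parts: first verify the first bullet of the almost $k$-limited definition at $u$ via Lemma~\ref{lem:ineffective}, and then verify the second bullet (ineffectiveness) at every later query $v_q$ with $q \geq k+3$ by combining the structure just established with a direct use of the $k$-step OSP characterisation from Lemma~\ref{lem:def}.

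For the first part, I take $v_1$ as the reference vertex in Lemma~\ref{lem:ineffective}: since $v_{k+1}$ is the $k$-th subsequent query to $i$ after $v_1$, we have $u \notin \mathcal{N}_k(v_1)$. For any triple of types $a_i < c_i < d_i$ in $D_i(u)$ and any $\ai$ making $\a = (a_i, \ai)$ compatible with $u$, the three profiles $\a$, $\c = (c_i, \ai)$, $\d = (d_i, \ai)$ share $\ai$ and survive each of $v_1, \ldots, v_{k+1}$, so they all belong to $\Gamma^k_{v_1}(\a)$; the types separated from $\{a_i, c_i, d_i\}$ by $i$ along the path from $v_1$ to $u$ are exactly those in $D_i \setminus D_i(u)$. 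A short case analysis on the structure of $D_i(u)$ then matches the three conclusions of the lemma with the three sub-cases of the first bullet: the neither-prefix-nor-suffix case triggers outer-sandwich, yielding ineffectiveness; the prefix case triggers top separation, giving equality of outcomes and payments for any pair of types strictly below the maximum, hence at worst only-maximum effective; the suffix case is symmetric. The $|D_i(u)| \leq 2$ situation is handled vacuously, since any partition of a two-type domain trivially meets the only-maximum effective condition (no two non-extreme types to compare).

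For the second part, I argue for $v_q$ with $q \geq k+3$ by splitting on the structure at $u$. If $u$ was ineffective, then $D_i(v_q) \subseteq D_i(u)$ is already equalised in outcome and payment for each $\xii$ compatible with $u$, hence with $v_q$. If $u$ was only-$t$ effective with $t = \max D_i(u)$ (the minimum case is analogous) and $t \notin D_i(v_q)$, the conclusion follows again from the only-maximum structure at $u$. The remaining case is $t \in D_i(v_q)$: the only potentially problematic separations at $v_q$ pair $t$ with some smaller $t' \in D_i(v_q)$ across the partition at $v_q$. For each such pair I write the inequality of Lemma~\ref{lem:def} in both directions for the separated profiles $\a = (t, \xii)$ and $\b = (t', \xii')$, picking $(c_i, \ci) = \a$ in one direction and $(c_i, \ci) = \b$ in the other. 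Each witness lies trivially in the required $k$-step equivalence class since a profile is always $k$-step unseparated from itself. Summing the two resulting inequalities yields $0 \leq t' - t$, which contradicts $t' < t$ unless $f_i(\a) = f_i(\b)$; the same pair of inequalities then forces $p_i(\a) = p_i(\b)$. Hence every separation at $v_q$ equalises outcomes and payments, and the query at $v_q$ is ineffective.

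The main obstacle is this last sub-case, where Lemma~\ref{lem:ineffective} alone no longer provides the two-sided separation needed for outer-sandwich: there may simply be no previously-separated type below $D_i(v_q)$ when one re-applies the lemma with a later reference. The remedy is to bypass Lemma~\ref{lem:ineffective} and work directly from the $k$-step OSP constraint of Lemma~\ref{lem:def}, using the simplest possible equivalence-class witnesses (each of $\a$ and $\b$ taken as its own witness), which sidesteps any non-trivial membership check in $\Gamma^k_{v_q}$.
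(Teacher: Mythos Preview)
Your first part and the sub-cases (a), (b) of the second part are fine; they track the paper's sketch closely. The gap is in sub-case (c).

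Summing your two applications of Lemma~\ref{lem:def} at $v_q$ with $\a=(t,\xii)$, $\b=(t',\xii')$ and trivial witnesses $c_i=t$ and $c_i=t'$ gives
\[
0 \;\leq\; t\bigl(f_i(\b)-f_i(\a)\bigr) + t'\bigl(f_i(\a)-f_i(\b)\bigr) \;=\; (t-t')\bigl(f_i(\b)-f_i(\a)\bigr),
\]
which, since $t>t'$, only forces $f_i(\b)\geq f_i(\a)$. This is plain strategyproof monotonicity; it rules out $f_i(t,\xii)>f_i(t',\xii')$ but \emph{not} the case $f_i(t,\xii)=0$ and $f_i(t',\xii')=1$, which is exactly the direction one expects when $t$ is the largest cost. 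Your claimed contradiction ``$0\le t'-t$'' appears only in the sub-case $f_i(\a)>f_i(\b)$, so the argument is incomplete precisely where the substance lies. Using each profile as its own witness throws away all the $k$-step structure, and what remains is too weak.

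The fix is to keep exploiting Lemma~\ref{lem:ineffective} rather than retreating to Lemma~\ref{lem:def}. In case (c) we have $t=\max D_i(v_{k+2})\in D_i(v_q)$ and $D_i(v_q)\subsetneq D_i(v_{k+2})$, so some $s<t$ with $s\in D_i(v_{k+2})$ was separated off at a node $v_j$ with $k+2\le j<q$. Together with the prefix-removed types $>t$, this means that along the path from $v_1$ to $v_j$ both a type above $d_i=t$ and a type $s$ below $t$ have been separated from any triple drawn from $D_i(v_q)\cup\{s\}$. Apply Lemma~\ref{lem:ineffective} with $u=v_1$, $u'=v_j$, and the triple consisting of $t$, $t'$ and $s$ (ordered appropriately; all three lie in $D_i(v_{k+2})\subseteq \Gamma^k_{v_1}(\cdot)$ and are available at $v_j$, where $s$ is separated from $t,t'$). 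Depending on whether $s<\min(t',\,\cdot)$ or $s$ lies strictly between the endpoints, you land in the outer-sandwich or inner-sandwich case, and either conclusion gives $f_i(t,\xii)=f_i(t',\xii)$ together with equality of payments. This is what the paper has in mind by ``not hard to see'': once any sub-maximal type has been shed after $v_{k+2}$, the remaining domain is neither prefix nor suffix and the sandwich cases of Lemma~\ref{lem:ineffective} force full ineffectiveness.
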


%

We now are going to prove further limits to the effectiveness of the $(k+2)$-th query and subsequent ones, by showing that if a query is ineffective for a pair of separated types then it is in fact strongly ineffective for those types.

\begin{lemma}[Strong Ineffectiveness upon Separation]
\label{lem:strongly}
 If a mechanism $\M$ that implements $(f,p)$ with implementation tree $\T$ is $k$-step OSP, then for every $i$, and every path $P$ of $\T$, every node $u \in P$ such that $i(u) = i$, if there are $t, t' \in D_i(u)$ such that for every $\xii$ available at $u$ we have that $f_i(t, \xii) = f_i(t', \xii)$, and $t, t'$ are separated at $u$, then for every $\xii, \xii'$ available at $u$ we have that $f_i(t, \xii) = f_i(t', \xii')$ and $p_i(t, \xii) = p_i(t', \xii')$.
\end{lemma}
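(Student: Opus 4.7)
\textbf{Proof plan for Lemma~\ref{lem:strongly}.} The plan is to apply Lemma~\ref{lem:def} ($k$-step OSP constraints) to the four profiles $(t,\xii),(t,\xii'),(t',\xii),(t',\xii')$, leveraging that $t$ and $t'$ are separated at $u$ and that $\xii,\xii'$ are both available at $u$. Denote $o_{\xii}:=f_i(t,\xii)=f_i(t',\xii)$ and $o_{\xii'}:=f_i(t,\xii')=f_i(t',\xii')$; these equalities are given by hypothesis. WLOG I will assume $t<t'$. Note that because the partition at $u$ acts only on $D_i(u)$, any pair among the four profiles with differing $i$-components is separated at $u$, while two profiles with the same $i$-component agree at $u$. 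In particular $(t,\xii)$ and $(t',\xii')$ are separated at $u$, and similarly $(t,\xii')$ and $(t',\xii)$ are separated at $u$; these are exactly the pairs for which the $k$-step OSP constraint is meaningful.

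First I would prove that $o_\xii=o_{\xii'}$. Taking $\a=(t,\xii)$, $\b=(t',\xii')$ and $(c_i,\ci)=\a$ in Lemma~\ref{lem:def} yields $p_i(t',\xii')-p_i(t,\xii)\leq t\,(o_{\xii'}-o_\xii)$; swapping the roles of $\a$ and $\b$ (still applying the lemma with the real type coinciding with the type in $\a$) gives $p_i(t,\xii)-p_i(t',\xii')\leq t'\,(o_\xii-o_{\xii'})$. Summing these two inequalities gives $(t-t')(o_{\xii'}-o_\xii)\geq 0$, and since $t<t'$ this forces $o_{\xii'}\leq o_\xii$. An identical argument applied to the separated pair $(t,\xii')$, $(t',\xii)$ forces $o_\xii\leq o_{\xii'}$. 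Hence $o_\xii=o_{\xii'}$, establishing the outcome part of the conclusion.

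Once the outcomes are all equal, the same four $k$-step OSP constraints become $p_i(t',\xii')-p_i(t,\xii)\leq 0$ and $p_i(t,\xii)-p_i(t',\xii')\leq 0$, giving $p_i(t,\xii)=p_i(t',\xii')$; analogously $p_i(t,\xii')=p_i(t',\xii)$. To chain these into a single equality I would finally apply Lemma~\ref{lem:def} to the pair $(t,\xii)$, $(t',\xii)$ (separated at $u$ because $t\neq t'$ are in different children of $u$, but available at $u$ with the same $\xii$): using real type $t$ and then real type $t'$, the two constraints read $p_i(t',\xii)-p_i(t,\xii)\leq t(o_\xii-o_\xii)=0$ and $p_i(t,\xii)-p_i(t',\xii)\leq 0$, so $p_i(t,\xii)=p_i(t',\xii)$. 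Combining yields $p_i(t,\xii)=p_i(t,\xii')=p_i(t',\xii)=p_i(t',\xii')$, which together with $o_\xii=o_{\xii'}$ is exactly the claim.

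The only potential obstacle is checking that Lemma~\ref{lem:def} is legitimately applicable in each step, i.e.\ that the ``truthful'' profile is in $\Gamma^k_u$ of itself (trivially true) and that the other profile diverges from it at $u$. This is immediate from the fact that the partition at $u$ separates $t$ from $t'$ regardless of the $-i$-component, so I do not anticipate any real difficulty beyond bookkeeping; in particular the argument makes no use of the outcome space being binary, so the statement holds in full generality.
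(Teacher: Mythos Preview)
Your proposal is correct and essentially matches the paper's own argument: both proofs consider the four profiles $(t,\xii)$, $(t',\xii')$, $(t,\xii')$, $(t',\xii)$, write down the corresponding $k$-step OSP constraints from Lemma~\ref{lem:def}, and combine them (the paper by contradiction with a case split on the sign of $f_i(\y')-f_i(\y)$, you by summing the two inequalities directly). Your observation that the argument does not require binary outcomes is also in line with the paper, which never invokes that assumption in this lemma; your treatment of the payment equalities is in fact slightly more explicit than the paper's one-line remark.
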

\begin{proof}
  Suppose that the claim does not hold, and there are $\y = (t, \yi)$ and $\y' = (t', \yi')$ for some $\yi$ and $\yi'$ available at node $u$, such that $f_i(t, \yi) \neq f_i(t', \yi')$.

  Consider then the profiles $\z = (t, \yi')$ and $\z' = (t', \yi)$. Since $\yi$ and $\yi'$ are available at node $u$, and $t, t'$ are separated at $u$, then it must be the case that $\z$ and $\z'$ also are separated at $u$ by $i$. Moreover, by hypothesis, $f_i(\z) = f_i(\y')$ and $f_i(\z') = f_i(\y)$. Since the mechanism is $k$-step OSP, then it must be the case that all the following $k$-step OSP constraints\footnote{These are actually OSP constraints, meaning that the lemma holds true for the strongest $k$-step OSP notion.} are satisfied:
 \begin{align}
  p_i(\y') - p_i(\y) \leq t(f_i(\y') - f_i(\y)); \label{eq:const1}\\
  p_i(\y) - p_i(\y') \leq t'(f_i(\y) - f_i(\y')); \label{eq:const2}\\
  p_i(\z') - p_i(\z) \leq t(f_i(\z') - f_i(\z)); \label{eq:const3}\\
  p_i(\z) - p_i(\z') \leq t'(f_i(\z) - f_i(\z')). \label{eq:const4}
 \end{align}
 Suppose that $t < t'$ (the case $t > t'$ is symmetric and omitted). If $f_i(\y) < f_i(\y')$, then it follows that \eqref{eq:const1} and \eqref{eq:const2} cannot be both satisfied; if instead $f_i(\y) > f_i(\y')$, then \eqref{eq:const3} and \eqref{eq:const4} cannot be both satisfied.

The equality of payments directly follows from the outcomes being the same, and the profiles being separated.
\end{proof}
Lemma~\ref{lem:strongly} thus states that if in some path $P$ of $\T$ the $(k+2)$-th query to $i$ is ineffective, then it must be strongly ineffective. Similarly, if in some path $P$ of $\T$ the $(k+2)$-th query to $i$ is only-maximum (only-minimum, resp.) effective, and the types of $i$ different from the maximum (minimum, resp.) are separated at the $(k+2)$-th query or successive, then the $(k+2)$-th query is strongly only-maximum (only-minimum, resp.) effective. In order to stress that in $\T$ this property must be satisfied, we say that the corresponding mechanism is \emph{strongly almost $k$-limited}.

Lemma~\ref{lem:ineffective} and Lemma~\ref{lem:strongly} allow to conclude that a $k$-step OSP mechanism must be strongly almost $k$-limited. However, in such a mechanism we are still allowed to query agent $i$ more than $k+2$ times, or to have a $(k+2)$-th query that is neither a revelation nor an extremal query.  We next show how to transform $\T$ in order to achieve the desired reduction to a $k$-limited mechanism.
\begin{lemma}[Revealing at the $(k+2)$-th query/1]
\label{lem:transform1}
 Let $\M$ be a $k$-step OSP mechanism that implements $(f,p)$ with implementation tree $\T$, and consider a path $P$ of $\T$ such that along $P$ agent $i$ is queried at least $k+2$ times, and the $(k+2)$-th query is strongly ineffective. Then, there is a $k$-step OSP mechanism $\M'$ that implements $(f,p)$ with an implementation tree $\T'$ such that along path $P$, agent $i$ is queried exactly $k+2$ times and the $(k+2)$-th query is a strongly ineffective revelation query.
\end{lemma}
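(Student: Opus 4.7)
The plan is to modify $\T$ locally at $u$, the node corresponding to the $(k+2)$-th query to $i$ along $P$; by hypothesis the query at $u$ is strongly ineffective, i.e., $f_i(t',\xii)=f_i(t'',\xii')$ and $p_i(t',\xii)=p_i(t'',\xii')$ for every $t',t''\in D_i(u)$ and every $\xii,\xii'$ available at $u$. First I would construct $\T'$ by excising the subtree of $\T$ rooted at $u$ and replacing it as follows: at $u$, install a revelation query creating one child $v_t$ for each $t\in D_i(u)$; then, below each $v_t$, recursively traverse the original subtree of $\T$ rooted at $u$, but whenever a node $w$ with $i(w)=i$ is encountered, bypass it by descending directly into the unique child of $w$ whose subdomain contains $t$. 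Leaves keep their outcomes and payments. By construction, along $P$ agent $i$ is queried exactly $k+2$ times in $\T'$: the $k+1$ queries above $u$ remain untouched, and $u$ itself becomes a strongly ineffective revelation query.

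Next I would verify that $\M'=(f,p,\T')$ implements the same pair $(f,p)$. For any profile $\b=(t,\ai)$ compatible with $u$, in $\T'$ agent $i$ reveals $t$ at $u$ reaching $v_t$, and thereafter every agent $j\neq i$ is queried in exactly the same sequence of histories as in $\T$: each $i$-query bypassed in $\T'$ would, in $\T$, have been answered consistently with $t$, so the path taken by all agents $j\neq i$ is identical and $f_j$ and $p_j$ coincide with $\M$. For $i$, strong ineffectiveness guarantees that $f_i$ and $p_i$ are constant across every profile available at $u$, so they agree with $\M$ as well.

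Then I would argue that $\M'$ is $k$-step OSP by showing that every $k$-step OSP constraint arising in $\T'$ is implied by one of $\T$. At $u$ itself (for $i$), both sides of \eqref{eq:osp_constraint} vanish because $f_i$ and $p_i$ are constant on the profiles available. At a node $w$ above $u$ with $i(w)=i$ placed at its $m$-th occurrence ($m\leq k+1$), I distinguish two cases: if $u\notin\mathcal{N}_k(w)$ the tree structure relevant to $w$ is unchanged and the constraints are literally the same as in $\T$; if $u\in\mathcal{N}_k(w)$, the refinement of the query at $u$ to a revelation forces every profile of $\Gamma^k_w(\a)$ to share $i$'s coordinate with $\a$, so the set of witness types $c_i$ appearing in \eqref{eq:osp_constraint} in $\T'$ shrinks to $\{a_i\}$, which is contained in the (potentially larger) set produced by the coarser query of $\T$. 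At a node $w'$ below $u$ in $\T'$ with $i(w')=j\neq i$, $w'$ is a copy of an original node $w$ in $\T$ restricted to the slice of profiles with $i$-coordinate equal to the $t$ labeling the branch $v_t$ above $w'$; since within this slice every skipped $i$-query is silent (all available profiles answer it identically), the $k$-step equivalence class of $w'$ is exactly the $t$-slice of the one at $w$, and the constraints at $w'$ are a subfamily of those at $w$. Nodes disjoint from the subtree rooted at $u$ are completely untouched.

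The hard part will be this final case analysis for $j\neq i$: the bypasses of $i$-queries below $u$ shorten paths in $\T'$ and consequently reshape the neighborhoods $\mathcal{N}_k(\cdot)$ used to define equivalence classes, so it is not immediately obvious that constraints in $\T'$ reduce to constraints in $\T$. The key observation that unlocks the argument is that an $i$-query cannot separate any two profiles that agree on $i$'s type; hence, once we are on a $t$-slice, removing the bypassed $i$-queries changes neither separations nor equivalence classes. With this monotonicity of constraints in place, all cases close and the lemma follows.
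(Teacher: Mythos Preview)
Your proposal is correct and follows essentially the same approach as the paper's proof: the construction of $\T'$ (install a revelation query at $u$ and hang pruned copies of the original subtree below each singleton child) and the verification strategy (show that every $k$-step OSP constraint of $\T'$ is already a constraint of $\T$) coincide. Your case analysis is in fact more thorough than the paper's --- you make explicit why refining the query at $u$ can only shrink $\Gamma^k_w(\cdot)$ for nodes $w$ above $u$, and why bypassing $i$-queries on a fixed $t$-slice leaves separations and equivalence classes for agents $j\neq i$ intact, both of which the paper dismisses as ``immediate to check''.
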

\begin{proof}
Let $u$ be the node of $P$ corresponding to the $(k+2)$-th query to $i$ along $P$, and for each child $v$ of $u$, let $\T_v$ be the subtree of $\T$ rooted at $v$. We define the implementation tree $\T'$ obtained from $\T$ by replacing each child $v$ of $u$ with nodes $v_t$ for each type $t \in D_i(v)$, each of them rooting a copy of $\T_v$ (pruned of the redundant queries to $i$ for types $t' \neq t$). 

 It is immediate to check that the mechanism $\M'$ implemented by $\T'$ implements $(f, p)$, and, along $P$, agent $i$ is queried $k+2$ times, with the $(k+2)$-th query being a strongly ineffective revelation query. We next show that $\M'$ is $k$-step OSP. To this aim, we need to prove that all the $k$-step OSP constraints are satisfied. As for those constraints defined by nodes $u'$ such that $i(u') \neq i$, it is immediate to check that they are trivially satisfied, since they are satisfied also in the original mechanism and no new separation has been introduced 
 for $i(u')$.

 Similarly, the validity of all constraints about profiles that have been separated in $\T'$ by $i$ at a node different from $u$ is inherited from the original mechanism. It is only left to check that the constraints introduced at node $u$ are satisfied. Since the new queries at $u$ in $\T'$ are for singletons, then the new constraints require that for every child $v$ of $u$ in $\T$ such that $|D_i(v)| \geq 2$, every $t, t' \in D_i(v)$, and every $\xii$ and $\yi$ available at $u$, it holds that
\begin{align}
	p_i(t, \xii) - p_i(t', \yi) & \leq t'(f_i(t, \xii) - f_i(t', \yi)) \label{eq:ospTnew1}\\
	p_i(t', \yi) - p_i(t, \xii) & \leq t(f_i(t', \yi) - f_i(t, \xii))\label{eq:ospTnew2}
\end{align}
 However, since the query at $u$ is strongly ineffective, then it must be the case that
 $f_i(t, \xii) = f_i(t', \yi)$ and $p_i(t, \xii) = p_i(t', \yi)$, and thus these constraints are clearly satisfied.
\end{proof}

\begin{lemma}[Revealing at the $(k+2)$-th query/2]
\label{lem:transform2}
 Let $\M$ be a $k$-step OSP mechanism that implements $(f,p)$ with implementation tree $\T$, and consider a path $P$ of $\T$ such that along $P$ agent $i$ is queried at least $k+2$ times, and the $(k+2)$-th query is strongly only-maximum (only-minimum) effective. Then, there is a $k$-step OSP mechanism $\M'$ that implements $(f,p)$ with an implementation tree $\T'$ such that along path $P$, agent $i$ is queried exactly $k+2$ times and the $(k+2)$-th query is a strongly only-maximum (only-minimum, resp.) revelation query.
\end{lemma}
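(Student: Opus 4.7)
The plan is to mirror the construction used in the proof of Lemma~\ref{lem:transform1}, and to argue that the new constraints introduced at $u$ in $\T'$ are implied by the original $k$-step OSP constraints of $\M$ together with the strong only-maximum (only-minimum, resp.) property of the $(k+2)$-th query. Let $u$ be the node of $P$ corresponding to the $(k+2)$-th query to $i$, set $t = \max D_i(u)$, and let $f_0, p_0$ denote the common outcome and payment that the strong only-maximum hypothesis assigns to every $t' \in D_i(u)\setminus\{t\}$ for every $\xii$ available at $u$. Build $\T'$ by replacing each child $v$ of $u$ with singletons $v_\tau$, one per type $\tau \in D_i(v)$, each rooting a copy of $\T_v$ pruned of the subsequent (now trivial) queries to $i$. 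As in Lemma~\ref{lem:transform1}, the leaves and their $(f,p)$-labels are preserved, so $\T'$ implements $(f,p)$ and along $P$ agent $i$ is queried exactly $k+2$ times with the $(k+2)$-th being a revelation.

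The $k$-step OSP constraints at nodes other than $u$ are inherited from $\M$ by the same argument as in Lemma~\ref{lem:transform1}: above $u$ the structure is unchanged, while below $u$ the pruning can only shrink the relevant $k$-step equivalence classes. At $u$, since the query is now a revelation on $i$'s type, any $(c_i, \ci) \in \Gamma^k_u(\a)$ must satisfy $c_i = a_i$ (otherwise it would diverge from $\a$ already at $u$). The new constraint therefore reduces to
\[
 p_i(b_i, \bi) - p_i(a_i, \ai) \leq a_i\bigl(f_i(b_i, \bi) - f_i(a_i, \ai)\bigr),
\]
for every $a_i \neq b_i$ in $D_i(u)$ and every $\ai, \bi$ such that $(a_i, \ai)$ and $(b_i, \bi)$ are compatible with $u$.

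We verify this in three cases. If $a_i, b_i \neq t$, strong only-maximum collapses both sides to $0$. If $a_i = t$ and $b_i = t' \neq t$, look at the profiles $(t, \ai)$ and $(t', \ai)$ in $\M$: either they reach the same leaf, in which case $f_i(t, \ai) = f_0$ and $p_i(t, \ai) = p_0$ and the constraint is trivial; or they diverge at some node $u^*$, descendant of $u$ or equal to $u$, with $i(u^*) = i$. In the latter case, the $k$-step OSP constraint of $\M$ at $u^*$ applied to the separated pair $\a^* = (t, \ai), \b^* = (t', \ai)$ with $c_i = t \in \Gamma^k_{u^*}(\a^*)$, combined with the identities $f_i(t', \ai) = f_0, p_i(t', \ai) = p_0$ from strong only-maximum, yields exactly $p_0 - p_i(t, \ai) \leq t\bigl(f_0 - f_i(t, \ai)\bigr)$. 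The symmetric case $a_i = t' \neq t, b_i = t$ is handled identically, looking at $(t, \bi), (t', \bi)$ and using $c_i = t'$ at their separation node. The only-minimum case is entirely symmetric, replacing $\max$ by $\min$. The main obstacle is precisely this last step: one must recognise that whenever $(t, \cdot)$ and $(t', \cdot)$ distinguish in the mechanism, they are separated at some $u^*$, and that at such $u^*$ the original $k$-step OSP constraint of $\M$, once combined with the strong-effectiveness identities, already supplies the exact inequality required in $\T'$.
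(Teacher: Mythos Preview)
Your proof is correct and follows essentially the same route as the paper's. Both arguments perform the identical tree transformation as in Lemma~\ref{lem:transform1}, note that the only constraints requiring work are those at $u$ involving the extreme type, and discharge them by pairing the max-type profile $(t,\cdot)$ with a non-max profile sharing the same $\mathbf{x}_{-i}$, locating their separation node in the original tree, invoking the original $k$-step OSP constraint there, and then substituting in the identities $f_i(t',\cdot)=f_0$, $p_i(t',\cdot)=p_0$ supplied by strong only-maximum effectiveness. The paper writes this as a single case (its $t'$ is your $t$) and obtains both directions~\eqref{eq:ospT1}--\eqref{eq:ospT2} at once, whereas you split into Cases~2 and~3; but the content is the same.
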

\begin{proof}
The transformation and the proof are exactly the same as for Lemma~\ref{lem:transform1}, except that we need to check that the constraints \eqref{eq:ospTnew1} and \eqref{eq:ospTnew2} 
 are satisfied even if $t' = \max \{t \in D_i(u)\}$ ($t' = \min \{t \in D_i(u)\}$, resp.) and $f_i(t', \yi) \neq f_i(t, \xii)$.

 However, since the query at $u$ is strongly only-maximal (only-minimal, resp.) effective, then $f_i(t, \yi) = f_i(t, \xii) \neq f_i(t', \yi)$. However, since $(t, \yi)$ and $(t', \yi)$ differ only in the type of $i$, and they have a different outcome, then they must be separated by $i$. Since $\M$ is $k$-step OSP, then the following $k$-step OSP constraints hold:
 \begin{align}
  p_i(t, \yi) - p_i(t', \yi) & \leq t'(f_i(t, \yi) - f_i(t', \yi)); \label{eq:ospT1}\\
  p_i(t', \yi) - p_i(t, \yi) & \leq t(f_i(t', \yi) - f_i(t, \yi)). \label{eq:ospT2}
 \end{align}
 However, since the query at $u$ is strongly only maximal (only-minimal, resp.), then $p_i(t, \xii) = p_i(t, \yi)$ and $f_i(t, \xii) = f_i(t, \yi)$. Then \eqref{eq:ospTnew1} and \eqref{eq:ospTnew2} follow  from \eqref{eq:ospT1} and \eqref{eq:ospT2}, respectively.
\end{proof}

Theorem~\ref{thm:klimited} then follows from the repeated applications of transformations described in Lemma~\ref{lem:transform1} and Lemma~\ref{lem:transform2}, since for each remaining path $P$ it must be the case that either agent $i$ is queried at most $k+1$ times, or it is queried $k+2$ times, with the $(k+2)$-th query either being a revelation or an extremal query. 

\section{Cycle-Monotonicity for $k$-step OSP}
We will next show that $k$-step OSP-ness can be stated in terms of suitable weighted graphs and their cycles.

\subsection{Definition of Cycle-Monotonicity}
For each agent $i$, for each path $P$ from the root of $\T$ to a node $u$ corresponding to the $(k+2)$-th query to $i$, if it exists, or to a leaf otherwise, we partition the domain $D_i$ of types of $i$ in classes $D_{i, P}^1, \ldots, D_{i, P}^{\ell}$, where $\ell = \min\{k+2, q+1\}$, $q$ denoting the number of queries to $i$ along this path; $D_{i, P}^j$, for $j < \ell$ contains all types of $i$ available at the $j$-th query at $i$, but not available at the $(j+1)$-th query to $i$ along $P$, and $D_{i, P}^{\ell} = D_i \setminus \bigcup_{j = 1}^{\ell-1} D_{i, P}^j$.

Suppose that, along the path $P$, $i$ is queried at least $k+2$ times, and let $u$ be the node corresponding to the $(k+2)$-th query at $i$. Let $D_{i, P}^{k+2, \notE}$ be a maximal set of types of $i$ available at $u$ that receive the same outcome for each possible fixed profile of the remaining agents available at $u$, i.e., $D_{i, P}^{k+2, \notE} \subseteq D_{i, P}^{k+2}$ such that for every $t, t' \in D_{i, P}^{k+2, \notE}$ and every $\xii$ available at $u$ we have that $f_i(t, \xii) = f(t', \xii)$, and for every $t \in D_{i, P}^{k+2, \notE}$ and every $t'' \in D_{i, P}^{k+2, E} = D_{i, P}^{k+2} \setminus D_{i, P}^{k+2, \notE}$ there is $\yi$ available at $u$ for which $f_i(t, \yi) \neq f_i(t'', \yi)$. Note that if the $(k+2)$-th query is (strongly) ineffective, we have that $D_{i, P}^{k+2, \notE} = D_{i, P}^{k+2}$. Moreover, if the $(k+2)$-th query is (strongly) only-maximum or only-minimum effective, then $D_{i, P}^{k+2, E}$ contains only the one extremal type for which the $(k+2)$-th query is effective.

Given this partition of the type domain of agent $i$, we then partition the profiles in equivalence classes as follows: for each agent $i$, for each path $P$ from the root of $\T$ to a node $u$ corresponding to the $(k+2)$-th query to $i$, if it exists, or to a leaf otherwise, we define the equivalence classes 
\begin{align}\label{eq:Lambdas}
\Lambda_{i, P}^j(0) & = \{(x_i, \xii) \mid x_i \in D_{i, P}^j, \xii \text{ available at } u, f_i(x_i, \xii) = 0\}\\ \Lambda_{i, P}^j(1) & = \{(x_i, \xii) \mid x_i \in D_{i, P}^j, \xii \text{ available at } u, f_i(x_i, \xii) = 1\},
\end{align}
 for $j = q+1$ if in $P$ agent $i$ is queried at most $q \leq k+1$ times, and $j \in \{(k+2, E), (k+2, \notE)\}$ otherwise.
%
Let $\mathbf{\Lambda}_i$ be the set that contains all the equivalence classes defined for agent $i$.
Moreover, we abuse notation and we set $f_i(\Lambda_{i, P}^j(0)) = 0$ and $f_i(\Lambda_{i, P}^j(1)) = 1$. We are now ready to define the graphs of interest for $k$-step OSP. 

\begin{definition}(\vgraph)
	Let $f$ be a social choice function and $\T$ be an implementation tree. We define for every agent $i$, the {\em \vgraph} $\verk$ with $\mathbf{\Lambda_i}$ as the set of vertices, and an edge $e$ between $\Lambda, \Lambda' \in \mathbf{\Lambda_i}$ exists if there are $\x \in \Lambda$ and $\x' \in \Lambda'$ such that $\x$ and $\x'$ have been separated in $\T$ by $i$. We set $w(e)= \min_{x_i \mid \exists \x_{-i} \colon (x_i, \x_{-i}) \in \Lambda} x_i (f_i(\Lambda') - f_i(\Lambda))$.
\end{definition}

We say that the $k$-step OSP cycle monotonicity ($k$-step OSP CMON) property holds  if, for all $i$, the graph $\verk$ does not have negative weight cycles.

\subsection{Proving $k$-step OSP CMON}
We begin by observing that if the mechanism is $k$-limited, the equivalence classes \eqref{eq:Lambdas} can be used for redefine some of the $k$-step OSP constraints given in \eqref{eq:osp_constraint} as follows. Towards this end, we define the function $\mathbf{\lambda}_i^{-1}$ that on input a profile $\x$ returns the equivalence class $\Lambda \in \mathbf{\Lambda}_i$ to which $\x$ belongs. 
\begin{lemma}[Rewriting the $k$-Step OSP Constraints]
\label{lem:osp_constraint_new}
 If a $k$-limited mechanism $\M$ with implementation tree $\T$ is $k$-step OSP, then for all $i$, all vertices $u$ of $\T$ such that $i=i(u)$, and every pair of profiles $\a$ and $\b$ separated at $u$ such that $\lambda^{-1}(\a) \neq \lambda^{-1}(\b)$, the following holds:
    \begin{equation}
     \label{eq:osp_constraint_new}
     p_i(\b) - p_i(\a) \leq \min_{c \mid \exists \ci \colon (c, \ci) \in \lambda^{-1}_i(\a)} c\left(f_i(\b)-f_i(\a)\right).
    \end{equation}
\end{lemma}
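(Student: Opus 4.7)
The plan is to derive~\eqref{eq:osp_constraint_new} from the standard $k$-step OSP constraint~\eqref{eq:osp_constraint} of Lemma~\ref{lem:def} by producing, for each type $c$ such that $(c,\ci)\in\lambda^{-1}_i(\a)$ for some $\ci$, a single instance of~\eqref{eq:osp_constraint} that yields $p_i(\b)-p_i(\a)\leq c\bigl(f_i(\b)-f_i(\a)\bigr)$; taking the minimum over all admissible $c$ then gives~\eqref{eq:osp_constraint_new}. The central device is the \emph{surrogate profile} $\y:=(c,\ai)$, which I would use in place of $\a$ whenever $c\neq a_i$.

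First I would establish that $f_i(\y)=f_i(\a)$ and $p_i(\y)=p_i(\a)$. Let $u^\star$ be the endpoint of the truncated path $P_\a$ that defines $\lambda^{-1}_i(\a)$, i.e., either the $(k+2)$-th query to $i$ on $P_\a$ or the leaf reached by $\a$ if no such query exists. When $u^\star$ is a leaf, both equalities are immediate: $c$ and $a_i$ lie in the same terminal type class, so $\y$ enters the same branch as $\a$ at every query on $P_\a$ and reaches the same leaf. When $u^\star$ is a $(k+2)$-th query, I would invoke the $k$-limited structure from Theorem~\ref{thm:klimited}---the query at $u^\star$ is a strongly ineffective revelation, a strongly only-extremal effective revelation, or an only-extremal extremal query---together with Lemma~\ref{lem:ineffective} and Lemma~\ref{lem:strongly}, to conclude that any two types in $D_{i,P_\a}^{k+2,\notE}$ paired with any $\ai$ available at $u^\star$ induce the same outcome and payment for $i$.

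Second, I would verify that $\y$ and $\b$ are separated at $u$. For $u$ strictly preceding $u^\star$ on $P_\a$ this is automatic: $c$ and $a_i$ both survive past $u$, so $\y$ enters the same branch as $\a$, which diverges from $\b$ at $u$ by assumption. For $u=u^\star$, the hypothesis $\lambda^{-1}(\a)\neq\lambda^{-1}(\b)$ combined with the $k$-limited structure forces $b_i$ to be the extremal type singled out by the $(k+2)$-th query, so $\y$ (with non-extremal $c$) and $\b$ still land in different branches. With these two ingredients I would apply~\eqref{eq:osp_constraint} to the pair $(\y,\b)$ at $u$, using the ``true type'' $\y\in\Gamma_u^k(\y)$ (trivially, as a profile is $k$-step unseparated from itself), to obtain $p_i(\b)-p_i(\y)\leq c\bigl(f_i(\b)-f_i(\y)\bigr)$; substituting $f_i(\y)=f_i(\a)$ and $p_i(\y)=p_i(\a)$ then gives the desired bound for this particular $c$, and taking the minimum closes the argument.

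The main obstacle is the case $u=u^\star$ under a revelation query, where the equivalence class is defined at the very node of separation and $\y$ may itself be separated from $\a$ at $u$; a naive application of~\eqref{eq:osp_constraint} at $(\a,\b)$ then tells us nothing about $c\neq a_i$. Overcoming this hinges on the ``strong'' refinements of Lemma~\ref{lem:ineffective} and Lemma~\ref{lem:strongly}, which rule out any variation of outcome or payment for $i$ across $D_{i,P_\a}^{k+2,\notE}$ and thereby allow the surrogate $\y$ to faithfully play the role of $\a$ on the left-hand side of the OSP inequality.
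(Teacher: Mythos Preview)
Your proposal is correct and follows essentially the same surrogate-profile strategy as the paper: form $\y=(c,\ai)$, argue from the $k$-limited structure that $f_i(\y)=f_i(\a)$ and $p_i(\y)=p_i(\a)$, check that $\y$ and $\b$ are separated at $u$, and then apply the basic $k$-step OSP constraint~\eqref{eq:osp_constraint} at the pair $(\y,\b)$. The paper organizes the argument slightly differently---it splits into the cases ``$i$ queried at most $k+1$ times'' versus ``at least $k+2$ times'' and works only with the single minimizing type $c^*$ rather than an arbitrary $c$---but the mechanism is the same; one small imprecision in your sketch is that at $u=u^\star$ it may be $a_i$ (not $b_i$) that is the extremal type, in which case $D_{i,P}^{k+2,E}=\{a_i\}$ forces $c=a_i$ and the surrogate is just $\a$ itself, so the argument trivializes.
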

\begin{proof}
Let $\tilde{P}$ be the path from the root of $\T$ to the leaf $\ell$ corresponding to profile $\a$.
Assume first that agent $i$ has been queried $q \leq k+1$ times in $\tilde{P}$. This implies that $\lambda^{-1}_i(\a) = \Lambda^{q+1}_{i,\tilde{P}}(\a)$ and that $\Gamma^k_u(\a)$ contains all profiles available at $\ell$. In turns, $(t, \ai) \in \Gamma^k_u(\a)$ f or every $t \in D^{q+1}_{i, \tilde{P}}$, and no profile $\x$ such that $x_i \notin D^{q+1}_{i, \tilde{P}}$ is in $\Gamma^k_u(\a)$ since $x_i$ has been separated from $a_i$ in one of the $q$ queries to $i$. Note that $f_i(t, \ai) = f_i(\a)$ for every $t \in D^{q+1}_{i, \tilde{P}}$, since these profiles are never separated by any agent. Observe, moreover, that $\Lambda^{q+1}_{i,\tilde{P}}(\a)$ contains all profiles available at $\ell$ with outcome $f_i(\a)$. Hence, as discussed above, it must contain at least $(t, \ai)$ for every $t \in D^{q+1}_{i, \tilde{P}}$. Hence, from Lemma~\ref{lem:def}, we have that
\begin{align*}
 p_i(\b) - p_i(\a) & \leq \min_{c \mid \exists \ci \colon (c, \ci) \in \Gamma^k_u(\a)} c\left(f_i(\b)-f_i(\a)\right)\\
 & = \min_{c \in D^{q+1}_{i, \tilde{P}}} c\left(f_i(\b)-f_i(\a)\right)\\
 & = \min_{c \mid \exists \ci \colon (c, \ci) \in \lambda^{-1}(\a)} c\left(f_i(\b)-f_i(\a)\right).
\end{align*}

Consider now the case in which agent $i$ is queried at least $k+2$ times $\tilde{P}$. Let $P$ denote the subpath of $\tilde{P}$ containing all the nodes from the root to the node $v$ in which $i$ receives the $(k+2)$-th query. Note that $a_i \in D_{i, P}^{k+2}$.
Suppose that $f_i(\b) \geq f_i(\a)$ (the proof for $f_i(\b) < f_i(\a)$ is symmetric and hence omitted). Hence, the minimum in the r.h.s. of \eqref{eq:osp_constraint_new} is achieved in correspondence of $c^* = \min \{c \mid \exists \ci \colon \c = (c, \ci) \in \lambda^{-1}_i(\a)\}$.

We next show that since $\M$ is $k$-limited then $\c^* = (c^*, \ai)$ also belongs to $\lambda^{-1}_i(\a)$. Since, for some $\ci$, $(c^*, \ci) \in \lambda^{-1}_i(\a)$, then (i) $f_i(c^*, \ci) = f_i(\a)$ and (ii) $c^*$ has not been separated by $a_i$ within the first $k+1$ queries to $i$. Thus, we have that $c^* \in D_{i, P}^{k+2}$. Since $\M$ is $k$-limited, then the $(k+2)$-th query is either strongly ineffective, and thus $a_i, c^* \in D_{i, P}^{k+2, \notE}$, or (strongly) only-maximum or only-minimum effective, and thus $D_{i, P}^{k+2, E}$ is a singleton, and hence either $c^* = a_i \in D_{i, P}^{k+2, E}$ or $a_i, c^* \in D_{i, P}^{k+2, \notE}$. The claim then follows since $\ai$ is available at node $v$ (since it is still available at the leaf $\ell$ in the subtree rooted at $v$).

We now argue that $b_i$ does not belong to the same partition of $a_i$ and $c^*$ at the history in which the $(k+2)$-th query is performed. Since the mechanism is $k$-limited and $\lambda^{-1}(\a) \neq \lambda^{-1}(\b)$, then either $\a$ and $\b$ have been separated by $i$ in the first $k+1$ queries (i.e., $b_i \not\in  D_{i, P}^{k+2}$) or they have been separated in one of the following queries to $i$, and either $a_i= c^* \in D_{i, P}^{k+2, E}$ and $b_i \in D_{i, P}^{k+2, \notE}$, or $a_i, c^* \in D_{i, P}^{k+2, \notE}$ and $b_i \in D_{i, P}^{k+2, E}$.

Thus, since the mechanism is $k$-limited, then in all cases we have that $\c^*$ and $\b$ also have been separated at the same node $u'$ in which $\a$ and $\b$ have been separated.
Hence, we have the following $k$-step OSP constraints corresponding to $\c^*$ and $\b$:
\begin{equation}
\label{eq:constraint_rep}
 p_i(\b) - p_i(\c^*) \leq \min_{c \mid \exists \ci \colon (c, \ci) \in \Gamma^k_{u'}(\c^*)} c\left(f_i(\b)-f_i(\c^*)\right).
\end{equation}
However, since the mechanism is $k$-limited and $\c^* \in \lambda^{-1}(\a)$, we have that $f_i(\a) = f_i(\c^*)$ and $p_i(\a) = p_i(\c^*)$. Hence, \eqref{eq:constraint_rep} can be rewritten as follows:
\begin{equation}
\label{eq:constraint_rep2}
 p_i(\b) - p_i(\a) \leq \min_{c \mid \exists \ci \colon (c, \ci) \in \Gamma^k_{u'}(\c^*)} c\left(f_i(\b)-f_i(\a)\right).
\end{equation}
As above, the minimum in the r.h.s. of \eqref{eq:constraint_rep2} is achieved for $c' = \min \{c \mid \exists \ci \colon (c, \ci) \in \Gamma^k_{u'}(\c^*)\}$. Since $\c^* \in \Gamma^k_{u'}(\c^*)$, then either $c' = c^*$ or $c' < c^*$ implying \eqref{eq:osp_constraint_new}. 
\end{proof}



Secondly, we note that the {\vgraph}s of $k$-limited mechanisms enjoy the following useful property.
\begin{lemma}[Sticky Edges]
 \label{lem:sep_profiles}
 Suppose that $\verk$ is the {\vgraph} of a $k$-limited mechanism and there is an edge $(\Lambda, \Lambda')$ in $\verk$, then for every $\x \in \Lambda$, and every $\x' \in \Lambda'$, we have that $\x$ and $\x'$ have been separated by $i$ at the same node in $\T$.
%
\end{lemma}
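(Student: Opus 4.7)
The goal is to show that the separation node demanded by the edge between $\Lambda = \Lambda_{i,P}^{j}(b)$ and $\Lambda' = \Lambda_{i,P'}^{j'}(b')$ is determined purely by the two class labels and then to verify that every pair in $\Lambda\times\Lambda'$ diverges at exactly that node. I would fix witnesses $\y \in \Lambda$ and $\y' \in \Lambda'$ whose existence produces the edge and let $v$ be a node at which they are separated by $i$; the task then reduces to (a) showing that $v$ depends only on $(\Lambda,\Lambda')$, and (b) verifying that every other pair in $\Lambda\times\Lambda'$ traverses the same two root-to-$v$ paths as $\y$ and $\y'$.

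The bulk of the work is a case analysis on the relationship of $\Lambda$ and $\Lambda'$. If $P = P'$ but $j \neq j'$, then $v$ is naturally the query to $i$ along $P$ that first distinguishes the two subdomains $D_{i,P}^{j}$ and $D_{i,P}^{j'}$. If $P \neq P'$, the picture is subtler because $\y$ may leave $P$ at its own $(j+1)$-th query to $i$ and $\y'$ may likewise leave $P'$, so $v$ is the earliest among the node where $P$ and $P'$ branch apart, the node where $\y$ departs from $P$ via a query to $i$, and the symmetric one for $\y'$; the requirement $i(v)=i$ imposed by the edge then pins it down uniquely. The remaining case $P=P'$, $j=j'$, $b\ne b'$ is handled by appealing to the $k$-limited structure at $u_P$: each of the allowed forms of the $(k+2)$-th query (strongly ineffective revelation, strongly only-max/only-min effective revelation, or only-max/only-min effective extremal) either makes one of the outcome classes empty, or routes all relevant profiles into the same child of $u_P$ so that no separation by $i$ is possible (contradicting the edge), or forces $v = u_P$ outright.

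Once $v$ is pinned down, the conclusion that an arbitrary $\x\in\Lambda$ reaches $v$ along the same path as $\y$ is routine: membership $x_i\in D_{i,P}^{j}$ forces $\x$ to agree with $P$ on every prior query to $i$, while the availability of $\xii$ at $u_P$ forces $\x$ to agree with $P$ on every prior query to agents other than $i$. Symmetrically for $\x'\in\Lambda'$, and since $x_i$ lies in the same subdomain at $v$ as $y_i$ (and $x'_i$ as $y'_i$), the two profiles diverge at $v$. The main obstacle is the last ($b\ne b'$) case, which demands a careful case-by-case check against the $k$-limited taxonomy of queries at $u_P$ in order to rule out the various ``degenerate'' configurations; the $P\ne P'$ case is also non-trivial in that one must show that any candidate separation witness really is the earliest among several admissible sites before the routine path-tracing argument concludes the proof.
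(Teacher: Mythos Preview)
Your plan---fix witnesses $\y,\y'$, pin down the separation node, then route every other pair through it---is exactly what the paper does. The case analysis you propose is not wrong, but it is built on a misreading of the classes that makes the work look much harder than it is.

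The classes $\Lambda_{i,P}^{j}(b)$ are defined \emph{only} for the terminal index $j$: either $j=q+1$ when $i$ is queried at most $k+1$ times along $P$, or $j\in\{(k+2,E),(k+2,\notE)\}$ otherwise. Hence every $\y\in\Lambda_{i,P}^{j}(b)$ follows $P$ all the way to its endpoint: $y_i$ survives every $i$-query on $P$, and $\y_{-i}$ is by definition available at the endpoint. There is no ``$(j+1)$-th query to $i$'' at which $\y$ could leave $P$. Your three candidate separation sites in the $P\neq P'$ case therefore collapse to one --- the unique node $u$ where $P$ and $P'$ branch apart --- and this node must be an $i$-query because $\y_{-i}$ and $\y'_{-i}$ traverse the common prefix of $P,P'$ identically.

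Your $P=P',\,j=j',\,b\neq b'$ case is in fact always vacuous. If $j=q+1$, the types in $D_{i,P}^{q+1}$ are never separated by $i$; if $j=(k+2,E)$, the type set is a singleton; if $j=(k+2,\notE)$, the $k$-limited taxonomy forces either the two outcome classes not to coexist (strongly ineffective or strongly only-extreme revelation) or the non-extreme types never to be separated by $i$ (only-extreme extremal query). So nothing ``forces $v=u_P$ outright''; there is simply no edge.

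Once these simplifications are in place, your argument is the paper's: $u$ lies on both $P$ and $P'$; every $x_i$ in $\Lambda$'s type piece goes to the $P$-child of $u$ while every $x'_i$ in $\Lambda'$'s goes to the $P'$-child (with the $E/\notE$ split handling the case $u$ is the $(k+2)$-th query); and every $\xii,\xii'$ is available at $u$ because it is available at the later endpoint. The paper just argues this directly, without casing on $P$ versus $P'$.
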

\begin{proof}
 Since the edge $(\Lambda, \Lambda')$ exists in $\verk$, then there must be $\y \in \Lambda$ and $\y' \in \Lambda'$ such that $\y$ and $\y'$ have been separated at some node $u$ by $i$.

 Note that for every $\x \in \Lambda$ and $\x' \in \Lambda'$, it is impossible that both $x_i \in D_{i, P}^{k+2, \notE}$ and $x'_i \in D_{i, P}^{k+2, \notE}$, since in a $k$-limited mechanism if these two profiles had been separated by $i$ at the $(k+2)$-th query, then this must be either strongly ineffective or (strongly) only-maximum or only-minimum effective. However, in all these cases $f_i(\x) = f_i(\x')$, and thus $\x$ and $\x'$ cannot belong to different equivalence classes.

 Hence, since the mechanism is $k$-limited, then either $u$ is one of the first $k+1$ nodes in which $i$ is queried, or it is a (strongly) only-maximum (only-minimum, resp.) effective $(k+2)$-th query and the largest (smallest, resp.) among $y_i$ and $y'_i$ belongs to $D_{i, P}^{k+2, E}$, while the other is in $D_{i, P}^{k+2, \notE}$.

 Moreover, by definition of $\Lambda$, for every $\x \in \Lambda$, $\xii$ is still available at $u$ and $x_i$ has not been separated from $y_i$ by $i$ in the first $k+1$ queries, and if they have been separated at the $(k+2)$-th query, then this query is strongly only-maximal (only-minimal, resp.) effective and either $x_i = y_i$ or neither $x_i$ nor $y_i$ are the extreme for which this query is effective. The same holds for $\x'$ with respect to $\y'$. Hence, $\x$ and $\x'$ must have been separated at $u$.
\end{proof}

We then have the following theorem.
\begin{theorem}\label{thm:rev_cmon}
    A mechanism $\M$ with implementation tree $\T$ is $k$-step OSP for a social function $f$ on finite domains if and only if it is $k$-limited and $k$-step OSP CMON holds.
\end{theorem}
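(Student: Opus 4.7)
The plan is to reduce to $k$-limited mechanisms (by Theorem~\ref{thm:klimited}) and then establish the equivalence via a cycle-monotonicity argument. Throughout, I use that, in a $k$-limited mechanism, the payment $p_i(\Lambda)$ is well-defined for each equivalence class $\Lambda \in \mathbf{\Lambda}_i$: by Lemma~\ref{lem:ineffective} and Lemma~\ref{lem:strongly}, all profiles of $\Lambda$ receive the same $i$-outcome and $i$-payment at their respective leaves.

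For the forward direction, I would take an arbitrary cycle $\Lambda_1 \to \Lambda_2 \to \cdots \to \Lambda_n \to \Lambda_1$ in $\verk$. By Lemma~\ref{lem:sep_profiles}, every edge $e_j = (\Lambda_j, \Lambda_{j+1})$ witnesses the simultaneous separation by $i$ of all pairs of representatives of $\Lambda_j$ and $\Lambda_{j+1}$ at a common node. Applying Lemma~\ref{lem:osp_constraint_new} to any such pair yields the rewritten inequality $p_i(\Lambda_{j+1}) - p_i(\Lambda_j) \leq w(e_j)$. Summing telescopically around the cycle gives $0 \leq \sum_j w(e_j)$, proving that $\verk$ has no negative cycles.

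For the reverse direction, the plan is the standard cycle-monotonicity construction: since $\verk$ has no negative cycles, shortest-path distances from a chosen source within each weakly connected component (with an arbitrary offset between components) produce payments $p_i(\Lambda)$ with $p_i(\Lambda') - p_i(\Lambda) \leq w(e)$ along every edge $(\Lambda, \Lambda')$. Extending via $p_i(\x) = p_i(\lambda_i^{-1}(\x))$, I then verify the $k$-step OSP constraints in the form of Lemma~\ref{lem:def} for every pair $\a, \b$ separated at a node $u$ with $i = i(u)$ and every $(c,\ci) \in \Gamma^k_u(\a)$. When $\lambda_i^{-1}(\a) = \lambda_i^{-1}(\b)$ the constraint is trivial, and when $c$ lies in the type set of $\lambda_i^{-1}(\a)$, the inequality $p_i(\b) - p_i(\a) \leq w(e) \leq c(f_i(\b) - f_i(\a))$ follows from the definition of $w$ as a minimum of products.

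The hard part will be the residual case $(c,\ci) \in \Gamma^k_u(\a) \setminus \lambda_i^{-1}(\a)$. By the $k$-limited definition, this can only happen at the first query $u = u_1$ to $i$ along the path of $\a$, with $c$ being the unique extremal type (the maximum, or minimum, of $D_i(u_{k+2})$) isolated by an only-maximum or only-minimum effective $(k+2)$-th query. The prefix or suffix property of $D_i(u_{k+2})$ then places $c$ strictly between the extremes of $\lambda_i^{-1}(\a)$ and of $\lambda_i^{-1}(\b)$ along the real line. The plan is to compose the CMON-enforced inequalities along the path $\lambda_i^{-1}(\a) \to \Lambda_c \to \lambda_i^{-1}(\b)$, where $\Lambda_c$ denotes the equivalence class of $c$ (the two edges exist by the separations at the $(k+2)$-th query and at $u_1$, respectively), and to pair them with the reverse edges to deduce the tight bound $p_i(\b) - p_i(\a) \leq c(f_i(\b) - f_i(\a))$. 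Handling cleanly both signs of $f_i(\b) - f_i(\a)$ and both the only-maximum and only-minimum subcases is the most delicate step of the argument.
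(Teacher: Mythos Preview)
Your proposal follows the same overall architecture as the paper: forward via Theorem~\ref{thm:klimited}, Lemma~\ref{lem:sep_profiles} and Lemma~\ref{lem:osp_constraint_new} summed telescopically around a putative negative cycle; reverse via shortest-path payments in $\verk$ (the paper adds an auxiliary source $\omega$ with zero-weight edges to every class, which is equivalent to your per-component construction).

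Where you diverge from the paper is in the reverse direction. The paper simply asserts that the shortest-path inequality $\shp(\omega,\Lambda')-\shp(\omega,\Lambda)\le w(\Lambda,\Lambda')$ already yields $p_i(\b)-p_i(\a)\le c_i(f_i(\b)-f_i(\a))$ for every $\c\in\Gamma_u^k(\a)$. You correctly observe that this is only immediate when $c_i$ lies in the type-set of $\lambda_i^{-1}(\a)$, and you isolate the residual case (first query $u=u_1$, $c_i$ the unique effective extreme of $D_i(u_{k+2})$) where it need not. That is a genuine gap in the paper's argument that you are right to flag.

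However, your proposed patch of routing through $\Lambda_c$ and ``pairing with reverse edges'' does not close the gap by itself. Take $k=0$, a single agent with $D_i=\{1,2,3,10\}$, $u_1$ separating $\{10\}$ from $\{1,2,3\}$, and $u_2$ the only-maximum effective extremal query separating $\{3\}$ from $\{1,2\}$; set $f_i(1)=f_i(2)=1$ and $f_i(3)=f_i(10)=0$. This tree is $0$-limited and $\verk$ has no negative cycle, yet the shortest-path payments give $p(\Lambda_{10})-p(\Lambda_{\notE})=-2$, while the $0$-step OSP constraint at $u_1$ with $a_i\in\{1,2\}$, $b_i=10$, $c_i=3$ demands $\le -3$. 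The route $\Lambda_{\notE}\to\Lambda_E(0)\to\Lambda_{10}$ you propose has weight $-2+0=-2$ as well, since here $\Lambda_E(1)$ is empty and $f_i(\Lambda_c)\ne f_i(\Lambda_a)$. Payments satisfying all $0$-step OSP constraints do exist for this $(f,\T)$ (e.g.\ $p(\Lambda_{\notE})=0$, $p(\Lambda_E(0))=p(\Lambda_{10})=-3$), so the statement is not contradicted, but they are not the shortest-path payments in $\verk$. A complete sufficiency proof therefore needs more than either the paper's one-line claim or your two-edge detour: one must argue directly that the \emph{full} $k$-step OSP constraint system (whose edge from $\Lambda_{\notE}$ to $\Lambda_b$ carries weight $-c_i$, not $-\max D_{i,P}^{k+2,\notE}$) has no negative cycle whenever $\verk$ does, which requires tracking how the residual edge interacts with the edges through $\Lambda_E$ around any cycle.
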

\begin{proof}
 If $\M$ is $k$-step OSP, then, by Theorem~\ref{thm:klimited}, it is $k$-limited.
 Now, suppose that there is a negative cycle in $\verk$.
	Let the negative-weight cycle be $C = (\Lambda^1, \Lambda^2,  \ldots, \Lambda^{h+1})$, with $\Lambda^{h+1}= \Lambda^1$.
%
%
By definition of \vgraph\ there are profiles $\a^j \in \Lambda^j$, for $j=1, \ldots, h$. By Lemma \ref{lem:sep_profiles}, 
each profile $\a^j$ is separated by $i$ from $\a^{j+1}$. Since $\lambda^{-1}(\a^{j}) \neq \lambda^{-1}(\a^{j+1})$, for each $j=1, \ldots, h$, then 
Lemma~\ref{lem:osp_constraint_new} implies that the following $k$-step OSP constraints must be satisfied:
{\allowdisplaybreaks
	\begin{align*}
		p_i(\a^2) - p_i{(\a^1)} & \leq \min_{c \mid \exists\ci \colon (c, \ci) \in \Lambda^1} c (f(\a^2) - f(\a^1)), \\
		p_i(\a^3) - p_i{(\a^2)} & \leq \min_{c \mid \exists\ci \colon (c, \ci) \in \Lambda^2} c(f(\a^3) - f(\a^2)), \\
		& \quad \cdots \\
		p_i(\a^h) - p_i{(\a^{h-1})} & \leq  \min_{c \mid \exists\ci \colon (c, \ci) \in \Lambda^{h-1}} c(f(\a^{h}) - f(\a^{h-1})), \\
		p_i(\a^1) - p_i{(\a^{h})} & \leq  \min_{c \mid \exists\ci \colon (c, \ci) \in \Lambda^k} c(f(\a^{1}) - f(\a^h)).
	\end{align*}}
	Since all above inequalities are satisfied, then it must be the case that the inequality achieved by summing up all above inequalities must be satisfied. However, observe that by summing up the left-hand side we achieve $0$ since each $p_i(\x)$ is added and subtracted once. Instead, the terms on the right-hand side are exactly the cost of edges of $C$, and thus they sum exactly to $w(C)$. Hence, we obtain the contradiction that $0 \leq w(C) < 0$.

 Consider now the case that $\M$ is $k$-limited and there are no negative cycles in $\verk$ for every $i$. We then show that there are payments $p$ such that $\M=(f, p)$ is $k$-step OSP with implementation tree $\T$.
	Fix $i$ and $\T$ and consider the graph $\verk$. 
%
	Augment $\verk$ with a
	node $\omega$ and edges $(\omega, \Lambda)$ for any $\Lambda \in \mathbf{\Lambda}_i$ each of
	weight $0$. Observe that $\omega$ does not belong to any
	cycle of the augmented $\verk$ since it has outgoing edges only.
	Therefore we can focus our attention on cycles of $\verk$. For any $\Lambda \in \mathbf{\Lambda}_i$, we let $\shp(\omega, \Lambda)$ be the length of the shortest path from $\omega$ to $\Lambda$ in the
	augmented \vgraph. Since $\verk$ is finite and does not have
	negative-weight cycles then $\shp(\omega,\Lambda)$ is well defined. It suffices, for
	all $\Lambda \in \mathbf{\Lambda}_i$, to set $p_i(\x) = \shp(\omega, \Lambda)$ for every $\x \in \Lambda$.
	Indeed, consider two profiles $\a$ and $\b$ separated by $i$ at $u$, and a profile $\c \in \Gamma_u^k(\a)$. If $\a$ and $\b$ belong to the same class $\Lambda$, then $p_i(\b) - p_i(\a) = 0$ and $f_i(\b) - f_i(\a) = 0$, and thus the $k$-step OSP constraint \eqref{eq:osp_constraint} is satisfied. If $\a$ and $\b$ belong to different classes $\Lambda$ and $\Lambda'$, then the edge $(\Lambda,\Lambda')$ in $\verk$. The fact that the shortest path from $\omega$ to $\Lambda$ followed by the edge $(\Lambda,\Lambda')$ is a path from $\omega$ to $\Lambda'$ implies, by shortest path definition, that $p_i(\b) - p_i(\a) = \shp(\omega,\Lambda') - \shp(\omega,\Lambda) \leq c_i(f_i(\b) - f(\a))$, and thus the $k$-step OSP constraint \eqref{eq:osp_constraint} is satisfied. 
\end{proof}

\section{Algorithmic Characterization}\label{sec:algo}
We next provide the algorithmic characterization for $k$-step OSP mechanisms. We will start by recalling the characterization for $k = \infty$. We will see that how this characterization can be easily extended to each $k \geq 0$.
%
As for $k = \infty$, the characterization has been provided by Ferraioli et al. \cite{wine21}. We will report here this result, since it will  be useful in what follows. To this aim, we say that an agent $i$ is \emph{revealable} at node $u$ under social choice function $f$ if either $f_i(\x) = 1$ for every $\x$ such that $x_i \in D_i(u)$ and $x_i < \max D_i(u)$, and $\x_{-i}$ is compatible with $u$, or $f_i(\x) = 0$ for every $\x$ such that $x_i \in D_i(u)$ and $x_i > \min D_i(u)$, and $\x_{-i}$ is compatible with $u$.

An algorithm $f \colon \times_{i = 1}^n D_i \rightarrow \out$ is \emph{two-way greedy implementable} if it can be implemented by a round-table mechanism with implementation tree $\T$ such that:
\begin{itemize}[nosep]
 \item at each node $u$ the agent $i(u)$ separates her domain $D_i(u)$ in two ordered subsets $L_i(u)$ and $R_i(u)$, i.e., $\max \{t \in L_i(u)\} < \min \{t \in R_i(u)\}$, such that at least one of the two is a singleton, i.e., it contains a single type, being the minimum in $D_i(u)$ (if $L_i(u)$ is a singleton) or the maximum (if $R_i(u)$ is a singleton). If $L_i(u)$ is a singleton, then the agent $i(u)$ receives outcome $1$ in every profile compatible with $L_i(u) \times D_{-i}(u)$, and we say that $i(u)$ interacts with the mechanism in a greedy fashion. If $R_i(u)$ is a singleton, then the agent $i(u)$ receives outcome $0$ in every profile compatible with $R_i(u) \times D_{-i}(u)$, and we say that $i(u)$ interacts with the mechanism in a reverse greedy fashion;
 \item each agent $i$ is not allowed to interleave interaction in a greedy fashion with interaction in a reverse greedy fashion until it is revealable. Specifically, agent $i$ at node $u$ is either revealable, or can interact with the mechanism in a greedy fashion if and only it $i$ interacted with the mechanism in a greedy fashion at every node $u'$ along the path between the root of $\T$ and $u$ such that $i = i(u')$.
\end{itemize}
Then we have the following theorem.
\begin{theorem}[\cite{wine21}]
\label{thm:wine}
There is an OSP mechanism implementing $f$ if and only $f$ is two-way greedy implementable.
\end{theorem}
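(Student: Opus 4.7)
The plan is to prove both implications from the round-table reformulation of OSP, which for $k = \infty$ is a special case of Theorem~\ref{thm:round-table}.

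For the \emph{if} direction, I would construct an obviously dominant truthful strategy together with supporting payments. At every node $u$ where agent $i = i(u)$ interacts in greedy fashion, the singleton $L_i(u) = \{\min D_i(u)\}$ is committed to outcome $1$ for every compatible opponent profile, so the truthful routing locks in outcome $1$ whenever $i$'s type equals $\min D_i(u)$; otherwise $i$ continues through further queries whose outcomes, by the non-interleaving restriction, can only commit her to outcome $1$ or leave her revealable. A symmetric analysis covers reverse greedy nodes. Consequently, the worst-case utility from following the truthful recommendation at $u$ dominates the best-case utility from any deviation, which would either commit $i$ to the opposite outcome or traverse a subtree structurally equivalent to the truthful one. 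Payments can then be supplied via OSP cycle monotonicity, by specializing Theorem~\ref{thm:rev_cmon} to $k = \infty$.

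For the \emph{only if} direction, I would start from a round-table OSP mechanism with implementation tree $\T$. Theorem~\ref{thm:ordered} (with $k = \infty$) already gives that at every internal node $u$ the partition of $D_i(u)$ is ordered into two sides $L_i(u)$ and $R_i(u)$. The central step is to show that one of these two sides is a singleton whose types are committed to a fixed outcome. I would argue this by contradiction: suppose $|L_i(u)|, |R_i(u)| \geq 2$; pick two types on the same side together with a separating type on the other side, and chain the OSP constraints from Lemma~\ref{lem:def} at $u$ and at the descendants where these same-side types eventually separate, using the binary-outcome constraint to derive a contradiction. The greedy versus reverse greedy classification is then read off from whether the singleton sits at $\min D_i(u)$ (outcome $1$) or at $\max D_i(u)$ (outcome $0$). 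The non-interleaving requirement would be established by propagating these commitments along the root-to-leaf path: once $i$ has taken a greedy-style branch, a subsequent reverse greedy query before $i$ is revealable would force one of her still-active types to receive both outcomes across opponent profiles for which the earlier OSP constraints demanded constancy.

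The main obstacle is the dichotomy step and its chained variant for non-interleaving. One must combine the local OSP constraint at $u$, which only relates profiles separated at $u$, with downstream constraints that separate same-side types, and exploit the binary structure of the outcome space to turn numerical inequalities into structural impossibilities. The subtlety is that outcomes may genuinely depend on opponent profiles on the non-singleton side, so the argument has to isolate the specific profiles that yield a contradiction without implicitly presupposing the two-way greedy conclusion one is trying to establish.
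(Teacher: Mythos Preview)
The paper does not prove this theorem; it is quoted from \cite{wine21} and used as a black box. So there is no in-paper argument to compare against, but your plan for the \emph{only if} direction has a genuine gap.

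You try to show that the \emph{given} OSP implementation tree $\T$ already consists of extremal queries, i.e., that at every node one side of the partition is a singleton. This is not what the theorem asserts, and it is false in general. The statement is that $f$ is two-way greedy \emph{implementable}: some (possibly different) tree $\T'$ with the two-way greedy structure implements $f$. A one-agent example already defeats your dichotomy step: with $D=\{1,2,3,4\}$, $f_i(t)=1$ iff $t\leq 2$, and a single query ``$t\in\{1,2\}$ or $t\in\{3,4\}$?'' together with payments $p_i=2.5$ for outcome $1$ and $0$ for outcome $0$, the mechanism is OSP yet neither side of the partition at the unique node is a singleton. Your chained-constraints contradiction cannot fire here (the same-side types are never separated downstream), and even when it does fire it would at best restrict the tree, not produce the required alternative implementation. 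A related issue is your reading of Theorem~\ref{thm:ordered} at $k=\infty$: once $i$ is queried again below $u$, the class $\Gamma_u^\infty(\a)$ collapses to profiles sharing $a_i$, so almost-orderedness degenerates to pointwise monotonicity on separated profiles, not to an ordered two-block partition of $D_i(u)$.

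The route actually taken in \cite{wine21}, and referred to in Section~\ref{sec:algo} here, is different in kind: one establishes OSP cycle monotonicity (no negative cycles in $\verinf$) and then applies a \emph{serialization} transformation that replaces each node of $\T$ by a chain of binary extremal queries while preserving the absence of negative cycles; the non-interleaving condition is then derived from the cycle-free structure of the OSP graph rather than by local propagation along a single path. Your plan is missing this tree-transformation step, which is where the existential content of ``implementable'' is discharged.
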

While the definition of two-way greedy implementable mechanism appears to be quite involved, it has been observed that it establishes a very strong relationship between two-way greedy implementable algorithms and greedy algorithms \cite{wine21}. Essentially every greedy algorithm and every reverse greedy algorithm (a.k.a.,  deferred-acceptance algorithm) is two-way greedy, and hence can be turned in an OSP mechanism. This continues to hold if the algorithm is allowed to greedily insert into the solutions some components (i.e., interact greedily with some agents) and reverse greedily remove from the solutions other components (i.e., interact reverse greedily with other agents). We will finally observe that whenever an agent is revealable at some node $u$, we can ask the agent to reveal her type without affecting the OSPness of the mechanism.

\subsection{$k$-step OSP-graph vs. $\infty$-step OSP-graph}
We first prove a useful relation between negative cycles in the $k$-step OSP-graph, and negative cycles in the $\infty$-step OSP-graph. This will be useful to provide later a characterization of $k$-step OSP mechanisms as two-way greedy mechanisms enjoying a special adjunctive feature.
\begin{lemma}
\label{lem:inf_vs_k}
 Let $\M$ be a {$k$-limited} mechanism with implementation tree $\T$. 
 There is no negative cycle in $\verk$ if and only if there is no negative cycle  in $\verinf$.
\end{lemma}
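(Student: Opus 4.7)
The plan is to prove both directions of the iff by exploiting that, for a $k$-limited mechanism, $\verk$ is a coarsening of $\verinf$. Along any path $P$ where $i$ is queried at most $k+1$ times, the vertex $\Lambda_{i,P}^{q+1}(c)$ is common to the two graphs. Along any path where $i$ is queried exactly $k+2$ times, each $\verk$-vertex $\Lambda_{i,P}^{k+2, E}(c)$ or $\Lambda_{i,P}^{k+2, \notE}(c)$ merges the $\verinf$-vertices that sit at the leaves of the subtree rooted at the $(k+2)$-th query node. The restrictions on that query imposed by $k$-limitedness (strongly ineffective revelation; strongly only-max/min effective revelation; only-max/min effective extremal) guarantee that all $\verinf$-vertices merged into one $\verk$-vertex carry the same $f_i$-value, so the merged outcome is well-defined.

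For the forward direction (negative cycle in $\verinf$ implies negative cycle in $\verk$), I would start from a negative cycle $\Lambda_1 \to \cdots \to \Lambda_h \to \Lambda_1$ in $\verinf$ and project each $\Lambda_j$ to its merged $\verk$-vertex $\Lambda_j^*$. The separations in $\T$ witnessing the $\verinf$-edge $(\Lambda_j, \Lambda_{j+1})$ also witness an edge in $\verk$ between $\Lambda_j^*$ and $\Lambda_{j+1}^*$, or collapse to a self-loop if these coincide. On the weights, since $\Lambda_j^*$ contains all types of $\Lambda_j$ and has the same $f_i$-value, whichever the sign of $f_i(\Lambda_{j+1}^*)-f_i(\Lambda_j^*) \in \{-1,0,+1\}$, enlarging the set over which we minimize or maximize can only push $w(\Lambda_j^*,\Lambda_{j+1}^*)$ below $w(\Lambda_j,\Lambda_{j+1})$. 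Self-loops arising from merges correspond to $0$-weight edges in $\verinf$ and can be dropped without changing the total, so the closed walk in $\verk$ has weight at most the original negative cycle weight, from which a standard decomposition extracts a negative cycle.

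For the reverse direction, given a negative cycle $\Lambda_1^* \to \cdots \to \Lambda_h^* \to \Lambda_1^*$ in $\verk$, I would pick, for each $j$, a $\verinf$-vertex $\Lambda_j \subseteq \Lambda_j^*$ containing a profile that realizes $\min\{x_i : \exists \xii,\ (x_i,\xii)\in \Lambda_j^*\}$ when $f_i(\Lambda_{j+1}^*) > f_i(\Lambda_j^*)$, the analogous maximum when $f_i(\Lambda_{j+1}^*) < f_i(\Lambda_j^*)$, and an arbitrary subclass otherwise, with outcome $f_i(\Lambda_j^*)$. A case analysis over the admissible shapes of the $(k+2)$-th query certifies that such a $\Lambda_j$ exists and has the same min-/max-type as $\Lambda_j^*$: in the revelation cases the relevant $\verinf$-class is the singleton of the extremal type, while in the extremal case the non-extremal child leads to $\verinf$-classes whose type set equals $D_{i,P}^{k+2,\notE}$. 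Lemma~\ref{lem:sep_profiles} guarantees that every pair in $\Lambda_j^*\times\Lambda_{j+1}^*$ is separated by $i$ at a common node, a property inherited by any sub-selection, so the edge $(\Lambda_j, \Lambda_{j+1})$ exists in $\verinf$. The choice of extremal representatives yields $w_{\verinf}(\Lambda_j, \Lambda_{j+1}) = w_{\verk}(\Lambda_j^*, \Lambda_{j+1}^*)$ for every $j$, so the lifted closed walk in $\verinf$ inherits the negative weight of the original cycle and contains a negative cycle.

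The main obstacle is the reverse direction: arguing that a $\verinf$-class with both the required extremal type and the prescribed outcome always exists. This hinges on the three admissible shapes of the $(k+2)$-th query permitted by $k$-limitedness and on the uniformity of $f_i$ across each $\notE$ bucket, which is exactly what makes the merging of $\verinf$-vertices into $\verk$-vertices consistent in the first place.
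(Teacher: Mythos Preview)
Your argument is correct and follows the same project-and-lift template as the paper: send a negative $\verinf$-cycle to $\verk$ by mapping each vertex to the $\verk$-class containing it (forward direction), and lift a negative $\verk$-cycle to $\verinf$ by choosing a representative in each $\Lambda_j^*$ and invoking Lemma~\ref{lem:sep_profiles} to certify the lifted edges (reverse direction).

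The one substantive difference is in the reverse lift. The paper always selects the profile with the \emph{minimum} type in $\Lambda^j$, whereas you pick the minimum or the maximum according to the sign of $f_i(\Lambda_{j+1}^*)-f_i(\Lambda_j^*)$. Your refinement is the right one: the $\verk$-edge weight $\min_{x_i}x_i\bigl(f_i(\Lambda^{j+1})-f_i(\Lambda^j)\bigr)$ is attained at the \emph{largest} $x_i$ when the outcome difference equals $-1$, so uniformly taking the minimum type yields a $\verinf$-edge weight that is only $\geq$ the corresponding $\verk$-weight, which by itself does not force the lifted cycle to be negative. Your case analysis on the admissible shapes of the $(k{+}2)$-th query is sound, though it can be shortened: since in a $k$-limited mechanism every $\verinf$-class is contained in a single $\verk$-class and inherits its $f_i$-value (as you establish in your preamble), the $\verinf$-class of any profile in $\Lambda_j^*$ that realizes the desired extremal type automatically has that type as its own extremum.
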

\begin{proof}
 Suppose that there is a negative cycle $C$ in $\verinf$. Now consider the sequence of nodes $C'$ in $\verk$ defined as follows: for each profile $\x \in C$, we insert in $C'$ the node $\Lambda^j$ such that $\x \in \Lambda^j$. Now for each edge $(\x, \y)$ of $C$ such that $\x \in \Lambda^j$ and $\y \in \Lambda^h$, for $j \neq h$, we must have that edge $(\Lambda^j, \Lambda^h)$ exists, and
 $$w(\Lambda^j, \Lambda^h) = \min_{t \mid \exists \ci \colon (t, \ci) \in \Lambda^j} t(f_i(\Lambda^h) - f_i(\Lambda^j)) \leq x_i(f_i(\y) - f_i(\x))=w(\x, \y). $$
 Moreover, for each edge $(\x, \y)$ of $C$ such that $\x, \y \in \Lambda^j$, we have that $w(\x, \y)=x_i(f_i(\y) - f_i(\x)) = 0$. Hence, by pruning from $C'$ the possible consecutive replications of the same node, we get a cycle in $\verk$ whose cost is at most the cost of $C$, and thus is negative, as desired.

 Consider now the case that there is a negative cycle $C$ in $\verk$. We can suppose w.l.o.g. that $C$ is a simple cycle (i.e., it visits each node only once), otherwise, in order for $C$ to be negative, there must be at least one negative simple cycle within $C$, that we can take in place of $C$. We consider the sequence $C'$ of nodes of $\verinf$ defined as follows: for each $\Lambda^j \in C$, we add to $C'$ the profile $\x^j = (x^j_i, \x^j_{-i})$ such that $x^j_i = \arg \min_{t \mid \exists  \ci \colon (t,  \ci) \in \Lambda^j}$ and $\x^j_{-i}$ be one of the profiles of the other agents' actions for which  $\x^j \in \Lambda^j$. 
 Lemma~\ref{lem:sep_profiles} implies that edge $(\x^j, \x^{j+1})$ exists in $\verinf$ and has cost $x^j_i (f_i(\x^{j+1}) - f_i(\x^j)) = x^j_i (f_i(\Lambda^{j+1}) - f_i(\Lambda^j))$, that, by our choice of $x_i^j$, is exactly the cost of edge $(\Lambda^j, \Lambda^{j+1})$ in $C$. Hence, we can conclude that the cost of $C'$ is exactly the cost of $C$, and thus negative, as desired. 
\end{proof}

\subsection{$k$-step OSP Characterization}
Two-way greedy implementation of a social choice function $f$ assumes that the implementation tree $\T$ is binary and  makes only extremal queries. Clearly, each implementation tree $\T'$ for which there are no negative cycles in $\mathcal{O}^{\infty}_{i, f, \T'}$, regardless the kind of queries that are performed in $\T'$, can be transformed in an implementation tree $\T$ with binary extremal queries that still has no negative cycle through a \emph{serialization} procedure (see \cite[Observation~3 and Theorem~4]{wine21}). Similarly, a binary implementation tree $\T$ with extremal queries and no negative cycle in $\verinf$ can be transformed in an implementation tree $\T'$ with generic queries but still no negative cycle through a simple \emph{compression} procedure: every two consecutive nodes $u$ and $u'$ such that $i = i(u) = i(u')$ can be merged in a single node $U$ with outgoing edges leading to $v_1, \ldots, v_x$, and to $v'_1, \ldots, v'_y$, where $v_1, \ldots, v_x$ are the children of $u$ different from $u'$, and $v'_1, \ldots, v'_y$ are the children of $u'$. However, these serialization/compression operations will change the number of queries done to each agent. While this does not matter for $\infty$-step OSPness, it turns out to be very relevant for $k$-step OSPness. Hence, we will introduce a way to keep the number of queries to a given agent unchanged after the operations of serialization/compression. Specifically, we say that a binary implementation tree $\T$ with extremal queries is \emph{$k$-limitable} if and only if the implementation tree $\T'$ achieved through the compression procedure described above is \emph{$k$-limited}. (Note that for convenience we are abusing a bit our terminology by calling $k$-limited the implementation tree $\T'$ rather than the mechanism $\M$ using $\T'$.)

We then have the following characterization.
\begin{theorem}
\label{thm:kgreedy}
	There exists a $k$-step OSP mechanism implementing $f$ if and only $f$ is two-way greedy implementable and the corresponding implementation tree $\T$ is $k$-limitable.
\end{theorem}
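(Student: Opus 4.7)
The plan is to combine the cycle-monotonicity characterization of $k$-step OSP (Theorem~\ref{thm:rev_cmon}), its $\infty$-step analog from \cite{wine21} encapsulated in Theorem~\ref{thm:wine}, and the bridge between the two graphs (Lemma~\ref{lem:inf_vs_k}), with the serialization and compression transformations described immediately before the theorem statement. The crucial point is that $k$-limitability has been defined precisely so that these transformations can move freely between binary extremal-query trees (the natural habitat of two-way greediness) and more general implementation trees (where the $k$-limited structure required by cycle monotonicity is tracked), without altering either the implemented $f$ or the absence of negative cycles in $\verinf$.

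For the ($\Leftarrow$) direction, I would start from $f$ two-way greedy implementable via a tree $\T$ that is $k$-limitable. By Theorem~\ref{thm:wine}, the mechanism on $\T$ is OSP, so $\verinf$ on $\T$ has no negative cycle. Compressing $\T$ yields a tree $\T'$ that is $k$-limited, by the very definition of $k$-limitability. Since compression only merges consecutive queries by the same agent, it leaves leaf outcomes, payments, and the set of pairs of profiles separated by each agent unchanged, so $\verinf$ on $\T'$ still has no negative cycle. Lemma~\ref{lem:inf_vs_k}, applied to the $k$-limited tree $\T'$, then gives no negative cycle in $\verk$ either, and Theorem~\ref{thm:rev_cmon} delivers a $k$-step OSP mechanism with tree $\T'$ implementing $f$.

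For the ($\Rightarrow$) direction, assume a $k$-step OSP mechanism $\M$ implementing $f$. Theorem~\ref{thm:klimited} lets me assume $\M$ is $k$-limited with some implementation tree $\T$; Theorem~\ref{thm:rev_cmon} then says $\verk$ has no negative cycle, and Lemma~\ref{lem:inf_vs_k} lifts this to $\verinf$, so $\M$ is in fact OSP. Serializing $\T$ produces a binary extremal-query tree $\T^\star$ still implementing $f$ and still OSP, so Theorem~\ref{thm:wine} gives that the mechanism on $\T^\star$ is two-way greedy. To close the loop, I would verify that $\T^\star$ is $k$-limitable by checking that compressing the serialization of $\T$ returns an implementation tree whose per-agent query pattern along every root-to-leaf path coincides with that of $\T$; hence $k$-limitedness is inherited from $\T$, so $\T^\star$ is $k$-limitable as required.

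The main obstacle will be carefully tracking how serialization and compression interact with the per-path equivalence classes $\Lambda_{i,P}^j$ of Section~\ref{sec:cyclemon}. Although \cite{wine21} already establishes that both operations preserve OSP and the implemented $f$, one must additionally verify that they respect the $k$-limited/$k$-limitable structure: neither operation creates a new separation between profiles by an agent $i$ that was not already present in the original tree, and neither alters the count of queries to $i$ along any root-to-leaf path in the compressed representation. Once this bookkeeping is in place, the theorem follows by stitching together Theorem~\ref{thm:klimited}, Theorem~\ref{thm:rev_cmon}, Lemma~\ref{lem:inf_vs_k}, and Theorem~\ref{thm:wine}.
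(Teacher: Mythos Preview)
Your proposal is correct and follows essentially the same route as the paper's own proof: both directions chain Theorem~\ref{thm:rev_cmon} (giving $k$-limitedness and $k$-step OSP CMON), Lemma~\ref{lem:inf_vs_k} (bridging $\verk$ and $\verinf$), and Theorem~\ref{thm:wine} (two-way greedy $\Leftrightarrow$ OSP), with serialization/compression translating between the $k$-limited and $k$-limitable representations. The only cosmetic difference is that you invoke Theorem~\ref{thm:klimited} separately where the paper folds it into the appeal to Theorem~\ref{thm:rev_cmon}, and you spell out the bookkeeping that the paper leaves implicit.
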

\begin{proof}
 Suppose that there exists a $k$-step OSP mechanism implementing $f$ with implementation tree $\T$.
 By Theorem~\ref{thm:rev_cmon}, $\T$ is $k$-limited and $k$-step OSP CMON holds. Note that since $k$-step OSP CMON holds, then, by Lemma~\ref{lem:inf_vs_k}, there is no negative cycle in $\verinf$.
 Consider the implementation tree $\T'$ achieved by running the serialization procedure described above on $\T$.
 Since $\T$ is $k$-limited, then it follows that $\T'$ is $k$-limitable. Moreover, since there is no negative cycle in $\verinf$, then there is no negative cycle in $\mathcal{O}^{\infty}_{i, f, \T'}$. Hence, by Theorem~\ref{thm:wine} $f$ is two-way greedy. 
 

 Suppose now that $f$ is two-way greedy implementable through a $k$-limitable implementation tree $\T$.
 Then, by Theorem~\ref{thm:wine}, there is no negative cycle in $\verinf$. Consider then the implementation tree $\T'$ achieved by running the compression procedure on $\T$; there is no negative cycle in $\mathcal{O}^{\infty}_{i, f, \T'}$. Moreover, since $\T$ is $k$-limitable, then $\T'$ is $k$-limited and thus, by Lemma~\ref{lem:inf_vs_k}, $\mathcal{O}^{k}_{i, f, \T'}$ has no negative cycles. Then, by Theorem~\ref{thm:rev_cmon}, we have that the corresponding mechanism is $k$-step OSP.
\end{proof}

Essentially, Theorem~\ref{thm:kgreedy} states that the characterization in term of greedy algorithms provided for OSP continues to hold even for $k$-step OSPness. However, for these mechanism we also require a further constraint to be satisfied, namely that the implementation tree is $k$-limited (in its compressed version) or $k$-limitable (in the serialized version). This essentially means that the mechanism can interact with each agent at most $k+2$ times (with the $(k+2)$-th interaction limited as discussed above).

The effect of this limitation is very heavy in the case of strong OSP mechanisms. Indeed, a $0$-limitable two-way greedy implementable $f$ can be implemented by a compressed implementation tree $\T$ that can be described as follows. For each agent $i$, for each path $P$ from the root to a leaf, let $u$ be the first node in $P$ such that $i = i(u)$ and let $v_1, \ldots, v_\ell$ be the  children of $u$ with $\max D_i(v_j) < \min D_i(v_{j+1})$; then for each $\x_{-i}$ compatible with $u$ one of the following three cases occurs:
\begin{itemize}[nosep]
 \item there is $j^* \in \{0, 1, \ldots, \ell, \ell+1\}$ such that $f_i(x_i, \x_{-i}) = 1$ for $x_i \in \bigcup_{j = 1}^{j^*} D_i(v_j)$, and $f_i(x_i, \x_{-i}) = 0$ for $x_i \in \bigcup_{j = j^*}^{\ell} D_i(v_j)$;
 \item $f_i(x_i, \x_{-i}) = 0$ for $x_i \in \bigcup_{j = 2}^{\ell} D_i(v_j)$, $f_i(x_i, \x_{-i}) = 1$ for $x_i \in D_i(v_1)$ with $x_i < \max D_i(v_1)$, and $f_i(x^*_i, \x_{-i}) \in \{0, 1\}$ for $x^*_i = \max D_i(v_1)$;
 \item $f_i(x_i, \x_{-i}) = 1$ for $x_i \in \bigcup_{j = 1}^{\ell-1} D_i(v_j)$, $f_i(x_i, \x_{-i}) = 0$ for $x_i \in D_i(v_\ell)$ with $x_i > \min D_i(v_\ell)$, and $f_i(x^*_i, \x_{-i}) \in \{0, 1\}$ for $x^*_i = \min D_i(v_\ell)$.
\end{itemize}
That is, either the separation between outcomes $0$ and $1$ is decided at the first query, or we are allowed to run a second query affecting the outcome of agent $i$ only for separating the maximum of the group of minimal types or the minimum of the group of maximal types.

This can be easily characterized to every $k$, by essentially stating that a social function $f$ can be implemented by a $k$-step OSP mechanism only if the possible thresholds for the outcomes can be determined in at most $k+1$ compressed queries or at most $k+2$ non-interleaving queries.

\section{Approximation Guarantee of $k$-step OSP Mechanisms}
In this section we apply our characterization to quantify the restriction that $k$-step OSP imposes on the quality of the algorithmic solution that can be implemented. We will focus on maximization problems; agents' types will thus be a non-negative valuation (i.e., a non-positive cost) for each algorithmic allocation received. 

To introduce our questions of interest, we start by discussing arguably the simplest possible problem in this area, \emph{social-welfare maximizing single-item auctions}: $n$ agents have a valuation $v_i \in D_i$ for the item, and we are willing to sell the item to the agent with the highest valuation. Note that in order to fully define the social function $f$ for this problem, we need to specify how ties are broken. We will assume  that ties are broken in favor of the agent with the smallest index. We will also assume that every agent $i$ has a domain $D_i = D$. It is not hard to see how most of the arguments below can be adapted to work also if both these assumptions are dropped.

The ascending price auction discussed in Example \ref{example:auctions} is an OSP mechanism that solves this problem optimally. We can rephrase it in terms of two-way greedy implementation as follows: mark all agents as available; for $t$ from the smallest type to the second largest type in $D$ or until there is only one available agent, ask each available agent $i$ in order of their index whether her type is $t$, and in case of positive answer, mark the agent as unavailable; assign the item to the agent with smallest index among the available ones. Actually, this is not the unique OSP auction implementing the desired social function (e.g., we can query agents from the largest to the second smallest type and allocate the item upon a positive answer).

Our question is the following: is it possible to implement single-item auctions with a $k$-step OSP mechanism? And if not, how large can be the \emph{price of limited foresight}?
To answer these questions, first observe that the English auction described above queries each agent $|D|-1$ times, with the last query being a revelation strongly only-minimum effective query. Hence, this mechanism is $k$-limitable, and, by Theorem~\ref{thm:kgreedy}, $k$-step OSP for every $k \geq |D|-3$.

However, the mechanism fails to be $(|D|-4)$-limited. Indeed, after the $(|D|-3)$-th query to agent $i$, its domain contains the three largest types, and for each action of other agents compatible with the $(|D|-3)$-th query, the outcome of this agent should be the same when agent $i$ has the smallest and the second smallest type. But the ascending price auction described above does not provide such a guarantee.

Can there be another $k$-step OSP mechanism implementing the social choice function $f$? Or some social function $f'$ differing from $f$ only in the tie-breaking rule (and hence, still retuning an allocation that maximizes the social welfare)?

We will next show that a $k$-step OSP mechanism exists for this problem whenever $k \geq \left\lceil\frac{|D|}{2}\right\rceil - 2$ (e.g., an SOSP mechanism exists whenever $|D| \leq 4$). Actually, we prove that this result holds even in a more general setting than single-item auctions. We also prove that this result is tight.

\subsection{Weighted Matroids and $p$-systems}
We will now focus on the class of problems that satisfy the \emph{downward closed property}.
In these problems we are given a set $E$ of elements, with each element $e$ associated to a weight $w(e) \in \mathbb{R}$, and a set ${\cal F}$, containing subsets $S \subseteq E$, named \emph{feasible solutions}, that enjoy the following property: if $S \in {\cal F}$, then $T \in {\cal F}$ for every $T \subseteq S$,  and we need to select the one feasible solution of maximum total weight, i.e. $S^* = \arg \max_{S \in {\cal F}} \sum_{e \in S} w(e)$.

Most well-known problems belong to this class: the social welfare maximizing single-item auction can be seen as a problem in this class, where $E$ corresponds to the set of agents, their weight corresponds to their valuation for the item, and the set of feasible solution are all (and only) the singletons. Other examples of problems that can be easily modeled in this way are the ones of finding the maximum number of linearly independent rows in a matrix, or the one of finding the minimum spanning tree or the maximum independent set in a graph.

A special subclass of problems satisfying the downward closed property consists of problems defined on  (weighted) \emph{matroids}: for these problems, the set of feasible solutions also enjoys the \emph{exchange property}, that is, if $S \in {\cal F}$ and $T \in {\cal F}$ and $|S| \leq |T|$, then there is $e \in T \setminus S$ such $S \cup \{e\} \in {\cal F}$.

Given a subset $S$ of elements, we denote with $\mathcal{O}(S)$ the set of feasible solutions that are maximal with respect to $S$, i.e. $\mathcal{O}(S) = \{T \in \mathcal{F} \colon T \subseteq S, T \cup \{e\} \notin \mathcal{F} \; \forall \, e \in S \setminus T\}$. We also simply write $\mathcal{O}$ as shorthand for $\mathcal{O}(E)$, i.e., the maximal feasible solutions in $\mathcal{F}$.
Note that in a matroid, the exchange property implies that for every pair of solutions $S, T \in \mathcal{O}$, we have that $|S| = |T|$. For this reason, we can measure the distance between a generic problem satisfying the downward closed property and a problem defined on a matroid, by the extent to which two maximal feasible solutions differ in size. Specifically, given a downward-closed problem $(E, {\cal F})$, and a subset $S$ of $E$, the \emph{lower rank} of $S$ is $lr(S) = \min \{|T| \colon T \in {\cal O}(S)\}$, i.e., the size of the smallest maximal feasible solution with respect to $S$, and the \emph{upper rank} of $S$ is $ur(S) = \max \{|T| \colon T \in {\cal O}(S)\}$, i.e., the size of the largest maximal feasible solution with respect to $S$. Then, the distance of $(E, {\cal F})$ from being a matroid is defined as the \emph{rank quotient} $q(E, {\cal F}) = \min\left\{\frac{lr(S)}{ur(S)} \mid S \subseteq E, ur(S) \neq 0\right\}$, that is essentially an upper bound on the ratio between the sizes of the smallest and the largest maximal feasible solutions. We then say that, for $p \leq 1$, a downward-closed problem $(E, {\cal F})$ is a \emph{$p$-system} if $q(E, {\cal F}) = p$. Note that problems on matroids are $1$-systems.

Given a $p$-system $(E, {\cal F})$, let an algorithm process elements in some order $e_1, \ldots, e_n$ and let $E_j = \{e_1, \ldots, e_j\}$. If the algorithm returns a solution $G$ such that $G_j  = G \cap E_j$, with $j = 1, \ldots, n$, is a maximal feasible solution with respect to $E_j$, then the total weight of elements in $G$ is at least a fraction $p$ of the total weight of the optimal feasible solution \cite{Hausmann1980}. Recall that the \emph{greedy} algorithm processes elements $e$ in decreasing order of their weight, and includes them in the current solution $S$ unless $S \cup \{e\} \notin \mathcal{F}$, whilst the \emph{reverse greedy} algorithm processes elements $e$ in increasing order of their weight, and remove any solution $S$ containing $e$ from $\mathcal{O}$ unless this empties $\mathcal{O}$. It is not hard to see that both greedy and reserve greedy satisfy the property described above\footnote{While for greedy this is immediate (and indeed this is stated in \cite{Hausmann1980}), for reverse greedy, it follows by considering element $(e_1, \ldots, e_n)$ in decreasing order of weights (so that the algorithm will process them from the last to the first one), and observing there cannot be an element $e^* \in E_j \setminus G_j$, such that $G_j \cup \{e^*\}$ is feasible, since at the time the algorithm processes $e^*$, either no solution containing $e^*$ is removed from $\mathcal{O}$, and thus $G$ must contain $e^*$, or there is at least one solution in $\mathcal{O}$ that does not contain $e^*$ and it is maximal for $E$, and thus also for $E_j$, and it is returned by the algorithm.}, and hence they return a $p$-approximation of the optimal solution for every $p$-system.
This, in turn, means that for all these problems, it is possible to design an OSP mechanism that is able to return a $p$-approximate feasible solution for the problem \cite{wine21}.

We next show that it is always possible to implement the reverse greedy algorithm defined above as a $k$-limitable two-way greedy algorithm when $k \geq \left\lceil\frac{|D|}{2}\right\rceil - 2$, and thus, by Theorem~\ref{thm:kgreedy}, that a $k$-step OSP algorithm exists that returns a $p$-approximate solution for every $p$-system. Unfortunately, we will show that, if no constraint is given on the values in $D$, then no $k$-step OSP mechanism exists that is able to return a bounded approximation of the optimal solution, whenever $k < \left\lceil\frac{|D|}{2}\right\rceil - 2$.

\subsubsection{Two-Way Greedy Algorithm}
\label{subsec:matroids_algo}
We next describe in Algorithm~\ref{algo_greedy} the $k$-limitable two-way greedy algorithm for $p$-systems. For  readability, the algorithm assumes that $E = \{1, \ldots, n\}$ and $D_j = D = \{t_1, \ldots, t_d\}$ for every $j \in E$: if the domain of some agents were a subset of $D$, then clearly some of the queries could be skipped, allowing the actual implementation to be $k'$-limitable for them. 

The algorithms makes two kind of queries to agents, that we name respectively \emph{bottom queries} and \emph{top queries}. The former asks an agent if her type is the minimum in her current domain, and if so, it excludes the corresponding element from the current solution (see Algorithm~\ref{bottom}). Incidentally, excluding one element from the solution implies that there are other elements that cannot be excluded (e.g., in the minimum spanning tree application, once $\delta-1$ edges have been removed from a node of degree $\delta$, we know that the remaining edge will surely belong to the solution). Hence, when we run a bottom query, we do not only save those elements that have been excluded, but also the ones that consequently will surely belong to the solution (regardless of her type/weight): indeed, in both cases, we do not need to make further queries to these elements.

\begin{algorithm}[htbp]
 \DontPrintSemicolon
 \small
 \SetKw{True}{True}
 \SetKw{False}{False}
 \SetKwFunction{U}{unremovable}
 \SetKwFunction{BQ}{BQuery}
 \SetKwProg{Fn}{def}{:}{}
 \Fn{\BQ{$j$, $S$, $X$}}{
	ask $j$ if her type is $\min D_j$\;
	\If{yes}{
	 add $j$ to $X$\;
	 $U = \U{S, X}$\;
	 $S = S \cup U$\;
	 \lIf{$X \cup S = E$}{\Return \True}
	}
	\Return \False
 }
 \caption{Bottom Query}
 \label{bottom}
\end{algorithm}

In order to check which element cannot be excluded we assume that a function \U exists that given in input a partial solution $S$, and a set of excluded elements $X$, returns the set $U$ of elements that must be inserted in the solution. In general, this function may maintain the set $\mathcal{O}$ of maximal feasible solutions, exclude from $\mathcal{O}$ all solutions $S$ involving elements in $X$, and return the set $U = \{e \in E \setminus (X \cup S) \colon e \in T \; \forall\, T \in \mathcal{O}\}$, i.e., the set of those elements that belong to all non-excluded maximal solutions. 
Note that however the set $\mathcal{O}$ may in principle contain an exponentially large number of solutions, and thus the implementation of the function \U as described above is not in general polynomial. However, it is often possible, by exploiting the structure the problem, to implement  \U without the need of keeping this exponentially large data structure, as in the case of, e.g., minimum spanning tree.

The function is assumed to signal through the return value whether the final solution has been found, i.e., when all elements are  either in the current solution or have been excluded.

A top query instead asks an agent if her type is the maximum in her current domain, and if so, includes the corresponding element in the solution (see Algorithm~\ref{top}). As for the bottom query, this may in turn cause some other elements to be excluded from the solution (since they never appear in a solution with the one element that we just added to our solution).

\begin{algorithm}[htbp]
 \DontPrintSemicolon
 \small
 \SetKw{True}{True}
 \SetKw{False}{False}
 \SetKwFunction{Rem}{removable}
 \SetKwFunction{TQ}{TQuery}
 \SetKwProg{Fn}{def}{:}{}
 \Fn{\TQ{$j$, $S$, $X$}}{
	ask $j$ if her type is $\max D_j$\;
	\If{yes}{
	 add $j$ to $S$\;
	 $U = \Rem{S, X}$\;
	 $X = X \cup U$\;
	 \lIf{$X \cup S = E$}{\Return \True}
	}
	\Return \False
 }
 \caption{Top Query}
 \label{top}
\end{algorithm}

As above, we assume that a function \Rem exists that finds these elements; a general implementation would return the set $U = \{e \in E \setminus (X \cup S) \colon e \notin T \; \forall \, T \in \mathcal{O}\}$, where $\mathcal{O}$ is defined above, but more efficient implementations can in principle be found by exploiting the specific problem of interest. Finally, the function is assumed to signal through the return value whether the final solution has been found (i.e., when all elements are either in the current solution or have been excluded).

We are now ready to present in Algorithm~\ref{algo_greedy} the two-way greedy algorithm for finding the feasible solution of approximate maximum weight in a $p$-system. The algorithm works as follows: it maintains a set $A$ of alive agents, that is, those agents that did not reveal their type (i.e., they never answered yes to some query) and are not unremovable (i.e., they do not belong to any remaining maximal feasible solution); in Lines~\ref{startwhile1}-\ref{endwhile1}, the algorithm looks for the first agent whose type is not the minimum in her domain, by querying agents in order of their id until the first agent that answer no is found (or only one maximal feasible solution is left, that must be returned)\footnote{Note that in the algorithm $A$ changes as soon as either $S$ or $X$ change, and hence as soon as one agent answers yes to a query, but it does not change when the queried agent gives a negative answer.}; once the first agent is found whose type is not the minimum, in Lines~\ref{startfor}-\ref{endfor}, the algorithm asks each remaining agent whether her type is the minimum or the second minimum in her domain; finally, in Lines~\ref{startfirst}-\ref{endfirst}, the algorithm asks to the first alive agent (for whom the minimum was already discarded) if her type is the second or the third minimum in her domain; note that the algorithm tries to keep always one alive agent that is one query ahead with respect to the others, and thus if the first agent happens to not be  alive as an effect of the last two queries, we need to find another agent that will play this role, and this is done in Lines~\ref{startwhile2}-\ref{endwhile2}; this process is repeated until the domain of the first alive agent has size $3$ and the domain of remaining agents has size $2$, if $|D|$ is even, or the domain of the first agent has size $2$ and the domain of remaining agents has size $1$, otherwise; after that, one last query is asked (to all agents if $|D|$ is even, and to the first alive agent only otherwise) in order to have that the domain of all agents is fully revealed, and hence the desired maximal feasible solution can be immediately computed and returned.
\begin{algorithm}[htbp]
	\DontPrintSemicolon
	\small
	\SetKw{or}{or}
	\SetKw{and}{and}
	\SetKw{els}{else}
	Set $b = 1 - |D| \mod 2$\;
	Let $S = \U(\emptyset, \emptyset)$ \tcc*{The partial solution}
	Let $X = \Rem(\emptyset, \emptyset)$ \tcc*{The excluded elements}
	Use $A$ as an alias for $E \setminus (S \cup X)$ \tcc*{The set of alive elements}
	Let $j = 0$ \tcc*{The next agent to query}
	\While{$j \neq \min A$ \label{startwhile1}}{
		\lIf{\BQ{$\min A$, $S$, $X$}}{\Return $S$ \label{single1}}
		$j = j+1$ \label{endwhile1}
	}
	
	\While{$|D_{\min A}| > 2 + b$ \or $|D_{\max A}| > 1 + b$}{
		\For{$j \in A \colon j > \min A, |D_j| > 2$ \label{startfor}}{
			\lIf{\BQ{$j$, $S$, $X$}}{\Return $S$}
			\If{$j \in A$}{
				\lIf{\BQ{$j$, $S$, $X$}}{\Return $S$ \tcc*[f]{We can compress last two queries} \label{endfor}}
			}
		}
		$j = \min A$\;
		\If{$|D_j| > 2 + b$ \label{startfirst}}{
			\lIf{\BQ{$j$, $S$, $X$}}{\Return $S$}
			\If{$j \in A$}{
				\lIf{\BQ{$j$, $S$, $X$}}{\Return $S$ \tcc*[f]{We can compress last two queries} \label{endfirst}}
			}
			\While{$j \neq \min A$ \label{startwhile2}}{
				\lIf{\BQ{$\min A$, $S$, $X$}}{\Return $S$ \label{single2}}
				$j = j+1$ \label{endwhile2}
			}
		}
	}
	\If{$|D|$ is even}{
		\For{$j \in A \colon j > \min A$}{
			\lIf{\TQ{$j$, $S$, $X$}}{\Return $S$ \tcc*[f]{Only-minimum effective bottom query}}
		}
		\lIf{\BQ{$\min A$, $S$, $X$}}{\Return $S$ \tcc*[f]{Only-minimum effective bottom query} \label{single3}}
	}\Else{
		\lIf{\BQ{$\min A$, $S$, $X$}}{\Return $S$ \tcc*[f]{Only-minimum effective bottom query}}
	}
	add $\min A$ to $S$\;
	$U = \U(S, X)$\;
	\Return $S \cup U$
	\caption{$k$-limitable two-way greedy algorithm for a $p$-system $(E, \mathcal{F})$}
	\label{algo_greedy}
\end{algorithm}

We next prove the following result.
\begin{theorem}
\label{thm:matroids}
 Algorithm~\ref{algo_greedy} is two-way greedy, and it is $k$-limitable for $k \geq \left\lceil\frac{|D|}{2}\right\rceil - 2$. Moreover, it returns a $p$-approximation of the feasible set of maximum weight for every $p$-system.
\end{theorem}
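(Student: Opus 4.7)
The plan is to prove the three claims in order.

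\textbf{Step 1 (two-way greedy implementability).} I would first observe that every \texttt{BQuery} separates the singleton $\{\min D_i(u)\}$ from the rest of the current domain and, on a ``yes'' answer, forces outcome $0$ for the queried agent; every \texttt{TQuery} is symmetric with $\{\max D_i(u)\}$ as the singleton and outcome $1$. So all queries are extremal and binary, matching the first bullet of the definition of two-way greedy implementability, once we interpret valuations via the standard single-parameter/cost correspondence. The delicate point is the non-interleaving rule: the first agent receives only \texttt{BQuery}s on every path, so no switch occurs, while every non-first agent receives only \texttt{BQuery}s in the outer while loop and may receive exactly one \texttt{TQuery} in the final block. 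At that last point $|D_j(u)|=2$, so $L_j(u)$ and $R_j(u)$ are \emph{both} singletons and the query can be simultaneously classified as a reverse-greedy step (only-min effective) and a greedy step; thus no switch in interaction mode is actually recorded and the condition in Theorem~\ref{thm:wine} is satisfied.

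\textbf{Step 2 ($k$-limitability).} I would count the queries received by each agent on any path and then contract consecutive same-agent nodes via the compression procedure of Section~\ref{sec:algo}. Each iteration of the outer while loop uses two consecutive bottom queries per agent (the pair on Lines~\ref{startfor}--\ref{endfor} for $j>\min A$ and the pair on Lines~\ref{startfirst}--\ref{endfirst} for $j=\min A$), which collapse to a single compressed node and shrink $|D_j|$ by $2$. For $|D|$ even ($b=1$) the loop halts once $|D_{\min A}|\le 3$ and $|D_{\max A}|\le 2$, producing $\lceil|D|/2\rceil-1$ compressed bottom nodes per non-first agent, after which the final \texttt{TQuery} contributes one additional node whose domain has size $2$ and which is only-maximum effective (this is an extremal query on a prefix, hence a legal $(k+2)$-th query). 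For $|D|$ odd ($b=0$) the symmetric count gives the same total $\lceil|D|/2\rceil$, the last compressed node being an only-min effective revelation on a two-element suffix. Hence every agent receives at most $\lceil|D|/2\rceil$ compressed queries satisfying the $k$-limited format, giving $k+2 \geq \lceil|D|/2\rceil$, i.e., $k\geq \lceil|D|/2\rceil-2$.

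\textbf{Step 3 (approximation).} I would show that Algorithm~\ref{algo_greedy} implements exactly the reverse greedy rule: it scans agents in order of increasing revealed type (with id-based tie-breaking), and the first time an agent's type is confirmed to be the current minimum of its surviving domain she is added to $X$; calls to \texttt{unremovable} then commit to $S$ every element that belongs to all remaining maximal feasible solutions. By induction on the minimum revealed type $t_r$, after processing every agent whose type is $\le t_r$, the pair $(S,X)$ coincides with the state of reverse greedy on the prefix $E_r$ of elements with weight at most $t_r$, and $G\cap E_r$ is maximal in $\mathcal{O}(E_r)$. The final block assigns the remaining agents consistently with their revealed extremal types. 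Applying the invariant characterization of reverse greedy discussed in Section~\ref{subsec:matroids_algo} (and~\cite{Hausmann1980}) yields a $p$-approximation for every $p$-system.

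The main obstacle I anticipate is Step~1, specifically justifying that the single \texttt{TQuery} on a two-element domain does not count as an interleaving step. Beyond that, Step~2 is a careful but mechanical bookkeeping exercise, and Step~3 reduces to checking the reverse-greedy invariant along the execution of the algorithm.
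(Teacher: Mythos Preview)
Your proposal follows essentially the same three-part strategy as the paper: verify non-interleaving except at a size-two domain, count compressed queries distinguishing the \texttt{min A} agent from the rest, and identify the output with reverse greedy to invoke the $p$-system bound of~\cite{Hausmann1980}.

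Two small points to tighten. In Step~1, your justification that the final \texttt{TQuery} ``can be simultaneously classified as a reverse-greedy step'' is not quite the right reason: the \texttt{TQuery} commits the maximum type to outcome~$1$, it does not commit the minimum to outcome~$0$, so it is genuinely a forward-greedy step. The paper's argument is rather that at this node the domain is \emph{revealable} (because the algorithm itself assigns outcome~$1$ to every profile with $x_j=\max D_j(u)$), which is what the second bullet of the two-way greedy definition requires before a switch of direction. In Step~2 you call the final domain a ``prefix'' and the query ``only-maximum effective''; this is consistent only if you work in the cost convention (where types are negated valuations). The paper stays in the valuation convention and therefore calls the remaining domain a suffix and the last query only-\emph{minimum} effective. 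Either way the $|D_i(u)|=2$ clause of the $k$-limited definition covers it, so the conclusion is unaffected, but be careful to keep the convention fixed, and do carry through the separate count for the \texttt{min A} agent (one initial single query, then pairs, then one final single query), which the paper makes explicit.
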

\begin{proof}
In order to prove that  Algorithm~\ref{algo_greedy} is two-way greedy, it is sufficient to show that we do not interleave top and bottom queries to some element unless the domain of that element is revealable, but this is clearly the case for Algorithm~\ref{algo_greedy}, since interleaving only occurs when the domain of the element has size two.

In order to prove that the algorithm is $k$-limitable, it is sufficient to observe that queries can be compressed so that each agent receives at most $k+2$ queries, with the the $(k+2)$-th query allowed only if the previous queries are all bottom query, and being either a strongly only-minimum effective revelation query or an only-minimum effective bottom query.

To this aim, let us first assume that $|D|$ is even. Observe that the element $\min A$ receives one single query (at Line~\ref{single1} or at Line~\ref{single2}), two consecutive queries (that can be compressed) until the domain has size $3$ and a final single query (at Line~\ref{single3}). Hence she receives $2+x$ compressed queries, where $x$ is such that $|D| - 2x - 1 = 3$, and thus $x = \frac{|D|}{2} - 2 \leq k$, as desired. Moreover, it is immediate to check that the last query is an only-minimum effective bottom query. As for remaining elements they receive two consecutive queries until the domain has size $2$, and one final top query. Hence, they receive $1+x$ compressed queries, where $x$ is such that $|D| - 2x = 2$, and thus $x = \frac{|D|}{2} - 1 \leq k+1$, as desired. Moreover, for the last query, since the domain has size $2$, it is trivially an only-minimum effective bottom query.

If $|D|$ is odd, then element $\min A$ receives one single query (at Line~\ref{single1} or at Line~\ref{single2}), two consecutive queries (that can be compressed) until the domain has size $2$, and a final single query. Hence she receives $2+x$ compressed queries, where $x$ is such that $|D| - 2x - 1 = 2$, and thus $x = \frac{|D|+1}{2} - 2 \leq k$, as desired. Moreover, since the domain has size $2$, the last query trivially is an only-minimum effective bottom query. As for remaining elements they receive two consecutive queries until the domain has size $1$. Hence, they receive $x$ compressed queries, where $x$ is such that $|D| - 2x = 1$, and thus $x = \frac{|D|+1}{2}  - 1 \leq k+1$.

Finally, we prove the approximation of the mechanism. To this aim, observe that the outcome returned by this mechanism is exactly the same returned by a reverse greedy algorithm, even if the order in which elements are removed changes based on their type: for even $|D|$ it essentially alternates the order $(\min A, \min A + 1, \ldots, \max A)$ used for odd types (the first type in the domain, the third one, and so on) with the order $(\min A + 1, \ldots, \max A, \min A)$ used for even types.
\end{proof}

\subsubsection{Inapproximability Results}
We next show that no bounded approximation can be achieved by a $k$-step OSP mechanism whenever $k < \left\lceil\frac{|D|}{2}\right\rceil - 2$, even for $p$-systems with bounded $p$. Interestingly, on the way to prove this result, we will also highlight how reverse greedy algorithms appear to be more powerful than greedy algorithms when they should be implemented by selfish agents with limited cognitive abilities: indeed, while in Section~\ref{subsec:matroids_algo}, we proved that it is possible to turn a reverse greedy algorithm in a $k$-limited two-way greedy algorithm for every $k \leq \left\lceil\frac{|D|}{2}\right\rceil - 2$, we will observe that it is impossible to turn a greedy algorithm in a $k$-limited two-way greedy algorithm for every $k < |D|-3$.

To this aim, we consider the problem of maximizing the social welfare in a single-item auction, that, as described above, is a special case of problems on weighted matroids, and thus a special $p$-system.
Assume that for each agent $e \in E$, we have $w(e) \in D = \{t_1, \ldots, t_{d}\}$, where $d=|D| \geq 4$, and $t_{j+1} > \rho t_j$ for every $j = 1, \ldots, |D|-1$, for some $\rho > 0$.
Suppose that there exist a social choice function $f$ that is implementable by a $k$-step OSP auction and returns a feasible solution whose total weight is better than an $1/\rho$ fraction of the total weight of the optimal solution for any instance.
By Theorem~\ref{thm:kgreedy} $f$ must be implementable by a $k$-limitable two-way greedy mechanism $\M$ with implementation tree $\T$. We next provide some properties about $\T$. Recall that we denote as bottom query, a query in which the element is asked whether her weight is the smallest in the current domain, and in case of a positive answer, the element will surely not belong to the returned solution (and hence the element is not further queried). Similarly, we denote with top query, a query in which the element is asked whether her weight is the largest in the current domain, and in case of a positive answer, the element will surely belong to the returned solution (and hence the element is not further queried). Recall also that we say that interleaving occurs in $\T$ if there is a path such that element $e$ first receives a bottom query and then a top query, or viceversa: in a two-way greedy algorithm this is allowed only if at node $u$ in which the direction of queries interleaves the domain of $e$ is revealable, meaning that for all values in the domain except at most the smallest one, if queries before $u$ were bottom queries, and the largest one, otherwise, the outcome is always the same regardless the actions of other players.
\begin{lemma}
\label{lem:tree_prop}
 Let $u$ be a node of $\T$ and let $e = i(u)$. Suppose that $D_e(u)$ is not revealable. Let then $v$ be the unique child of $u$ such that $D_e(v)$ is not a singleton.

 If the query at $u$ is a bottom query, then $D_e(v) = \{t_x, t_{x+1}, \ldots, t_d\}$ for some $x > 1$.  Moreover, if $\min D_{e'}(v) < t_{x}$ for every $e' \neq e$, and there is at least $e''$ such that $\max D_{e''}(v) > t_{x+1}$, then $i(v) \neq e$.

 Similarly, if the query at $u$ is a top query, then $D_e(v) = \{t_1, \ldots, t_{x-1}, t_x\}$ for some $x < d$.  Moreover, if there is at least one agent $e''$ such that $\max D_{e''}(v) > t_{x}$, and $\min D_{e'}(v) < t_{x-1}$ for every $e' \neq e$, then $i(v) \neq e$.
\end{lemma}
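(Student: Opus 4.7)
My plan is to handle the bottom-query case in detail; the top-query case will follow by a symmetric argument, swapping minima and maxima.

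For the structural claim $D_e(v)=\{t_x,\ldots,t_d\}$ with $x>1$, I would invoke the two-way greedy consistency rule: since the query at $u$ is a bottom (reverse greedy) query and $D_e(u)$ is not revealable, every previous query to $e$ along the path from the root to $u$ must have been reverse greedy as well; this holds even if $e$ was revealable at some intermediate node, because for a non-revealable $u$ to admit a reverse-greedy query, the rule demands that \emph{every} earlier query to $e$ be reverse greedy. Each such prior query was answered ``no'' (else $e$'s outcome would have been fixed and $e$ would not be queried again), and a ``no''-answered bottom query strips off the current minimum. Hence $D_e(u)=\{t_{x-1},t_x,\ldots,t_d\}$ for some $x\geq 2$, and removing the minimum at $u$ gives $D_e(v)=\{t_x,\ldots,t_d\}$.

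For the main claim, I would argue by contradiction, supposing $i(v)=e$. The first step is to show that $D_e(v)$ is itself not revealable. To falsify revealability condition (a), namely $f_e(\x)=1$ for every $x_e\in D_e(v)\setminus\{t_d\}$ and every compatible $\x_{-e}$, I would pick $x_e=t_x$ and $x_{e''}=\max D_{e''}(v)>t_{x+1}$ (feasible by the max hypothesis): since $e''$ outweighs $e$ in this single-item auction, $f_e=0$. To falsify condition (b), namely $f_e(\x)=0$ for every $x_e\in D_e(v)\setminus\{t_x\}$, I would pick $x_e=t_d$ and $x_{e'}=\min D_{e'}(v)<t_x$ for every $e'\neq e$ (feasible by the min hypothesis): then $e$ is the unique argmax and $f_e=1$. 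With $D_e(v)$ non-revealable, the two-way greedy rule applied at $v$ forces the query at $v$ to be of the same reverse-greedy type as at $u$, i.e., a bottom query asking ``is your weight $t_x$?''.

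To close the argument, I would consider the ``yes'' subtree of $v$, where $e$'s weight is revealed to be $t_x$ and the bottom-query semantics assign $e$ outcome $0$. I would then instantiate the profile with $x_e=t_x$ and $x_{e'}=\min D_{e'}(v)<t_x$ for every $e'\neq e$; this profile is compatible with the subtree by the min hypothesis. In it, $e$ is the unique argmax (weight $t_x$) yet is not allocated, so the mechanism's total allocated weight is at most $\max_{e'\neq e}\min D_{e'}(v)\leq t_{x-1}$. The approximation ratio is therefore at most $t_{x-1}/t_x<1/\rho$, contradicting the hypothesis that $\M$ achieves approximation strictly better than $1/\rho$. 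Hence $i(v)\neq e$. The top-query case follows by the mirror argument: in the ``yes'' subtree of the second top query at $v$, $e$ is forced to win with weight $t_x$ while some $e''$ of weight $\geq t_{x+1}$ is the true argmax, again yielding ratio at most $t_x/t_{x+1}<1/\rho$. The main obstacle in this plan is the careful bookkeeping of the two-way greedy consistency rule across the entire path preceding $u$, which underpins both the suffix (or prefix) shape of $D_e(u)$ and the non-revealability of $D_e(v)$ on which the contradiction rests.
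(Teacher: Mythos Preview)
Your proof follows essentially the same route as the paper's: establish the suffix shape of $D_e(v)$ via the no-interleaving rule, show $D_e(v)$ is not revealable, deduce that any query to $e$ at $v$ must again be a bottom query, and derive a contradiction from the ``yes'' branch using the $\rho$-gap between consecutive types. The only cosmetic difference is your choice of witnesses for non-revealability (the extremes $t_x$ and $t_d$), whereas the paper uses the single value $t_{x+1}$ to simultaneously falsify both clauses.

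One point to tighten: when you write ``since $e''$ outweighs $e$ in this single-item auction, $f_e=0$'' and ``then $e$ is the unique argmax and $f_e=1$'', you are implicitly treating the mechanism as optimal. It is only assumed to beat a $1/\rho$ fraction of the optimum, so these conclusions must go through the approximation guarantee together with the gap $t_{j+1}>\rho t_j$ (exactly as you do, correctly, in the final contradiction step). For instance, with $x_e=t_x$ and $x_{e''}>t_{x+1}$, assigning the item to $e$ yields welfare $t_x<x_{e''}/\rho$, which violates the approximation bound; that is the reason $f_e=0$, not merely that $e''$ has larger weight.
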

\begin{proof}
 Suppose first that the query at $u$ is a bottom query.
 Since $D_e(u)$ is not revealable, and $\M$ is a two way greedy algorithm, this means that there has been no interleaving along the path from the root of $\T$ and $u$. Hence, all the query to $e$ preceding the one at $u$ must be bottom query, from which we achieve that $D_e(v) = \{t_x, t_{x+1}, \ldots, t_d\}$ for some $x > 1$, as desired.

 Now, suppose that the domain of the other agents is as in the claim. We now prove that the domain of $e$ at $v$ is not revealable, i.e., that for $w(e) = t_{x+1}$ there is both a leaf $\ell_{\textrm{in}}$ in the subtree rooted in $v$ such that $e$ belongs to the solution corresponding to $\ell_{\text{in}}$, and a leaf $\ell_{\textrm{out}}$ such that $e$ does not belong to the solution corresponding to $\ell_{\textrm{out}}$.
 Indeed, the profile $\y$ such that $y_e = t_{x+1}$ and $y_{e'} \leq t_{x}$ for every $e' \neq e$ is available at $v$ (and thus there must be a leaf $\ell_{\text{in}}$ at which it corresponds), and for it the mechanism must assign the item to $e$, otherwise the cost of the returned solution would be at most $t_{x} < \frac{1}{\rho} t_{x+1} \leq \frac{1}{\rho} OPT(\y)$, where $OPT(\y)$ denotes the optimal solution for profile $\y$. Similarly, the profile $\z$ such that $z_e = t_{x+1}$, $z_{e''} > t_{x+1}$, and $z_{e'} \in D_{e'}(v)$ for every $e' \neq e, e''$ is available at $v$, and for it the mechanism must not assign the item to $e$, otherwise the cost of the returned solution would be at most $t_{x+1} <\frac{1}{\rho} t_{x+2} \leq \frac{1}{\rho} OPT(\z)$.

 Since the domain of $e$ at $v$ is not revealable, it follows that we cannot interleave, and thus make a top query to $e$ at $v$. We next show that it is not possible to have a bottom query at this node, from which the claim follows. Suppose instead that the query at $v$ is a bottom query to $e$, and consider the profile $\y$ such that $y_e = t_x$ and $y_e' < t_x$ for every $e' \neq e$. Note that this profile is compatible with the children of $v$ corresponding to an yes answer to the bottom query. Hence, the mechanism does not assign the item to $e$, and thus receives a social welfare of at most $t_{x-1}< \frac{1}{\rho} t_x = \frac{1}{\rho} OPT(\y)$, contradicting the fact that the mechanism always returns a $\frac{1}{\rho}$-approximation.

 The case that the query at $u$ is a top query is very similar, and it is included only for sake of completeness.
 Since $D_e(u)$ is not revealable, and $\M$ is a two way greedy algorithm, this means that there has been no interleaving along the path from the root of $\T$ and $u$. Hence, all the query to $e$ preceding the one at $u$ must be top query, from which we achieve that $D_e(v) = \{t_1, \ldots, t_{x-1}, t_x\}$ for some $x < d$, as desired.

 Now, suppose that the domain of the other agents is as in the claim. We now prove that the domain of $e$ at $v$ is not revealable, i.e., that for $w(e) = t_{x-1}$ there is both a leaf $\ell_{\textrm{in}}$ in the subtree rooted in $v$ such that $e$ belongs to the solution corresponding to $\ell_{\text{in}}$, and a leaf $\ell_{\textrm{out}}$ such that $e$ does not belong to the solution corresponding to $\ell_{\textrm{out}}$.
 Indeed, the profile $\y$ such that $y_e = t_{x-1}$ and $y_{e'} < t_{x-1}$ for every $e' \neq e$ is available at $v$ (and thus there must be a leaf $\ell_{\text{in}}$ at which it corresponds), and for it the mechanism must assign the item to $e$, otherwise the cost of the returned solution would be at most $t_{x-2} < \frac{1}{\rho} t_{x-1} \leq \frac{1}{\rho} OPT(\y)$. Similarly, the profile $\z$ such that $z_e = t_{x-1}$, $z_{e''} > t_{x}$, and $z_{e'} \in D_{e'}(v)$ for every $e' \neq e, e''$ is available at $v$, and for it the mechanism must not assign the item to $e$, otherwise the cost of the returned solution would be at most $t_{x-1} <\frac{1}{\rho} t_{x} \leq \frac{1}{\rho} OPT(\z)$.

 Since the domain of $e$ at $v$ is not revealable, it follows that we cannot interleave, and thus make a bottom query to $e$ at $v$. We next show that it is not possible to have a top query at this node, from which the claim follows. Suppose instead that the query at $v$ is a top query to $e$, and consider the profile $\y$ such that $y_e = t_x$, $y_{e''} > t_x$ and $y_e' \in D_{e'}(v)$ for every $e' \neq e$. Note that this profile is compatible with the children of $v$ corresponding to an yes answer to the top query. Hence, the mechanism must assign the item to $e$, and thus receives a social welfare of at most $t_{x} < \frac{1}{\rho} t_{x+1} = \frac{1}{\rho} OPT(\y)$, contradicting the fact that the mechanism always returns a $\frac{1}{\rho}$-approximation.
\end{proof}

Despite the apparent symmetry between top and bottom queries in Lemma~\ref{lem:tree_prop}, it is the case that they are different in terms of $k$-step OSPness of the mechanism, as suggested by comparing Theorem~\ref{thm:matroids} (and the observation that Algorithm~\ref{algo_greedy} essentially provides an implementation of the reverse greedy algorithm for $p$-systems) and the following corollary of Lemma~\ref{lem:tree_prop}.
\begin{corollary}
\label{cor:greedy_fail}
 For every $\rho > 0$, every $d=|D| \geq 4$, every $k < d-3$, and every $n \geq d$,\footnote{Actually, it is possible to prove that greedy algorithms are weaker than reverse greedy in terms of implementability through $k$-step OSP mechanisms, even for smaller values of $n \geq 4$. Indeed, in this case, we would be able to prove inapproximability by $k$-limited mechanisms for every $k$ smaller than a threshold of about $\frac{2d}{3}$, and thus still larger than the $\frac{d}{2}$ threshold that we have been able to prove for reverse greedy algorithms. This results follows by noticing, as done in the proof below, that, by Lemma~\ref{lem:tree_prop}, agents can receive two consecutive top queries only in extreme cases, and they can never receive more than two consecutive queries: the result then follows by noticing that it is impossible that three different agents can be found in the extreme case allowing consecutive queries for more than one third of the total possible cases.} there is a $p$-system $(E, \mathcal{F})$ with $|E| = n$, such that any $k$-limitable two-way greedy implementation of the greedy algorithm for this problem, returns a feasible solution of total weight not larger than a $\frac{1}{\rho}$ fraction of the total weight of the optimal feasible solution.
\end{corollary}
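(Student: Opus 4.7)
The approach is to find, for the single-item auction on $n \geq d$ agents with common domain $D = \{t_1, \ldots, t_d\}$ (a $p$-system with $p = 1$), a profile on which the mechanism's fixed leaf output has welfare at most $(1/\rho) \cdot OPT$. Fix a $k$-limitable two-way greedy mechanism $\M$ in which every agent interacts via top queries, and trace the path $P$ in $\T$ obtained by answering ``no'' to every top query; let $\ell$ be its terminal leaf. The goal is to show that for $k < d - 3$ every agent $e$ has $\{t_1, t_2\} \subseteq D_e(\ell)$, and then build a compatible bad profile reaching $\ell$.

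The structural step invokes Lemma~\ref{lem:tree_prop}. On $P$, ``no'' answers only shave off the top of each agent's domain, so the minimum $t_1$ remains in every other agent's current domain; and since $n \geq d$, some agent's current max always exceeds $e$'s. These are the lemma's hypotheses, hence top queries to $e$ cannot be consecutive while $|D_e(v)| \geq 3$. The sole possible compression is therefore the final pair collapsing $\{t_1, t_2, t_3\}$ to a singleton, yielding a compressed revelation with three children; being non-extremal, it cannot serve as case (c) of the $(k+2)$-th query conditions. To qualify under case (a) or (b) would require $\M$'s outcome for $e$ at types $t_1$ and $t_2$ to coincide for every $\xii$ available at the query node, but the adversarial invariant (every other agent's domain still contains $t_1$, and some agent still carries $t_2$) produces profiles $\xii$ on which the greedy allocation to $e$ genuinely differs between these two types, ruling (a) and (b) out. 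A parallel check excludes single only-max effective extremal queries whose current domain has size $\geq 3$; the only surviving option, a case-(c) query on the domain $\{t_1, t_2\}$, would require $q_e^c \geq d - 1 > k + 2$, which is forbidden for $k < d - 3$.

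Hence no $(k+2)$-th query is available along $P$, forcing $q_e^c \leq k + 1$ and, with at most one end-compression, $q_e \leq k + 2 \leq d - 2$; thus $D_e(\ell) \supseteq \{t_1, t_2\}$ for every agent $e$. Let $e_0$ be the agent to whom $\M$ allocates the item at $\ell$ and pick any $e' \neq e_0$ (possible since $n \geq d \geq 4$). Define $\pi$ by $\pi_{e_0} = t_1$, $\pi_{e'} = t_2$, and $\pi_e = t_1$ for all other $e$. Every query along $P$ asks its agent about her current maximum, and under $\pi$ every such query is answered ``no'' (for $e'$, her true value $t_2$ is never the current max during her at most $d - 2$ queries, which ask about $t_d, t_{d-1}, \ldots, t_3$), so $\pi$ reaches $\ell$. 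The mechanism's welfare on $\pi$ is therefore $t_1$, while $OPT(\pi) = t_2 > \rho t_1$, giving the ratio $< 1/\rho$.

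The main obstacle is the second paragraph: ruling out cases (a), (b), and (c) of the $(k+2)$-th query simultaneously relies on keeping the adversarial invariant that every other agent's domain contains $t_1$ along $P$, together with a bootstrap observation that for $k < d - 3$ no agent can be fully pinned on $P$ (since full pinning of any agent would itself demand the same $(k+2)$-th query conditions, propagating the obstruction to every other agent). Once these are secured, the compatible bad profile follows essentially by construction.
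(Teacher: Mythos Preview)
Your approach of following the all-no path and constructing an explicit bad profile is a reasonable alternative to the paper's contradiction argument, but the core structural claim has a gap. You assert that ``since $n \geq d$, some agent's current max always exceeds $e$'s,'' and hence that Lemma~\ref{lem:tree_prop} blocks consecutive top queries to $e$ whenever $|D_e(v)| \geq 3$. This is false: when $e$ happens to be the \emph{last} agent queried about a given type $t_{x+1}$ along $P$, every other agent has already discarded $t_{x+1}$, so immediately after $e$'s ``no'' all agents have maximum $\leq t_x$ and the hypothesis $\max D_{e''}(v) > t_x$ of Lemma~\ref{lem:tree_prop} fails. At that moment the lemma does not forbid a second consecutive top query to $e$, and the mechanism can legitimately compress those two serial queries into one. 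Consequently your claim that ``the sole possible compression is the final pair'' is unsupported, and the conclusion that \emph{every} agent retains $\{t_1, t_2\}$ at the leaf does not follow. Your ``bootstrap'' remark does not rescue this: an agent who is last at several levels can be fully pinned in roughly $(d-1)/2$ compressed queries, so pinning per se need not trigger the $(k+2)$-th query restrictions.

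The paper closes exactly this hole with a counting argument rather than a blanket claim. Along the all-no path there are $d-2$ levels $t_d, \ldots, t_3$, and at each level only one agent (the last one queried about that type) can receive a consecutive pair. Hence at most $d-2$ distinct agents ever enjoy any compression. Since $n \geq d$, at least two agents have all their queries isolated; each of these needs $d-2$ separate compressed queries merely to reach domain $\{t_1, t_2\}$, and the paper argues (using the assumed approximation guarantee) that at least one of them requires a further query to separate $t_1$ from $t_2$, for a total of $d-1 > k+2$ compressed queries. Your direct-construction framing could be repaired by importing this counting step to show that at least two agents (one of whom is not the leaf's winner) retain $\{t_1, t_2\}$, after which your bad profile works; but as written the argument relies on a universality claim that the mechanism can defeat for the ``last'' agent at each level.
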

\begin{proof}
 Let us consider the single-item auction setting with $D$ as described above.
 Any implementation of the greedy algorithm for this problem works by defining an ordering $\pi_x$ over agents for every $x \in [d]$, and making top queries about type $t_x$ according to the ordering $\pi_x$, until either the first agent is found that produces a yes answers (and it is then assigned the item) or the domain of some agent is revealable (and thus we can make a strongly only-maximum effective revelation query for that agent). Suppose that there exists one such implementation that is $k$-limitable and always returns an $\frac{1}{\rho}$-approximation of the optimal feasible solution.

 Let us focus on the path of the implementation tree of this mechanism compatible with the type of all agents being $t_1$.

 We first show that the domain of an agent $e$ cannot become revealable until the domain of remaining agents contains other elements except at most $t_1$ and $t_2$. Indeed, when the type of $e$ is $t_2$ we can still be in a profile in which $e$, by the approximation guarantee of the mechanism, must be assigned the item (i.e., when all remaining elements have type $t_1$), and in a profile in which $e$, by the approximation guarantee of the mechanism, must not be assigned the item (i.e., when there is at least one alternative agent whose type is $t_3$).
 Hence, in order for the domain of agent $e$ to be revealable we need that each remaining agent received at least $d - 2$ top queries. Note also that when the domain of an agent has size $2$, this is trivially revealable.

 Moreover, it must be the case that for each agent $e$ whose domain is $\{t_1, t_2\}$, we need to make another query unless the domain of all remaining agents is $\{t_1\}$. Indeed, if this query is not done, then the outcome received by $e$ is the same when her type is $t_1$ and the type of $e'$ is $t_2$, and when her type is $t_2$ and the type of $e'$ is $t_1$, where $e'$ is one of the remaining agents whose domain contains at least $t_1$ and $t_2$.  But this clearly lead to an approximation worse than $\frac{1}{\rho}$.

 Hence, if we show that there are at least two agents different from $e$ such that no two consecutive queries can be done to these agents, then the needed $d-2$ top queries cannot be compressed, and for at least one of these two agents, an additional query is needed. This agent then receives $d-1$ queries. Since the mechanism is $k$-limitable, then $d-1 \leq k +2$, that contradicts our hypothesis.

 However, the existence of these two agents is guaranteed by Lemma~\ref{lem:tree_prop} and our choice of $n$. Indeed, by Lemma~\ref{lem:tree_prop}, each agent that receives a top query about type $t_x$ cannot receives a top query about type $t_{x-1}$ unless all remaining agents discarded type $t_x$ from their domain. Hence, only the last agent that receiving the top query about $t_d, \ldots, t_3$ can immediately receive another query. Since there are at most $d-2$ such agents, and $n-1 \geq d$, the claim follows.
\end{proof}

We are now ready to prove that Theorem~\ref{thm:matroids} is tight.

\begin{theorem}
	For every $\rho > 0$, every $d=|D| \geq 5$, and every $k < \frac{d}{2}-2$, there is a $p$-system $(E, \mathcal{F})$ with $|E| = n$, such that no $k$-limitable two-way greedy algorithm returns a feasible solution of total weight not larger than a $\frac{1}{\rho}$ fraction of the total weight of the optimal feasible solution for this problem.
\end{theorem}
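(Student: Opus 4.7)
The plan is to extend the proof strategy behind Corollary~\ref{cor:greedy_fail} from greedy-only mechanisms to the full class of two-way greedy ones via a careful accounting of the burst lengths dictated by Lemma~\ref{lem:tree_prop}. Fix the single-item auction with $n \geq d$ agents, common domain $D = \{t_1, \dots, t_d\}$ satisfying $t_{j+1} > \rho t_j$, and lowest-index tie-breaking; this is a $1$-system, hence a $p$-system. Assume for contradiction that some $k$-step OSP mechanism with $k < d/2 - 2$ achieves $1/\rho$-approximation: by Theorem~\ref{thm:kgreedy} this yields a $k$-limitable two-way greedy implementation tree $\mathcal{T}$. First I would consider the path $P$ in $\mathcal{T}$ along which every queried agent always answers ``no'' to the proposed extreme. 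Along $P$, each bottom (resp.\ top) query to $e$ strictly increases $\min D_e$ (resp.\ strictly decreases $\max D_e$) by one step, so the leaf $\ell$ of $P$ satisfies $D_e(\ell) = \{t_{b_e+1}, \dots, t_{d-t_e}\}$, where $b_e$ and $t_e$ are $e$'s bottom and top query counts. Since the allocated agent $e^*$ at $\ell$ must deliver $1/\rho$-approximation for every profile consistent with $D(\ell)$, the geometric spacing forces $\min D_{e^*}(\ell) \geq \max_{e \neq e^*} \max D_e(\ell)$, equivalently $b_{e^*} + t_e \geq d - 1$ for every $e \neq e^*$.

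Next I would combine two structural restrictions on $\mathcal{T}$ to bound, for each agent, the physical progress $b_e + t_e$ achievable within the $k+2$ compressed queries allowed by $k$-limitability. The first restriction is that the $(k+2)$-th compressed query to any agent is either (strongly) ineffective or only-extreme effective, so it can distinguish at most one additional type on that agent's side. The second restriction comes from Lemma~\ref{lem:tree_prop} together with its top-query half: within a burst of length $\ell$ on $e$ starting from $\min D_e = t_y$, at each internal step $s \in \{2, \dots, \ell\}$ either (a) some other agent $e'$ must already satisfy $\min D_{e'} \geq t_{y+s-1}$, or (b) every other agent must satisfy $\max D_{e''} \leq t_{y+s}$. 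Crucially, at the root of $P$ no agent has yet been touched, so neither (a) nor (b) holds and the very first burst of $P$ has length exactly $1$; every subsequent long burst on $e$ must be ``paid for'' by prior queries on other agents that push them into regime (a) or (b).

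Putting this together yields the desired contradiction. Because each compressed query can amortize on average at most two units of physical progress on its agent (one ``free'' first query, plus one extension powered by a matched prior query on another agent, with the $(k+2)$-th compressed query contributing only to one-extreme resolution), every agent accrues at most $\sim 2(k+2)$ physical queries on $P$. Combined with the allocation constraint $b_{e^*} + t_e \geq d - 1$ (tightest when $b_{e^*}$ and some $t_e$ are each roughly $(d-1)/2$), this yields $2(k+2) \geq d$, i.e.\ $k \geq d/2 - 2$, contradicting $k < d/2 - 2$; this matches exactly the upper bound of Algorithm~\ref{algo_greedy}. The main obstacle is making the ``two units per compressed query'' accounting rigorous when the mechanism intricately mixes bottom-type queries on some agents with top-type queries on others, and when the revealability escape clause of two-way greedy is invoked to interleave both on a single agent. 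I would handle this via case analysis on the regime that attains the allocation constraint --- the pure reverse-greedy regime on $e^*$ (length-$2$ bursts enabled by Lemma~\ref{lem:tree_prop}(a)), the pure forward-greedy regime on the $e \neq e^*$ (analogous to Corollary~\ref{cor:greedy_fail} but tightened using the symmetric length-$2$ bound for top-bursts), and mixed regimes handled by a potential/charging argument that pairs each extra burst step on $e$ with the query on another agent that enabled it.
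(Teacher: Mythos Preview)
Your overall target and the path-based setup are sound, and your allocation constraint $b_{e^*}+t_e\ge d-1$ is correct. However, the heart of your argument --- the ``two units per compressed query'' amortization across mixed bottom/top regimes --- is precisely where you acknowledge the main obstacle lies, and what you sketch there is not yet a proof. The difficulty is real: when one agent $e$ is being processed by bottom queries while another $e'$ is being processed by top queries, the enabling conditions (a) and (b) of Lemma~\ref{lem:tree_prop} interact in a way that a simple pairing/charging scheme does not obviously control, and your proposal to ``handle this via case analysis on the regime'' and a ``potential/charging argument'' is a plan rather than an argument.

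The paper sidesteps this obstacle entirely by a short preliminary reduction that you are missing. Before any counting, it shows that the mechanism cannot mix first-query directions across agents: if some agent $e$ receives a bottom query first and another agent $e'$ receives a top query first, then Lemma~\ref{lem:tree_prop} (applied once on each side) caps how far each can progress before the other has moved, and one can exhibit a profile (e.g., all other agents at $t_1$) on which the $1/\rho$-approximation already fails. Combined with Corollary~\ref{cor:greedy_fail} (which rules out the all-top-first case), this forces every agent to start with a bottom query. Once you are in this pure reverse-greedy regime, the path where all types are $t_d$ is the right one to follow; revealability cannot kick in until all other agents' domains are inside $\{t_{d-1},t_d\}$, Lemma~\ref{lem:tree_prop} limits each agent to bursts of length at most two, and the straight count gives at least $\frac{d}{2}$ compressed queries for some agent, yielding $k\ge \frac{d}{2}-2$. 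So the fix is not to make your amortization rigorous but to eliminate the mixed regime up front; after that, no charging argument is needed.
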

\begin{proof}
Let us consider the single-item auction setting with $D$ as described above.
Suppose that a two-way greedy algorithm exists that always returns an $\frac{1}{\rho}$-approximation of the optimum. By Corollary~\ref{cor:greedy_fail}, the algorithm cannot be the implementation of a greedy algorithm. Moreover, as showed above, if all agents receive as first query a top query, then the domain of these agents becomes revealable only if the domain has size two for all agents (except at most one). That is, any two-way greedy algorithm that makes a top query as first query to each agent must be an implementation of a greedy algorithm.

Hence, either the $k$-limitable two-way greedy algorithms starts with a bottom query for all agents, or there is an agent $e$ for which the first query is a bottom query and another agent $e'$ for which the first query is a top query. By Lemma~\ref{lem:tree_prop}, it follows that we cannot make a bottom query to $e$ about type $t_3$, and we cannot make a top query to $e'$ about type $t_4$. Hence, for each action of agents different from $e$ and $e'$, $e$ must receive the same outcome for each type in $\{t_3, \ldots, t_d\}$ and $e'$ must receive the same outcome for each type in $\{t_1, \ldots, t_4\}$: it is immediate to chech that there is a choice of action of other players (e.g., they all have type $t_1$), for which this constraint leads to an approximation worse than $\frac{1}{\rho}$.

Thus, the $k$-limitable two-way greedy algorithm must start with a bottom query for all agents.
Let us focus on the path of the implementation tree of this mechanism compatible with the type of all agents being $t_d$.

 We first show that the domain of an agent $e$ cannot become revealable until the domain of remaining agents contains other elements except at most $t_{d-1}$ and $t_d$. Indeed, when the type of $e$ is $t_{d-1}$ we can still be in a profile in which $e$, by the approximation guarantee of the mechanism, must not be assigned the item (i.e., when all remaining elements have type $t_1$), and in a profile in which $e$, by the approximation guarantee of the mechanism, must not be assigned the item (i.e., when there is at least one alternative agent whose type is $t_{d-2}$).
 Hence, in order for the domain of agent $e$ to be revealable we need that each remaining agent received at least $d - 2$ bottom queries. Note also that when the domain of an agent has size $2$, this is trivially revealable.

 Moreover, it must be the case that for each agent $e$ whose domain is $\{t_{d-1}, t_d\}$, we need to make another query unless the domain of all remaining agents is $\{t_d\}$. Indeed, if this query is not done, then the outcome received by $e$ is the same when her type is $t_d$ and the type of $e'$ is $t_{d-1}$, and when her type is $t_{d-1}$ and the type of $e'$ is $t_d$, where $e'$ is one of the remaining agents whose domain contains at least $t_{d-1}$ and $t_{d}$. But this clearly lead to an approximation worse than $\frac{1}{\rho}$.

 Hence, by Lemma~\ref{lem:tree_prop}, for each agent different from $e$ we cannot make more than two consecutive queries. Then we cannot compress the needed $d-2$ bottom queries in less than $\frac{d}{2}-1$ compressed queries, and for at least one agent, an adjunctive query is needed. This agent then receives $\frac{d}{2}$ queries. Since the mechanism is $k$-limitable, then $\frac{d}{2} \leq k +2$, that contradicts our hypothesis.
\end{proof}

\section{Conclusions}
In this work, we have studied the algorithmic robustness of OSP to agents that are not able to perform contingent reasoning and think about their future actions. Specifically, we introduce a novel notion, termed $k$-step OSP, that smoothens the notions of OSP (where  absence of contingent reasoning is the only cognitive limitation) and SOSP (where in addition agents are unable to think about any of their future actions) by maintaining the assumption that agents are not able to think contingently but allowing them a foresight of $k$ self moves ahead. We provide an algorithmic characterization of these mechanisms for single-dimensional agents and binary outcomes, via the introduction of a new cycle monotonicity toolkit. We apply our characterization to downward-close maximization problems and prove that the performance of OSP can deteriorate when $k$ is small in comparison to the type space of the agents. En route, we prove that reverse greedy algorithms are more robust than greedy algorithms to this worsening of the performances.

A natural open problems left by this work is to understand the extent to which the findings above hold for non-binary allocation problems, and quantify the limitations of limited planning horizons for other optimization problems, such as scheduling related machines that is now fully understood for OSP \cite{ec2023}. We highlight that our framework can also be adopted for non-binary allocation problems to prove that the query from the $(k+2)$-th onward must be limited in some way (specifically, \eqref{eq:option1} must hold), but this does not necessarily lead to mechanisms with only $k+2$ queries (as highlighted by the example in Appendix~\ref{apx:num_query}) since the Taxation Principle for $k$-step OSP would be less clean as the relative weight of types and outcomes would come to the fore. 


\bibliographystyle{plain}
\bibliography{ospb}
\newpage
\appendix

\section{On the Definition of $k$-step Simple Mechanisms}
\label{apx:pycia}
Pycia and Troyan \cite{pycia2021theory} defined the concept of \emph{$k$-step simple mechanism}, that resembles our definition of $k$-step OSP mechanisms. We here further discuss about similarities and differences among these concepts.

Let us start by describing the concept of $k$-step simple mechanism.
To this aim, we need to define the concept of \emph{strategic collections} that consist in a planning about the action to play at the current node, along to the ones to play at a certain number of nodes following it. In playing a mechanism, the agent defines a strategic collection for each node in which she interacts with the mechanism. However, the agent is allowed to choose strategic collections that are not coherent with each other: e.g., the strategic collections selected for node $u$ may state that the agent play a certain action at the current node and plans that at the following node $u'$ she will play action $a$; however, the strategic collection at node $u'$ may state that instead at the current node a different action $a'$ is instead played. In other words, in \cite{pycia2021theory} the plan done at previous node does not constrain the action taken at the following nodes, while in our definition we will only consider ``coherent'' strategic classification, in which what has been planned is actually played. (See also Example \ref{example:auctions}.)

Note that the definition of Pycia and Troyan \cite{pycia2021theory} leads to a definition of OSP ($k$-step simple mechanisms for $k=\infty$) that is slightly different from the original one by Li \cite{liosp}, since it is based on strategic collections in place of strategies: indeed, even if for an $\infty$-step simple mechanism the dominant strategic collection is coherent, and hence corresponds to a strategy, this strategy is required to be better even with respect to non-coherent strategic collections, that do not correspond to strategies (while in Li \cite{liosp}, the dominant strategy is compared only against strategies).

Moreover, the definition of $k$-step simple mechanisms also fails to capture a gradual transition from $OSP$ to $SOSP$. This is already observed by Pycia and Troyan \cite{pycia2021theory} about binary allocation problems, for which they showed that $1$-step simplicity is equivalent to OSP. Actually, we observe that this actually occurs for each problem. Indeed, since each future action is not constrained, then it must be the case that the each deviation possible in an OSP mechanism is also possible in a $k$-simple mechanism for each $k > 0$: in other words, if we try to compute an equivalent of the OSP graph \cite{esa19} for this mechanism, we end up with exactly the same graph.

\section{An SOSP Mechanism with More than Two Queries}
\label{apx:num_query}
Consider two agents with domain $D_1 = D_2 = D = \{1, 2, 3, 4\}$.
We would like to implement the social choice function $f = (f_1, f_2)$ defined as follows:
$$
 f_1(t_1, t_2) = \begin{cases}
              4-t_1, & \text{if } t_1 \geq t_2;\\
              4-t_2, & \text{otherwise}.
             \end{cases}
$$
and
$$
 f_2(t_1, t_2) = \begin{cases}
              4-t_2, & \text{if } t_2 > t_1;\\
              0,     & \text{if } t_2 \leq t_1 \text{ and } t_1 = 4;\\
              4-t_1-1, & \text{otherwise}.
             \end{cases}
$$
Let us consider a mechanism $\M$ implemented as follows:
\begin{itemize}
 \item first ask to agent $1$ if her type is $4$, and if yes, return outcome $(0, 0)$, then ask to agent $2$ if her type is $4$, and if yes, return outcome $(0, 0)$;
 \item for $i = 2, 3$, first ask to agent $1$ if her type is $5-i$, and if yes, return outcome $(i-1, i-2)$, then ask to agent $2$ if her type is $5-i$, and if yes, return outcome $(i-1, i-1)$;
 \item return outcome $(3, 2)$.
\end{itemize}
It is immediate to check that $\M$ implements the social function $f$ and each agent receives $3$ queries that cannot be compressed. We next show that this mechanism can be augmented with payment making it SOSP.
\begin{lemma}
 There are payments for which the mechanism $\M$ is SOSP.
\end{lemma}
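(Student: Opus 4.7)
The plan is to explicitly exhibit payment functions $p_1, p_2$ on the seven leaves $\ell_1,\ldots,\ell_7$ of $\T$ and verify the $0$-step OSP constraints node by node. Specialising Lemma~\ref{lem:def} to $k=0$, the class $\Gamma^0_u(\a)$ consists of the profiles reaching the same child of $u$ as $\a$, so SOSP is equivalent to the following: at every internal node $u$ with $i=i(u)$, for every pair $\a,\b$ separated at $u$, and for every type $c_i$ in the subdomain of the child of $u$ containing $\a$,
\[
p_i(\b) - p_i(\a) \;\leq\; c_i\bigl(f_i(\b) - f_i(\a)\bigr).
\]
Geometrically, the worst value of $p_i - c_i f_i$ on the truthful side of $u$ must weakly exceed the best value on every diverging side.

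The tree has six binary internal nodes: $u_0, u_2, u_4$ ask agent~$1$ whether her type equals $4, 3, 2$ respectively, and $u_1, u_3, u_5$ ask the same of agent~$2$. I would propose the payments
\[
\bigl(p_1(\ell_k)\bigr)_{k=1}^{7} = (0,\,0,\,4,\,4,\,7,\,7,\,9),
\qquad
\bigl(p_2(\ell_k)\bigr)_{k=1}^{7} = (0,\,-3,\,-3,\,0,\,0,\,3,\,3),
\]
which mimic a Myerson-style gap formula: at each query the ``just truthful'' critical type is made exactly indifferent between the truthful and the deviating subtrees.

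The verification is a finite case analysis. At $u_0$, combining the upper bound for $c_1=4$, namely $p_1(\ell_k)\leq 4\, f_1(\ell_k)$ for $k=2,\ldots,7$, with the lower bound for $c_1\in\{1,2,3\}$, namely $p_1(\ell_k)\geq c_1\, f_1(\ell_k)$, reduces to $3\, f_1(\ell_k)\leq p_1(\ell_k)\leq 4\, f_1(\ell_k)$ with the $f_1$-values $(0,0,1,1,2,2,3)$; at $u_2$ and $u_4$ the constraints collapse to the equalities $p_1(\ell_3)=p_1(\ell_4)$, $p_1(\ell_5)=p_1(\ell_6)$, and two linking inequalities across layers, all met by the proposed schedule. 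Agent~$2$ is treated analogously, with her $f_2$-values $(0,0,0,1,1,2,2)$ and the $-1$ correction in $f_2$ (active when $t_2\leq t_1 \neq 4$) absorbed by the offset of $3$ between $p_2(\ell_2)=p_2(\ell_3)=-3$ and $p_2(\ell_6)=p_2(\ell_7)=3$.

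The main technical obstacle is that the constraints at different nodes couple through the shared leaves: fixing $p_1(\ell_3)=4$ at its critical value from $u_0$ forces $p_1(\ell_5)\in[6,7]$ via $u_2$, and the $u_4$ constraints then pin $p_1(\ell_5)=7$ and $p_1(\ell_7)\in\{9,10\}$; the tightest simultaneous solution is precisely the schedule above. Once this coupling is unwound, all that remains is the routine substitution of the schedule into the linear inequalities---one per (node, deviation target, type) triple---each of which holds by direct check.
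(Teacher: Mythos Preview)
Your proposal is correct and follows essentially the same approach as the paper: exhibit explicit payments and verify the $0$-step OSP constraints at each of the six internal nodes by a finite case check. Your schedule for agent~$1$ coincides with the paper's, which simply pays $0,4,7,9$ for outcomes $0,1,2,3$ respectively; your schedule for agent~$2$ differs from the paper's $(0,0,0,4,4,7,7)$ but is also feasible (and indeed makes every critical constraint tight), so the only substantive difference is presentational---you sketch the inequality system and its resolution, whereas the paper walks through each type and query explicitly.
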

\begin{proof}
 Suppose that agent $i$ receives payment $0$ whenever her outcome is $0$, payment $4$ for outcome $1$, payment $7$ for outcome $2$, and payment $9$ for outcome $3$.

 Consider first agent $1$. Suppose that her type is $1$. At the first query, if she deviates and answers yes she receives outcome $0$ and payment $0$, corresponding to utility $0$. If she is truthful then either receives outcome $0$ and payment $0$, corresponding to utility $0$, or outcome $1$ and payment $4$ corresponding to utility $3 > 0$, or outcome $2$ and payment $7$, corresponding to utility $5 > 0$, or outcome $3$ and payment $9$, corresponding to utility $6 > 0$. Hence, the player has no incentive to deviate. At the second query, deviating and answering yes, guarantees utility $3$, whereas being truthful will guarantee utility $3$ or $5$ or $6$, and thus the agent has no incentive to deviate. At the third query, deviating and answering yes, guarantees utility $5$, whereas being truthful will guarantee utility $5$ or $6$, and thus the agent has no incentive to deviate.

 Suppose now that agent $1$ has type $2$. At first query, deviating gives utility $0$, while being truthful guarantee utility either $0$, or $4-2$, or $7-4$, or $9-6$. At the second query, deviating gives utility $4-2$, while being truthful guarantee utility either $4-2$, or $7-4$, or $9-6$. At the third query, deviating by answering no, gives utility either $7-4$ or $9-6$, but being truthful guarantees the same utility $7-4 = 3$.

 Suppose now that agent $1$ has type $3$. At first query, deviating gives utility $0$, while being truthful guarantee utility either $0$, or $4-3$, or $7-6$, or $9-9$. At the second query, deviating by answering no, gives utility either $4-3 = 1$, or $7-6 = 1$ or $9-9 = 0$, but being truthful guarantees an at least as good utility $4-3 = 1$.

 Finally, if agent $1$ has type $4$. At the first query, deviating by answering no, gives utility either $0$ or $4-4 = 0$, or $7-8 = -1$ or $9-12 = -3$, but being truthful guarantees an at least as good utility of $0$.

 The case of agent $2$ is symmetric and omitted.
\end{proof}

\end{document}